\def\imod#1{\allowbreak\mkern10mu({\operator@font mod}\,\,#1)}
\newcommand{\fotwo}{\rm{FO}^2}
\newcommand{\kw}[1]{{\mathsf{#1}}\xspace}
\newcommand{\parentof}{\kw{ParentOf}}
\newcommand{\ancesof}{\kw{AncOf}}
\newcommand{\immleftsib}{\kw{LeftSibOf}}
\newcommand{\leftsib}{\kw{LeftOf}}
\newcommand{\ancof}{\ancesof}
\newcommand{\descof}{\kw{DescOf}}
\newcommand{\incomp}{\kw{InComp}}
\newcommand{\childof}{\kw{ChildOf}}
\newcommand{\nextsib}{\kw{NextSib}}
\newcommand{\vfull}{V_{\textit{full}}}
\newcommand{\vnodesc}{V_{\textit{noAncOf}}}
\newcommand{\vdesc}{V_{\textit{ancOf}}}
\newcommand{\vnochild}{V_{\textit{noParOf}}}
\newcommand{\vchild}{V_{\textit{parOf}}}
\newcommand{\myparagraph}[1]{{\bf #1.}}
\newcommand{\tl}{TL_{\textit{tree}}}
\newcommand{\utl}{UTL_{\textit{tree}}}
\newcommand{\navxp}{\kw{NavXP}}
\newcommand{\dsfnavxp}{\kw{DownSF\mbox{-}NavXP}}
\newcommand{\myeat}[1]{}
\newcommand{\exptime}{\kw{EXPTIME}}
\newcommand{\nexptime}{\kw{NEXPTIME}}
\newcommand{\nexp}{\nexptime}
\newcommand{\expspace}{\kw{EXPSPACE}}
\newcommand{\twoexp}{\kw{2EXPTIME}}
\newcommand{\pspace}{\kw{PSPACE}}
\newcommand{\up}{\kw{Update}}
\newcommand{\ch}{\kw{CH}}
\newcommand{\ns}{\kw{NS}}
\newcommand{\childax}{\kw{child}}
\newcommand{\selfax}{\kw{self}}
\newcommand{\descax}{\kw{descendant}}
\newcommand{\dosax}{\kw{descendant\mbox{-}or\mbox{-}self}}
\newcommand{\ancosax}{\kw{ancestor\mbox{-}or\mbox{-}self}}
\newcommand{\nextsibax}{\kw{next\mbox{-}sibling}}
\newcommand{\folsibax}{\kw{following\mbox{-}sibling}}
\newcommand{\prevsibax}{\kw{previous\mbox{-}sibling}}
\newcommand{\precsibax}{\kw{preceding\mbox{-}sibling}}
\newcommand{\sembrack}[1]{[\![#1]\!] }
\newcommand{\tp}{\kw{Tp}}
\newcommand{\desctype}{\kw{DescTypes}}
\newcommand{\sdesctype}{\kw{SelectedDescTypes}}
\newcommand{\subtree}{\kw{SubTree}}
\newcommand{\comb}{\kw{Val}}
\newcommand{\contexttype}{\kw{IncompTypes}}
\newcommand{\equivfull}{\equiv_{\kw{Full}}}
\newcommand{\anctype}{\kw{AncTypes}}
\newcommand{\bottomn}{\text{Low}}
\newcommand{\topn}{\text{High}}
\newcommand{\zerox}{\kw{ZeroX}}
\newcommand{\onex}{\kw{OneX}}
\newcommand{\zeroy}{\kw{ZeroY}}
\newcommand{\oney}{\kw{OneY}}
\def\@Opargbegintheorem#1#2#3#4{#4\trivlist
      \item[\hskip\labelsep{#3#1}]{#3#2\@thmcounterend\ }}
\newcommand{\definerep}[2]{%
\spnewtheorem*{#1rp}{#2}{\bf}{\itshape}
\newenvironment{#1rep}[2]{%
  \ifthenelse{\equal{##1}{*}}
  {\begin{#1rp}[\ref{##2}]}
  {\begin{#1}\label{##2}}}
{\ifthenelse{\equal{\@currenvir}{#1}}{\end{#1}}{\end{#1rp}}}
}
\def\@spthm#1#2#3#4{\topsep 3\p@ \@plus1\p@ \@minus1\p@
\refstepcounter{#1}%
\@ifnextchar[{\@spythm{#1}{#2}{#3}{#4}}{\@spxthm{#1}{#2}{#3}{#4}}}
\def\@Thm#1#2#3{\topsep 3\p@ \@plus1\p@ \@minus1\p@
\@ifnextchar[{\@Ythm{#1}{#2}{#3}}{\@Xthm{#1}{#2}{#3}}}
\renewcommand\paragraph{\@startsection{paragraph}{4}{\z@}%
                       {-4\p@ \@plus -2\p@ \@minus -2\p@}%
                       {-0.2em \@plus -0.1em \@minus -0.1em}%
                       {\normalfont\normalsize\itshape}}
\renewcommand\section{\@startsection{section}{1}{\z@}%
                       {-8\p@ \@plus -2\p@ \@minus -2\p@}%
                       {4\p@ \@plus 2\p@ \@minus 2\p@}%
                       {\normalfont\large\bfseries\boldmath
                        \rightskip=\z@ \@plus 8em\pretolerance=10000 }}
\renewcommand\subsection{\@startsection{subsection}{2}{\z@}%
                       {-8\p@ \@plus -2\p@ \@minus -2\p@}%
                       {4\p@ \@plus 2\p@ \@minus 2\p@}%
                       {\normalfont\normalsize\bfseries\boldmath
                        \rightskip=\z@ \@plus 8em\pretolerance=10000 }}
\title{Controlling the Depth, Size, and Number of Subtrees for Two-variable Logic on Trees}
\author{Saguy Benaim, Michael Benedikt,  Rastislav Lenhardt, and James Worrell}
\institute{Department of Computer Science, University of Oxford, UK }
\begin{document}
\maketitle
\begin{abstract}

Verification of properties of first order logic with two variables $\fotwo$ has been investigated
in a number of contexts.
Over
arbitrary structures it is known to be decidable with $\nexptime$ complexity, with finitely satisfiable
formulas having  exponential-sized models. 
Over word structures, where $\fotwo$  is known to have the same expressiveness
as unary temporal logic, the same properties hold.
Over finite labelled ordered trees $\fotwo$ is also of interest: it is known to have the same expressiveness
as navigational XPath, a common query language for XML documents. Prior work on
XPath and $\fotwo$ gives a $\twoexp$ bound for satisfiability of $\fotwo$.
In this work we give the first in-depth look at the complexity
of $\fotwo$ on trees, and on the size and depth of models.  We show that the doubly-exponential bound is not tight, and neither do
the $\nexptime$-completeness results from the word case carry over: the exact complexity
varies depending on the vocabulary used, the presence  or absence of a~schema, and
the encoding used for labels. Our results depend on an analysis of subformula types
in models of $\fotwo$ formulas, including techniques for controlling the number of distinct subtrees,
the  depth, and the size of a witness to finite satisfiability for $\fotwo$ sentences over trees.

\end{abstract}

\section{Introduction} \label{sec:intro}
The complexity of verifying properties over a class of structures depends on both
the specification language for  properties and the class of structures.
Full first-order logic (FO) has non-elementary complexity even when
applied to very restricted structures -- e.g. words.
The two-variable fragment of FO, $\fotwo $, is known to have better properties. Satisfiability
over
 arbitrary relational vocabularies is decidable, and satisfiable sentences have exponential-sized models  \cite{kolaitisvardi}.
Over words witness models can also be taken to be exponential, and
 the satisfiability problem is known to be  $\nexptime$-complete, as it is over general structures \cite{fo2_utl}. The satisfiability
results over words extend to give bounds on many related verification problems
\cite{BLW2}.

The $\nexptime$-completeness of $\fotwo$ over both general structures and word structures raises the
question of the impact of structural restrictions on analysis problems for {$\fotwo$}. Surprisingly
the complexity of satisfiability for  {$\fotwo$} on a~class of structures
satisfying a very simple graph-theoretic restriction -- namely, finite trees --
has not been investigated in detail.
 {$\fotwo$} over trees is known to correspond precisely to the navigational core of the
XML query language XPath \cite{derijkemarx}, and  the satisfiability problem for XPath 
is known to be complete for $\exptime$; given that the translation from $\fotwo$ to XPath is known
to be exponential \cite{derijkemarx}, this gives a $\twoexp$ bound on satisfiability for {$\fotwo$} over trees.

In this work we will consider the satisfiability problem for $\fotwo$ over finite trees, and the corresponding
question of the size and depth needed for witness models.
In particular, we will consider:
\begin{compactitem}
\item satisfiability in the presence of all navigational predicates  --
predicates for the parent/child relation, its transitive closure the 
descendant relation, the left- and right- sibling relations and their transitive closures
\item the impact on the complexity of limiting sentences
to make use of predicates in
a particular subset.
\item satisfiability over general unranked trees, and satisfiability in the presence of a schema
\item satisfiability over trees where nodes labels are denoted with explicit  unary labels
versus the case where node labels are boolean combinations over a~propositional alphabet
\end{compactitem}

We will show that each of these variations impacts the complexity of the problem.
In the process, we will show that the tree case differs in a number of important
ways from that of words. First, the complexity of satisfiability no longer matches
that of $\fotwo$ on general structures -- it is $\expspace$-complete.
Secondly, the basic technique for analyzing $\fotwo$ on words 
\cite{fo2_utl}--
bounds on the number of quantifier-rank types that occur in a structure -- is not useful for getting tight bounds
on $\fotwo$ over trees. Instead we will use a combination of methods, including
reductions to XPath, bounds on the number of subformula-based types, and a quotient construction
that is based not only on types, but on a set of distinguished witness nodes.
These techniques allow us to distinguish situations where satisfiable $\fotwo$-formulas have  models
of (reasonably) small depth, and situations where they have models of small size. This allows
us to get a  full picture of the complexity of $\fotwo$ satisfiability problems
on trees.

\myparagraph{Related work} Two-variable logic on   \emph{data trees}  -- trees where nodes
are associated with values in an infinite set--
has been
studied by Bojanczyk et. al. \cite{datatrees1}: there the main result
is decidability over the signature with data equality and the child relation.
Figueira's manuscript \cite{twosuccs} considers two-variable logic with
the successor relations corresponding to two linear orders, which is quite different from considering the
two successor relations derived from a tree order. Kieronski et. al. show that two-variable logic over two transitive
relations is undecidable. The complexity of two-variable logic over ordinary trees 
is explicitly studied only in \cite{leashed}, where it is (incorrectly, as we show) stated
that the complexity of satisfiability remains in $\nexptime$ for full two-variable logic.


{\bf Organization:}
Section \ref{sec:prelim} gives preliminaries.
Section \ref{sec:full} gives precise bounds for  the satisfiability  of full $\fotwo$ on trees.
Section \ref{sec:nochild} considers the case where the child predicate is absent,
while Section \ref{sec:nodesc} considers the case where the descendant predicate is absent.
Section \ref{sec:conc} gives conclusions.

\section{Logics and Models} \label{sec:prelim}

We will always use the term ``tree'' to denote a
\emph{finite ordered labelled tree}, where the labels are sets
of unary predicates $P_1 \ldots P_n$. An ordered tree will consist
of a~finite set of nodes, a directed edge relation $\parentof$ between nodes such that
the underlying graph forms a~tree in the usual sense, a mapping
of each $P_i$ to a~subset of the nodes, and a sibling relation $\nextsib$ between nodes
that forms the successor relation of a linear order when restricted to
the set of children of a given node. We sometimes write
$m ~ \descof ~n$ to denote that 
node $m$ is a~descendant of node $n$ in a tree, and similarly
write $m ~ \childof ~ n$ to denote that $m$ is a child of $n$.
A tree satisfies the \emph{unary alphabet restriction} (UAR) if exactly one $P_i$ holds of each node;
in such a tree the labels are just predicates.
Given a tree $t$ and node $n$, $\subtree(t,n)$ denotes the subtree of $t$ rooted at $n$.

We consider first-order logic sentences in which every subformula has at most two variables,
allowing the equality predicate as well as relations
from the following signatures for trees:
\begin{compactitem}
\item for general ordered  trees, we consider by default a signature $\vfull$ containing
predicates for the node predicates $P_i$,  as well as for the $\parentof$ relation,  its transitive
closure $\ancof$,
 the  $\immleftsib$ relation that holds of $c$ and $d$ if 
$c$ is the immediate left sibling of $d$, and its transitive closure
$\leftsib$. 
\item we let $\vnodesc$ be the vocabulary obtained by removing the descendant relation,
$\vchild$ be the vocabulary obtained by removing all binary relations other than $\parentof$,
$\vnochild$ be the vocabulary obtained by removing the $\parentof$ relation, and
$\vdesc$ be the vocabulary obtained by removing all binary relations other than $\ancof$.
\end{compactitem}
We consider $k$-ranked trees as a particular class of unranked trees, and thus can ask whether
an $\fotwo$ sentence in any of the signatures above is true on a ranked tree.
Note that for $k$-ranked trees it is natural to consider signatures that include
the relation $\parentof_i$, connecting a node to its $i^{th}$ child for each $i \leq k$, either
in place of or in addition to the predicates above.
We will not consider a separate signature for ranked trees, since
it is easy to derive tight bounds
for ranked trees for such signatures based on the techniques introduced here.
Although we allow equality in our upper
bounds, it will not play any role in the lower bounds.

The signatures above used predicates for which the first argument
is either higher up in the tree than the second argument ($\parentof(c,d)$
means that $c$ is the parent of $d$) or to the left of the
second argument.
However, in first-order logic, as well as in two-variable first-order logic,
we can express the inverse
of any atomic relation  as a formula. Thus we
can  use  formulas $x ~ \descof  ~ y$, $x ~ \childof ~ y$, etc.  with the obvious
meaning (e.g.  $x ~ \descof  ~ y$ meaning $\ancof(y,x)$).

For any vocabulary $V$ above, we let $\fotwo(V)$ denote the fragment of first-order logic
consisting of formulas such that  every subformula uses at most two variables. When $V$ is omitted
it is assumed to be $\vfull$.

A \emph{ranked tree schema} consists of a  bottom-up tree automaton on trees of some
rank $k$
 \cite{thomashandbook3}.   A tree automaton takes 
trees labeled from a finite set $\Sigma$.
We will  thus identify the  symbols in $\Sigma$ with predicates $P_i$, and
thus all trees satisfying the schema will satisfy 
 the UAR. 

We consider the following problems:
\begin{compactitem}
\item Given an $\fotwo$ sentence $\varphi$ and a schema $S$, determine
whether $\varphi$ is satisfied by some tree satisfying $S$. We consider
the combined complexity in the formula and schema.
\item Given an $\fotwo$ sentence $\varphi$, determine if there is
some tree (resp. $k$-ranked, unary alphabet tree) that
satisfies it.
\end{compactitem}

Some of our results will go through XPath,  a common language used for querying XML documents viewed as trees.
The navigational core of XPath is a modal language, analogous
to unary temporal logic on trees,  denoted $\navxp$.
$\navxp$ is built on binary modalities, referred to as \emph{axis relations}. We will focus on the following axes:
$\selfax$, $\childax$, $\descax$, $\dosax$, $\ancosax$,  $\nextsibax$, $\folsibax$, $\precsibax$, $\prevsibax$.
In a tree $t$,  we associate each axis $a$ with a set $R^t_a$ of pairs of nodes.
 $R^t_{\childax}$  denotes the set of pairs of nodes $(x,y)$ in $t$ where $y$ is a child of $x$, and similarly
for the other axes
(see \cite{marx04}).

$\navxp$ consists of path expressions, which denote binary relations between nodes in a tree,
and filters, denoting unary relations.
Below we give the  syntax (from \cite{leashed}), using  $p$ to range over path expressions and $q$ over filters. $L$~ranges 
over symbols for each labelling of a node (i.e. for general trees, 
boolean combinations of predicates $P_1 \ldots P_n$, for UAR trees a single predicate).
\begin{align*}
p &::= step \;|\; p/p \;|\; p \cup p  &
step &::= axis \;|\; step[q] \\
q &::= p \;|\; lab()=L \;|\; q \wedge q \;|\; q \vee q \;|\; \neg q
\end{align*}
where axis relations are given above.

The semantics of $\navxp$ path expressions relative to a tree $t$ is given by:
\begin{inparaenum}
\item 
$\sembrack{\text{axis}} =  R^t_{\text{axis}}$ 
\item $\sembrack{\text{step}[q]} = \{ (n,n') \in \sembrack{\text{step}} \; : \;
n' \in \sembrack{q}     \}$
\item 
$\sembrack{p_1/p_2} = \{(n,n') \;:\; \exists w  (n,w) \in \sembrack{p_1} \wedge  (w,v) \in \sembrack{p_2}\}$
\item 
$\sembrack{p_1 \cup p_2} = \sembrack{p_1} \cup \sembrack{p_2}$.
\end{inparaenum}

For filters we have:
\begin{inparaenum}
\item 
$\sembrack{lab()=L} = \{n :  n\mbox{ has label }L\}$ 
\item 
$\sembrack{p} = \{n: \exists n' ~ (n,n') \in \sembrack{p} \}$
\item 
$\sembrack{q_1 \wedge q_2} = \sembrack{q_1} \cap \sembrack{q_2}$
\item 
$\sembrack{\neg q}(n) = \{n:  n \not \in \sembrack{q}\}$. 
\end{inparaenum}
A $\navxp$ filter is said to hold of a tree $t$ if it holds of the root under the above semantics.

Marx and De Rijke showed an expressive equivalence of $\navxp$ and $\fotwo$,   extending
the  translation to Unary Temporal Logic in the word case:


\begin{proposition} \label{prop:fo2tl} \cite{derijkemarx}
There is an exponential translation from $\fotwo$ to $\navxp$ with all axis and
from $\fotwo[\vdesc]$ to  $\navxp$ with only the descendant and ancestor axes.
\end{proposition}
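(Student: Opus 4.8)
The plan is to prove the equivalence constructively, by a translation that proceeds by structural induction on the $\fotwo$ formula. The central observation is that an $\fotwo$ formula has at most two variables, so any subformula has at most one free variable or exactly two. I would translate every formula $\varphi(x)$ with a single free variable into a $\navxp$ filter $q_\varphi$ so that for every tree $t$ and node $n$ we have $t,n \models \varphi$ iff $n \in \sembrack{q_\varphi}$; a closed sentence is then the filter obtained for its body, evaluated at the root (relativising the top-level quantifier by $\dosax$ so that it ranges over all nodes). The base cases, namely the unary predicates $P_i$ and the atom $x=x$, are immediate, and the Boolean connectives translate to the filter operators $\wedge,\vee,\neg$. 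All the real work is in the existential quantifier $\exists y\,\varphi(x,y)$, where the matrix $\varphi$ genuinely uses both variables and hence the binary predicates.

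To handle $\exists y\,\varphi(x,y)$ I would decompose over the \emph{relative position} of $y$ with respect to $x$. In an ordered tree any two nodes stand in exactly one of finitely many positional relationships: $y=x$; $y$ is the parent or a proper ancestor of $x$; $y$ is a child or a proper descendant of $x$; or $y$ is incomparable to $x$ and lies, in document order, before or after $x$, with the branching from their common ancestor possibly immediate. Writing $\bigvee_R \pi_R(x,y)$ for a mutually exclusive and exhaustive family of such position formulas, we get $\exists y\,\varphi \equiv \bigvee_R \exists y\,(\pi_R(x,y)\wedge\varphi(x,y))$. Once $R$ is fixed, every binary atom relating $x$ and $y$ (among $\ancof$, $\parentof$, $\leftsib$, $\immleftsib$ and their inverses) has a determined truth value, so under $\pi_R$ the matrix reduces to a Boolean combination whose only non-constant atoms are unary formulas of $x$ and unary formulas of $y$. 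Converting this to disjunctive normal form so that the quantifier distributes, each disjunct takes the shape $\alpha(x)\wedge\exists y\,(\pi_R(x,y)\wedge\beta(y))$ with $\alpha,\beta$ unary: the $x$-part $\alpha$ factors out as a filter (available by induction), and the residue $\exists y\,(\pi_R(x,y)\wedge\beta(y))$ is precisely the filter $[\,step_R[q_\beta]\,]$, where $step_R$ is the axis realising position $R$ and $q_\beta$ is the filter for $\beta$. For the full signature each $R$ is realised by a single axis ($\selfax$, $\parentax$, $\ancax$, $\childax$, $\descax$, $\nextsibax/\folsibax$, $\prevsibax/\precsibax$, $\follax$, $\precax$), so the induction closes; the exponential bound comes from the disjunctive-normal-form step compounding across the nesting of quantifiers.

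For the restricted claim about $\fotwo[\vdesc]$ and $\navxp$ with only the $\descax$/$\ancosax$ axes, most positions collapse, but the incomparable case is the crux and the step I expect to be the main obstacle. With only the ancestor relation in the vocabulary the logic can still assert $\exists y\,(\incomp(x,y)\wedge\beta(y))$, yet $\navxp$ path evaluation is \emph{forgetful}: once one navigates away from $x$ along a descendant or an ancestor step, the target filter can no longer refer back to $x$, so one cannot naively say ``a descendant of an ancestor of $x$ that avoids the subtree of $x$.'' The idea I would pursue is to rephrase incomparability as the existence of an ancestor-or-self $z$ of $x$ having a descendant that satisfies $\beta$ and lies off the path to $x$, encoding ``off the path'' via negation inside filters together with the information, propagated along the ancestor axis, about which descendants of $z$ dominate $x$. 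Making this avoidance condition precise without the child or sibling axes is where the construction is delicate, and it is also the principal source of the exponential blow-up, since the condition must be re-expanded at every ancestor and threaded through the nested existentials.
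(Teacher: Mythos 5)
First, a caveat: the paper does not prove this proposition at all --- it is imported wholesale from Marx and de Rijke --- so I am comparing your argument against the standard proof of the cited result rather than against anything in this paper. Your treatment of the full signature is essentially that standard argument (induction producing a filter for each one-free-variable subformula, a case split on the order type of the pair $(x,y)$, a DNF step that separates the matrix into an $x$-part and a $y$-part, and one navigation step per order type), and it is sound, with one small repair: the paper's $\navxp$ does not include $\follax$ or $\precax$ as primitive axes, so the two ``incomparable'' order types must be realised by compositions such as $\ancosax/\folsibax/\dosax$ and its mirror image rather than by a single axis. That is harmless, and your accounting of the exponential blow-up (a $2^{O(|\varphi|)}$ factor per quantifier level, compounded over at most $|\varphi|$ levels, giving $2^{\mathit{poly}(|\varphi|)}$ overall) is also fine.

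The genuine gap is in the second claim, exactly where you flag trouble. For $\fotwo[\vdesc]$ the order-type decomposition leaves the residue $\exists y\,(\incomp(x,y)\wedge\beta(y))$, and your proposed resolution --- find an ancestor-or-self $z$ of $x$ with a $\beta$-descendant ``off the path to $x$'', encoding ``off the path'' via ``information, propagated along the ancestor axis, about which descendants of $z$ dominate $x$'' --- is not something $\navxp$ can express as described. A filter inside a path expression is evaluated at the node the path has reached, with no access to the node where the whole expression started: once you step from $x$ up to $z$ and down to a candidate $y$, there is no mechanism for the filter at $y$ to assert ``I neither dominate nor am dominated by $x$'', and no way to thread $x$'s identity through the intermediate filters. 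This is precisely the point at which Marx and de Rijke need a further idea: a counting-free re-expression of ``some $\beta$-node is incomparable to $x$'' as a Boolean combination of purely vertical reachability tests (nested $\descax$/$\ancax$ assertions whose inner filters make no reference to $x$). Without supplying that equivalence --- and note that the obvious fallback of using sibling axes to say ``off the path'' is exactly what the restricted fragment forbids --- the translation of $\fotwo[\vdesc]$ into $\navxp$ with only the descendant and ancestor axes is not established.
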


Marx has shown that $\navxp$ has an exponential time satisfiability problem \cite{marx04}.
From this and the above proposition, we  get the following (implicit in \cite{derijkemarx}):

\begin{corollary} \label{naiveupper}
The satisfiability problem for $\fotwo$ is in $\twoexp$.
\end{corollary}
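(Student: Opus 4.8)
The plan is to obtain the bound by composing the two results just cited: the exponential translation from $\fotwo$ to $\navxp$ of Proposition~\ref{prop:fo2tl}, and Marx's $\exptime$ decision procedure for $\navxp$ satisfiability. Concretely, given an $\fotwo$ sentence $\varphi$ of size $n$, I would first apply Proposition~\ref{prop:fo2tl} to produce an equivalent $\navxp$ expression $q_\varphi$ of size $2^{O(n)}$. Here ``equivalent'' must be read at the level of trees: since $\varphi$ is a sentence it translates to a $\navxp$ \emph{filter}, and one checks that the translation is arranged so that $\varphi$ holds in a tree $t$ exactly when $q_\varphi$ holds of $t$ under the root semantics fixed above. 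Consequently $\varphi$ is satisfiable over finite trees if and only if $q_\varphi$ is satisfiable, so it suffices to decide the latter.

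For the second step I would invoke Marx's theorem \cite{marx04}: satisfiability of a $\navxp$ expression of size $m$ can be decided in time $2^{O(p(m))}$ for some polynomial $p$. Running this procedure on the input $q_\varphi$, whose size $m$ is $2^{O(n)}$, the running time becomes $2^{O(p(2^{O(n)}))}$. The key arithmetic observation is that a polynomial of an exponential is again exponential, $p(2^{O(n)}) = 2^{O(n)}$, so the overall running time is $2^{2^{O(n)}}$, i.e. doubly exponential in the size of $\varphi$. This places the problem in $\twoexp$ and completes the argument.

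Since this is a composition of two black boxes, there is no genuinely hard step; the only points requiring care are bookkeeping. First, I must confirm that Marx's $\exptime$ bound is stated as a bound in the size of the $\navxp$ formula (rather than in some other parameter), since only then does feeding in an exponentially large formula keep us at a single further exponential, and hence inside $\twoexp$. Second, I must ensure that the translation preserves \emph{satisfiability} and not merely truth on a single fixed structure — that is, that the class of trees used for $\fotwo$ and for $\navxp$ coincide (finite ordered labelled trees in both cases) and that the root-relativization correctly turns a sentence into an equivalent filter. Both are immediate from the statement of Proposition~\ref{prop:fo2tl} and the semantics fixed above, so the corollary follows.
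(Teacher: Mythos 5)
Your argument is exactly the paper's: compose the exponential translation of Proposition~\ref{prop:fo2tl} with Marx's $\exptime$ decision procedure for $\navxp$, and observe that an exponential-time procedure applied to an exponentially larger input yields a doubly exponential bound. The bookkeeping points you flag (the bound being in formula size, and equisatisfiability over the same class of trees) are handled correctly, so the proof is fine.
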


\section{Satisfiability for full $\fotwo$} \label{sec:full}

\myparagraph{Subformula types and exponential depth bounds}
In the analysis of satisfiability of $\fotwo$ for words of Etessami, Vardi,
and Wilke \cite{fo2_utl}, a $\nexp$ bound is achieved
by showing that any sentence with a finite model has a model of at most exponential size.
The small model property follows, roughly speaking, from
the fact that  \emph{any} model realizes only exponentially many ``quantifier-rank types'' 
--  maximal consistent sets of formulas of a given quantifier rank -- and  the fact that two nodes with the
same quantifier-rank type can be identified. 

In the case of trees, this approach breaks down in several places.
It is easy to see that one cannot always obtain an exponential-sized model, since
a sentence can enforce binary branching and exponential depth.
Because there are doubly-exponentially many non-isomorphic small-depth subtrees,
there can be doubly-exponentially  many quantifier-rank types realized 
even along a single path in a tree: so quantifier-rank types can not be used even to show an exponential depth bound.
We thus use \emph{subformula types} 
of a given $\fotwo$-formula $\varphi$ (for short, $\varphi$-types) -- these
are  maximal consistent collections of one-variable subformulas of $\varphi$. 
The $\varphi$-type of a node $n$ in a tree, $\tp_\varphi(n)$,  is defined
as the set of subformulas of $\varphi$ it satisfies.
The number
of $\varphi$-types  is only
exponential in $|\varphi|$, but  subformula types are more delicate than quantifier-rank types.
E.g. 
nodes with the same $\varphi$-type cannot always be identified without
changing the truth of $\varphi$. 
Most of the upper bounds  will be concerned
with handling this issue, by adding additional conditions on nodes to be identified,
and/or preserving additional parts of the tree.

\myparagraph{Upper bounds for $\fotwo$} We
exhibit the issues arising and techniques used to solve
them  by giving an upper bound for the full logic, $\fotwo$, which improves
on the $\twoexp$ bound one obtains via translation to modal logic. 

\begin{theorem}
\label{fullup}
The satisfiability problem for $\fotwo$ is in $\expspace$.
\end{theorem}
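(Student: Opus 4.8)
The plan is to combine two ingredients: an exponential bound on the \emph{depth} of a minimal model, obtained through a surgery argument on $\varphi$-types, and an alternating procedure that searches for a model of such depth without ever materialising it, invoking the equality of alternating exponential time with $\expspace$.

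First I would prove the depth bound. Naively one would like to contract a root-to-leaf path wherever two nodes carry the same $\tp_\varphi$, but as the authors note this is unsound: the relations $\parentof$, $\immleftsib$ and the downward/upward existentials make $\tp_\varphi$ sensitive to both the local and the global environment. To repair this I would attach to each node $n$ a refined profile consisting of $\tp_\varphi(n)$ together with the \emph{set} $\anctype(n)$ of $\varphi$-types occurring strictly above $n$ and the set $\desctype(n)$ of $\varphi$-types occurring in $\subtree(t,n)$. Along any path from the root, $\anctype$ is monotonically non-decreasing and $\desctype$ is monotonically non-increasing, so each can change only exponentially often (both are subsets of the exponential set of $\varphi$-types). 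Hence the path splits into exponentially many maximal segments on which both sets are constant. On such a segment I would show that any two nodes $n_1$ (higher) and $n_2$ (lower) with $\tp_\varphi(n_1)=\tp_\varphi(n_2)$ may be identified by splicing $\subtree(t,n_2)$ into the position of $n_1$: because $\desctype$ is constant no descendant type, and hence no downward or global existential witness, is lost; because $\anctype$ is constant the types of the excised intermediate ancestors already recur among the surviving ancestors, so the upward existentials inside $\subtree(t,n_2)$ survive; and since the spliced node keeps type $\tp_\varphi(n_1)$, the unchanged parent and sibling context at that position still meets its child- and sibling-obligations. Collapsing each segment down to one node per $\varphi$-type leaves a path of at most exponential length, giving the exponential depth bound. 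I expect this to be the main obstacle, precisely because controlling the local ($\parentof$, $\immleftsib$) predicates under contraction is what breaks the word-case argument.

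With the depth bound in hand I would decide satisfiability in alternating exponential time. The procedure first existentially guesses the set $\Theta$ of $\varphi$-types that the model will realise --- an exponentially large object --- and checks that $\Theta$ is globally coherent: every incomparable/existential requirement of a type in $\Theta$ is met by some type in $\Theta$, and every pair of types allowed to be incomparable respects the universal ($\forall\forall$) constraints of $\varphi$. It then verifies, by a recursive alternating descent of the tree, that a model using only types from $\Theta$ and realising all of $\Theta$ actually exists. A configuration records the current node's type, the set of ancestor types (enough to discharge upward existentials and ancestor--descendant universal constraints), and the set of descendant types the current subtree is still obliged to provide. At each node the procedure existentially guesses a left-to-right sequence of child types --- which may be kept short, since sibling requirements are two-variable and hence only concern the existence of left/right neighbours with given properties --- checks the node's own child- and sibling-requirements against this sequence and $\Theta$, distributes the outstanding descendant obligations among the children, and then universally branches to recurse into each child with the augmented ancestor set. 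The recursion terminates at the exponential depth bound; each configuration has exponential size and the branching work per level is exponential, so the whole computation runs in alternating exponential time. Since alternating exponential time coincides with $\expspace$, this yields the claimed bound.
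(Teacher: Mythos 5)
Your two-stage architecture (exponential depth bound via path contraction, then a procedure exploiting bounded depth) matches the paper's, but your depth-bound argument has a concrete gap: the profile you contract on omits the set of $\varphi$-types of nodes \emph{incomparable} to $n$ (the paper's $\contexttype(n)$), and without it the splice is unsound. Consider a node $m$ in the preserved lower subtree $\subtree(t,n_2)$ and a subformula $\exists y\,\beta(x,y)$ where $\beta$ forces $y$ to be incomparable to $x$; the only witnesses for $m$ may hang off the contracted path segment, i.e.\ lie in $\subtree(t,n_1)\setminus\subtree(t,n_2)$, and are all excised. The constancy of $\desctype$ only guarantees a node of the witness's $\varphi$-type somewhere \emph{below} $n_2$, which may be an ancestor or descendant of $m$ rather than incomparable to it, and $\anctype$ is of no help here. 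The repair is exactly the paper's: add $\contexttype(n)$ to the profile. Since the set of incomparable types is monotone non-decreasing down a path, it changes only exponentially often, so the exponential path-index bound survives; equality of $\contexttype$ at the two spliced nodes then lets you relocate any lost incomparable witness to a node outside $\subtree(t,n_1)$, which is automatically incomparable to everything in the transplanted subtree.

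For the decision procedure you diverge from the paper, which translates $\varphi$ into a $\navxp$ filter (exponential blow-up) and runs a $\pspace$-in-depth algorithm that materialises one branch at a time, rather than an alternating-exponential-time type elaboration. Your route via $\mathrm{AEXPTIME}=\expspace$ is legitimate in principle and avoids the detour through modal logic, but the sketch underestimates the coordination needed for incomparable witnesses: whether a node's requirement for an incomparable witness of type $\tau$ is met depends not just on $\tau\in\Theta$ but on a $\tau$-node actually being realised off the current root-to-leaf path, which must be threaded through the alternating descent (e.g.\ by recording which obligations are discharged in already-committed sibling subtrees). This is fixable but is precisely the kind of bookkeeping the paper sidesteps by reusing the known bounded-depth $\navxp$ machinery.
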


The key to the proof is to show the ``exponential depth property'':
\begin{lemma}
\label{depthlemma}
Every satisfiable $\fotwo$ sentence $\varphi$ has  a model $T'$ where the depth 
 is bounded by  $2^{poly(|\varphi|)}$, and
similarly for satisfiability w.r.t UAR trees or ranked schemas. The outdegree of nodes
can also be bounded by  $2^{poly(|\varphi|)}$.
\end{lemma}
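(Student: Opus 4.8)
The plan is to prove the stronger statement that both long paths and high-degree nodes can be eliminated by local surgery that preserves the subformula type of every surviving node. Fix a model $T\models\varphi$ (respecting the schema or UAR in the respective variants). For a node $n$ I track three pieces of data: its subformula type $\tp_\varphi(n)$; the set $\anctype(n)$ of $\varphi$-types realized by its proper ancestors; and the set $\desctype(n)$ of $\varphi$-types realized inside $\subtree(T,n)$. For the schema version I additionally record the state assigned to $\subtree(T,n)$ by the bottom-up automaton. Since there are only $2^{poly(|\varphi|)}$ subformula types, $\tp_\varphi(n)$ has $2^{poly(|\varphi|)}$ possible values, while the set-valued $\anctype$ and $\desctype$ a priori have doubly-exponentially many; the latter count will be tamed by monotonicity.

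\emph{The surgery.} Along a root-to-leaf path $n_0,\dots,n_L$, suppose $i<j$ satisfy $\tp_\varphi(n_i)=\tp_\varphi(n_j)$, $\anctype(n_i)=\anctype(n_j)$, $\desctype(n_i)=\desctype(n_j)$ (and equal automaton states). I replace $\subtree(T,n_i)$ by $\subtree(T,n_j)$, inserting $n_j$ into the exact sibling slot that $n_i$ occupied under $n_{i-1}$. The claim is that every surviving node keeps its subformula type, proved by induction on quantifier depth: it suffices that for each surviving $w$ and each direction (ancestor, descendant, parent, child, left/right sibling, incomparable) the set of already-preserved lower-depth types realized in that direction from $w$ is unchanged. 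The ancestor direction is handled by $\anctype(n_i)=\anctype(n_j)$: a node of $\subtree(T,n_j)$ loses exactly the ancestors $n_i,\dots,n_{j-1}$, and the equality forces each of their types to already occur among the retained ancestors $n_0,\dots,n_{i-1}$. The descendant direction is handled by $\desctype(n_i)=\desctype(n_j)$: an ancestor of $n_i$ loses the removed region but retains $\subtree(T,n_j)$, which realizes the same type set, so both existential and universal descendant-requirements are preserved. The seam at $n_j$'s new parent and siblings is handled by $\tp_\varphi(n_i)=\tp_\varphi(n_j)$ together with the positional insertion: $n_j$ now sees exactly the parent and siblings that $n_i$ saw, and $\tp_\varphi$ records precisely which parent- and sibling-witnessed subformulas of $\varphi$ hold, so those requirements agree at $n_i$ and $n_j$. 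The same bookkeeping shows $\anctype$, $\desctype$ and the automaton state are preserved, so the surgery may be iterated, and since it strictly shrinks $T$ it terminates.

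\emph{Counting and degree.} Along any root-to-leaf path $\anctype$ is non-decreasing and $\desctype$ is non-increasing, so each changes value at most $2^{poly(|\varphi|)}$ times; the path therefore splits into $2^{poly(|\varphi|)}$ blocks on which both are constant, and within a block only $\tp_\varphi$ and the automaton state vary, over at most $2^{poly(|\varphi|)}$ (times the number of states) values. Hence a block longer than this contains a contractible pair, and once no contraction applies the depth is $2^{poly(|\varphi|)}$. The outdegree bound comes from the analogous horizontal surgery: the children of a node form a linearly ordered sequence under $\immleftsib/\leftsib$, i.e. a word, so the word-case shortening of \cite{fo2_utl}, applied to this sequence while keeping one representative per $(\tp_\varphi,\desctype)$ class, cuts the number of children to $2^{poly(|\varphi|)}$ and leaves every subtree, hence every ancestor/descendant requirement, intact. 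For ranked schemas the rank already bounds the degree, recording the automaton state makes each splice schema-preserving, and UAR is preserved automatically since surgery never relabels a node.

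The hard part will be correctness of the surgery in the \emph{incomparable} direction — witnesses $y$ for which no atomic relation holds with $x$. A node $w$ deep inside $\subtree(T,n_j)$ can lose incomparable witnesses that lived in the side-subtrees hanging off $n_i,\dots,n_{j-1}$, and the re-realization guaranteed by $\desctype(n_i)=\desctype(n_j)$ need not be incomparable to $w$ (it may be an ancestor or descendant of $w$ within $\subtree(T,n_j)$), so pure type-based matching is insufficient here. This is exactly where the distinguished-witness idea of the introduction enters: before contracting I would normalize $T$ so that every realized $\varphi$-type occurs on a bounded, protected set of nodes placed shallowly in side branches near the root, which stay incomparable to all deep nodes and are never removed, confining contractions to the unprotected part. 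Checking that this normalization can be carried out while respecting universal incomparable-requirements (which added incomparable witnesses could falsify) and the schema is the main obstacle, and is the step I would treat with the most care.
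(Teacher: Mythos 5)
Your surgery and your counting are essentially the paper's, but the invariant you maintain is missing one component, and that missing component is exactly the obstacle you flag at the end as unresolved. The paper's equivalence $\equivfull$ requires $n_i$ and $n_j$ to agree not only on $\tp_\varphi$, $\anctype$ and $\desctype$ but also on $\contexttype(n)$, the set of $\varphi$-types realized by nodes that are neither ancestors nor descendants of $n$. This is precisely what repairs the incomparable direction: if a node $m$ inside $\subtree(T,n_j)$ loses an incomparable witness $w$ that lived in a side-subtree hanging off $n_i,\dots,n_{j-1}$, then $\tp_\varphi(w)\in\contexttype(n_j)=\contexttype(n_i)$, so some surviving node $w'$ incomparable to $n_i$ realizes the same type; and any node incomparable to $n_i$ is automatically incomparable to every node of $\subtree(T,n_j)$ after the splice, so $w'$ serves as the new witness. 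Crucially, $\contexttype$ is monotone non-decreasing down a root-to-leaf path (every node incomparable to an ancestor is incomparable to its descendants), so adding it costs nothing in the path-index count: it changes at most exponentially often, exactly like $\anctype$ and $\desctype$.

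Your proposed alternative fix --- pre-normalizing $T$ so that every realized type has protected representatives placed shallowly near the root --- is the mechanism the paper uses for the \emph{global} (DAG-compression) collapses of Theorems \ref{fodescup} and \ref{fodup}, where incomparable splices force one to carry a witness set $W$; it is not needed for the pathwise collapse here, and as you yourself note you have not verified it (in particular that grafting new shallow branches does not violate universal requirements or the schema). As written, then, the proof has a genuine gap at its acknowledged hard case. Secondarily, your outdegree argument by appeal to the word-case shortening is too coarse: deleting a child deletes its whole subtree, so the horizontal equivalence must also track the types of descendants of left- and right-siblings (and the immediate-sibling types at the seam), as in the paper's appendix; keeping one representative per $(\tp_\varphi,\desctype)$ class does not by itself preserve incomparable witnesses for nodes elsewhere in the tree.
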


We give the argument for the depth bound, leaving the similar proof for the branching bound to the appendix.
Given a tree $t$ and nodes $n_0$ and $n_1$ in $t$ with $n_1$ not an ancestor of $n_0$,
the \emph{overwrite} of $n_0$ by $n_1$ in $t$  is the tree $t(n_1 \rightarrow n_0)$ formed by
replacing 
the subtree of $n_0$ 
with the subtree of $n_1$ in $t$. 
Let $F$ be the binary relation relating a node $m$ in $t$ to its copies in $t(n_1 \rightarrow n_0)$:
$n_1$ and its descendants have a single copy if $n_1$ is a descendant of $n_0$, and two copies otherwise;
nodes in  $\subtree(t,n_0)$ that are not in $\subtree(t,n_1)$ have no copies, and other nodes have a single copy.
In the case that $n_1$ is a descendant of $n_0$, $F$ is  a partial function.
We say an equivalence relation $\equiv$ on nodes of a tree $t$ is 
 \emph{globally $\varphi$-preserving} 
if for any equivalent nodes $n_0, n_1$ in $t$ with $n_0 \not \in \subtree(t,n_1)$,
the $\varphi$-type of a node $n$ in $t$ is the same as the $\varphi$-type of
  nodes in $F(n)$ within $t(n_1 \rightarrow n_0)$.
We say it is \emph{pathwise $\varphi$-preserving}
if this holds for any node $n_0, n_1$ in $t$ with $n_1$ a descendant of $n_0$.
The \emph{path-index} of an equivalence relation on $t$ is the maximum of the number of equivalence classes represented
on any path, while the \emph{index} is the total number of classes.

We can not always overwrite a  node with another having the same $\varphi$-type,
but by adding additional information, we can get a pathwise $\varphi$-preserving relation with small path-index.
For a node $n$, let $\desctype(n)$ be the set of $\varphi$-types of descendants of $n$, and
$\anctype(n)$ the set of $\varphi$-types of ancestors of $n$. Let
$\contexttype(n)$ be the $\varphi$-types of nodes $n'$ that 
are neither descendants
nor ancestors of $n$.
Say $n_0 \equivfull n_1$ if
they agree on their $\varphi$-type, 
the set $\desctype$, the set $\anctype$, and the set $\contexttype$.

\begin{lemma} \label{collapse}
The relation $\equivfull$ is pathwise $\varphi$-preserving, and its path index is bounded by  $2^{poly(|\varphi|)}$.
Thus, there is a polynomial $P$ such that
for any tree $t$ satisfying $\varphi$ and root-to-leaf path $p$ of length at least
 $2^{P(|\varphi|)}$,  there are two nodes $n_0,n_1$ on $p$ such
that $t(n_1 \rightarrow n_0)$ still satisfies $\varphi$. Given a tree automaton $A$,
it can be arranged that $A$ reaches the same state on $n_0$ as on $n_1$.
\end{lemma}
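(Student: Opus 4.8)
The plan is to prove the two assertions separately: first that $\equivfull$ is pathwise $\varphi$-preserving, which is the substantive part, and then that its path-index is singly exponential, from which the pigeonhole conclusion and the automaton refinement follow routinely.

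For pathwise preservation, fix $n_0 \equivfull n_1$ with $n_1$ a (proper) descendant of $n_0$ and write $t' = t(n_1 \to n_0)$; the surviving nodes are those outside $\subtree(t,n_0)$ (the \emph{context}) together with those in $\subtree(t,n_1)$, and $F$ sends each to its unique copy. I would first record the purely structural behaviour of $F$: it is the identity on the atomic relations everywhere except ``at the seam''. Within the context and within $\subtree(t,n_1)$ all axis relations ($\parentof$, $\ancof$, $\immleftsib$, $\leftsib$) are preserved; the only changes are that (i) the nodes of $\subtree(t,n_0)\setminus\subtree(t,n_1)$ -- in particular $n_0$ and the open path strictly between $n_0$ and $n_1$ -- are deleted, and (ii) the grafted root $F(n_1)$ inherits the parent and sibling slot of $n_0$, so that its proper ancestors in $t'$ are exactly the proper ancestors of $n_0$ in $t$. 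I would then prove, by induction on the subformulas of $\varphi$, that $\tp_\varphi^{t}(n) = \tp_\varphi^{t'}(F(n))$ for every surviving $n$. The atomic and Boolean cases are immediate from label preservation; the only real case is $\exists y\,\chi(x,y)$, where $\chi$ is a Boolean combination of atomic relations $R(x,y)$ and of one-variable subformulas in $x$ and in $y$. Since those one-variable subformulas are strict subformulas, the induction hypothesis lets me treat them as opaque, so the truth of $\exists y\,\chi$ at a node depends only on which pairs (atomic relation to the node, preserved one-variable type of the witness) are realised.

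The heart of the argument is therefore to match these witness pairs across $t$ and $t'$, and this is exactly what the three set-conditions in $\equivfull$ supply, organised by the direction in which the existential looks. Witnesses whose relation to the evaluated node is preserved by $F$ are handled directly by induction; the work is in replacing witnesses that are deleted or whose relation changes, all of which sit at the seam. A lost \emph{descendant} witness (which forces the evaluated node to be a context ancestor of $n_0$) has its type in $\desctype(n_0)=\desctype(n_1)$ and is re-realised inside the grafted copy of $\subtree(t,n_1)$; a lost \emph{incomparable} witness has its type in either $\desctype(n_0)=\desctype(n_1)$ or $\contexttype(n_0)=\contexttype(n_1)$ (according to whether the evaluated node is a context node or a grafted node) and is likewise re-realised among surviving incomparable nodes; and a deleted \emph{proper-ancestor} witness has its type in $\anctype(n_1)=\anctype(n_0)$ and is therefore realised by a surviving proper ancestor of $n_0$, which sits strictly above the graft and so lands in the correct finer relational class. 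I expect the main obstacle to be precisely the bookkeeping of these finer distinctions -- parent versus proper ancestor, immediate versus distant sibling -- at the seam node $F(n_1)$. The point that makes it go through is that $F(n_1)$ occupies $n_0$'s exact local environment (parent and sibling slot) while carrying $n_1$'s subtree, so every local (parent/child and immediate-sibling) existential at $F(n_1)$ agrees with the corresponding one at $n_0$, and the equality $\tp_\varphi(n_0)=\tp_\varphi(n_1)$ reconciles this with the value required at $n_1$. Symmetric reasoning handles witnesses that appear only in $t'$.

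For the path-index bound I would use monotonicity along a fixed root-to-leaf path $m_0,\dots,m_\ell$. Descending the path, the ancestor set only grows and the descendant set only shrinks, so $\anctype(m_i)$ is non-decreasing and $\desctype(m_i)$ non-increasing; moreover any node strictly incomparable to $m_i$ stays strictly incomparable to the child $m_{i+1}$, so $\contexttype(m_i)$ is non-decreasing as well. Each of these three set-valued quantities therefore changes at most $2^{|\varphi|}$ times along the path (there being at most $2^{|\varphi|}$ distinct $\varphi$-types), so the triple is constant on at most $3\cdot 2^{|\varphi|}+1$ maximal blocks, and within a block the only varying coordinate is $\tp_\varphi$, which takes at most $2^{|\varphi|}$ values; hence the number of $\equivfull$-classes on the path is at most $(3\cdot 2^{|\varphi|}+1)\cdot 2^{|\varphi|}=2^{poly(|\varphi|)}$. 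Choosing $P$ so that $2^{P(|\varphi|)}$ exceeds this bound, any root-to-leaf path of that length carries two $\equivfull$-equivalent nodes; being on one path they are comparable, so one is a descendant of the other and pathwise preservation (applied in particular at the root) yields $t(n_1\to n_0)\models\varphi$. Finally, to accommodate a tree automaton $A$ I would refine $\equivfull$ by additionally recording the bottom-up state $A$ assigns at each node; this multiplies the path-index by $|Q|$, keeping it singly exponential, and since the grafted subtree then reaches the same state at the seam as the deleted one, the run of $A$ above the seam is unchanged and acceptance is preserved.
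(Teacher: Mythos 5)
Your proposal is correct and follows essentially the same route as the paper: a structural induction on subformulas in which lost witnesses are replaced according to their direction using $\tp_\varphi$, $\desctype$, $\anctype$, and $\contexttype$ respectively (including the same treatment of the seam node, which inherits $n_0$'s local environment while carrying $n_1$'s subtree), together with the same monotonicity argument for the path-index bound and the same refinement by automaton state. No substantive differences to report.
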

Given Lemma \ref{collapse}, Lemma \ref{depthlemma} follows by  
contracting all paths exceeding a given length
until the depth of the tree is exponential in $|\varphi|$.
In fact  (e.g., for ranked trees)
$\equivfull$ can be used as the state set of a tree automaton.
The path index property implies that
the automaton goes through only exponentially many states on any path of a tree. By taking
the product of this automaton with a ranked schema, the corresponding depth bound relative to a schema
follows.

We give the simple argument for the path index bound in
Lemma \ref{collapse}, leaving the proof that $\equivfull$ is pathwise
$\varphi$-preserving to the appendix.  First, note that the total
number of $\varphi$-types is exponential in $|\varphi|$.  Now the sets
$\desctype(n)$ either become smaller or stay the same as $n$ varies
down a path, and hence can only change exponentially often. Similarly
the sets $\contexttype(n)$ and $\anctype(n)$ grow bigger or stay the same, and thus can
change only exponentially often. In intervals along a path where both
of these sets are stable, the number of possibilities for the
$\varphi$-type of a node is exponential. This gives the path index
bound.

Theorem \ref{fullup}  follows from combining Lemma \ref{depthlemma} with the following result on satisfiability 
of $\navxp$:
\begin{theorem}
\label{navxpbound}
The satisfiability of a $\navxp$ filter $\varphi$ over trees
of bounded depth~$b$ is in $\pspace$ (in $b$ and $|\varphi|$).
\end{theorem}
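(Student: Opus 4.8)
The plan is to decide satisfiability by searching for a witness tree top-down and depth-first, keeping only a single root-to-leaf branch in memory at any moment; equivalently, I would describe an alternating Turing machine running in time polynomial in $b$ and $|\varphi|$ and invoke the identity $\kw{APTIME}=\pspace$. First I would fix a Fischer--Ladner-style closure $\mathit{cl}(\varphi)$ consisting of all filter subexpressions of $\varphi$ together with the one-step unfoldings of the transitive axes $\descax,\dosax,\ancax,\ancosax,\folsibax,\precsibax$; this set has size polynomial in $|\varphi|$. A \emph{node type} is a boolean-consistent subset of $\mathit{cl}(\varphi)$ plus a choice of node label: there are exponentially many types, but each has polynomial size, and a single guessed type is all that must be committed to at a node before recursing.

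The search maintains the sequence of types guessed along the current branch (length at most $b+1$) together with a depth counter. Local boolean and $\mathit{lab}()=L$ requirements are checked on the spot. Each downward existential obligation of the form ``$\descax{::}L[q]$ holds here'' is either discharged at a guessed child (by demanding the corresponding witness in that child's type) or passed unchanged to a child to be discharged deeper; since every branch has length at most $b$, such an obligation is forced to be met before a leaf, and this is precisely where the unbounded fixpoint unfolding responsible for $\exptime$-hardness is cut off. Dually, a negated downward requirement such as $\neg\,\descax{::}L[q]$ becomes a forbidden-type constraint propagated to, and checked at, every descendant. The upward axes $\ancax,\ancosax$ are handled by threading the branch of ancestor types downward: the type already committed at each ancestor records the truth of every member of $\mathit{cl}(\varphi)$ there, so an ancestor requirement at the current node is discharged by a purely local check against the stored branch, with no extra recursion. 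The sibling axes are treated by generating the children of a node in a left-to-right pass, guessing one type per child and threading a polynomial-size summary of the leftward obligations already met and the rightward obligations still pending; the immediate axes $\nextsibax,\prevsibax$ constrain adjacent guesses, while $\folsibax,\precsibax$ are discharged or carried along the pass exactly as descendant obligations are carried down a branch, and the subtree below each witnessed child is explored by the vertical recursion.

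For the complexity analysis, a single computation branch follows one root-to-leaf descent (at most $b$ vertical steps) interleaved with horizontal sibling passes. Only polynomially many distinct sibling requirements occur, so a pumping argument on the word of children lets me assume each node has at most polynomially many children, keeping each horizontal pass polynomially long; combined with the fact that the information stored at each node -- its type, the threaded ancestor path, and the pending downward and horizontal obligations -- is polynomial in $|\varphi|$, every branch has length and space polynomial in $b$ and $|\varphi|$. This yields the $\kw{APTIME}=\pspace$ bound, which read deterministically is a Ladner-style procedure storing only the current branch.

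The main obstacle is the soundness and completeness of this local type discipline when the three families of non-local axes interact: I must show that an accepting computation -- locally consistent and with every existential obligation discharged within depth $b$ -- can be stitched into a genuine tree satisfying $\varphi$, and conversely that any satisfying tree of depth at most $b$ induces such a computation. The most delicate part of this is the interplay between the horizontal sibling pass and the vertical recursion into each child's subtree (since a sibling filter at one child may assert a subtree property of another child), together with justifying the polynomial outdegree bound so that the horizontal passes do not inflate the length of a branch.
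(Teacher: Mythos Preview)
Your proposal is correct and takes essentially the same route as the paper: a depth-first, branch-at-a-time exploration of a candidate witness tree, committing at each node to a subformula type and threading unresolved downward and sibling obligations along the current branch while checking upward constraints against the stored ancestor types. The only notable differences are that the paper first translates $\navxp$ to the modal tree logic $\utl$ and then gives a nondeterministic $\pspace$ procedure materializing only the current rightmost branch, whereas you work directly with a Fischer--Ladner closure and frame the argument as $\kw{APTIME}=\pspace$; that framing is what forces your extra pumping-based polynomial-outdegree argument, which the paper's $\npspace$ formulation sidesteps since it is insensitive to how many siblings are visited.
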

The result is proved in the appendix, but it is a variant of a result
from \cite{BFG} that finite satisfiability for the fragment of
$\navxp$ which contains only axis relations child, parent,
next-sibling, preceding-sibling, previous-sibling and
following-sibling is in $\pspace$.  Given Theorem \ref{navxpbound} we
complete the proof of Theorem \ref{fullup} by translating an $\fotwo$
sentence $\varphi$ into an $\navxp$ filter $\varphi'$ with an
exponential blow-up, using Proposition \ref{prop:fo2tl}.  By
Lemma \ref{depthlemma}, the depth of a witness structure is bounded by
an exponential in $|\varphi|$, and the $\expspace$ result follows.

\myparagraph{Lower bound} We now show a matching lower bound
for  the satisfiability problem.
\begin{theorem}
\label{fulllower}
The satisfiability problem for $\fotwo$ is $\expspace$-hard, with hardness holding
even when formulas are restricted to be in $\fotwo[\vdesc]$.
\end{theorem}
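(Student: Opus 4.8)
The plan is to reduce from an $\expspace$-complete problem that is naturally \emph{tree-shaped}, so that the witness model can be taken to be a tree of exponential depth and doubly-exponential size---exactly the regime permitted by Lemma~\ref{depthlemma}. A deterministic $2^n$-space computation is a \emph{sequence} of up to doubly-exponentially many configurations, which would force a root-to-leaf path of doubly-exponential length and hence violate the exponential depth bound; so I would instead reduce from the acceptance problem for \emph{alternating} Turing machines running in time $2^{p(n)}$, which is $\expspace$-complete (recall $\kw{AEXPTIME}=\expspace$). The computation tree of such a machine has depth $2^{p(n)}$ (one level per time step), branching bounded by the number of transitions, and thus up to doubly-exponentially many nodes: this fits a model in which the descendant relation alone suffices, because alternation only requires the unordered ``for some successor'' / ``for all successors'' quantification that $\fotwo[\vdesc]$ can express, rather than the ordered, child-indexed branching that descendant-only logic cannot see.

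Concretely, I would encode each configuration as a \emph{block}: a descending path of exactly $2^n$ nodes, the $j$-th of which stores the tape symbol of cell $j$, a flag recording whether the head is at $j$ together with the current control state, and an $n$-bit \emph{column counter} $j\in\{0,\dots,2^n-1\}$ held in unary predicates $c_1,\dots,c_n$. A short $\fotwo[\vdesc]$ formula forces the counter to start at $0$, to increment as one moves down within a block, and to wrap around to $0$ only at a block boundary; this is the standard exponential-counter gadget, and it is what forces blocks to have length $2^n$ and the tree to have exponential depth. Successive configurations are chained by $\ancof$: the successor block(s) of a block $B$ hang below the last node of $B$, with one child block per chosen transition, so that existential states keep a single successor block while universal states keep one successor block for \emph{each} applicable transition.

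On top of this skeleton I would impose, as a conjunction of $\fotwo[\vdesc]$ sentences of size $\mathrm{poly}(n)$: (i) well-formedness of blocks and counters; (ii) that the topmost block is the initial configuration; (iii) the transition constraints, stating that the symbol and head information of the cell of column $c$ in a successor block are those prescribed by the machine from columns $c-1,c,c+1$ of the predecessor block, where ``the cell of column $c$ in the successor block'' is pinned down using $\ancof$ together with equality of the column counter; (iv) the alternation condition, using a $\forall$ over the child blocks of a universal configuration (all of them legal successors, one present per transition) and an $\exists$ over the child blocks of an existential configuration; and (v) acceptance, namely that every maximal block is in an accepting state (finiteness of the computation tree is free, since models are finite trees). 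The machine accepts iff this conjunction is satisfiable, and the reduction is clearly polynomial.

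The step I expect to be the main obstacle is realizing the ``corresponding-cell'' and ``next-block'' relations inside the two-variable, descendant-only fragment. The natural way to say ``$y$ is the cell of the same column in the block immediately below that of $x$'' is ``$y$ is a descendant of $x$ with the same counter value and no block boundary strictly between'', but the ``strictly between'' clause needs a third variable and so is not directly available in $\fotwo$. I would circumvent this by adding a bounded set of auxiliary unary markers---for instance a two- or three-valued ``phase'' colour on blocks together with a marker for the first node of each block---chosen so that adjacency of blocks, and equality of columns across adjacent blocks, can be certified \emph{locally} by a formula that only compares the phases and counters of two nodes related by $\ancof$. Making these markers simultaneously enforce correct block succession, correct counter behaviour, and the transition relation, all while staying within two variables and using only $\ancof$, is the delicate part; once it is in place, the remaining constraints for the initial configuration, alternation, and acceptance are routine.
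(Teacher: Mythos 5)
Your overall architecture is exactly the paper's: reduce from acceptance of an alternating exponential-time machine (so that $\expspace$ is captured while respecting the exponential depth bound of Lemma~\ref{depthlemma}), represent each configuration as a descending block of $2^n$ cell-nodes carrying an $n$-bit column address in unary predicates, hang successor blocks below the end of a block with the tree's branching used to realize universal states, and impose initial, transition, alternation and acceptance constraints. The difference lies in the one step you yourself flag as delicate, and there the device you propose does not work. A constant number of ``phase'' colours on blocks cannot identify the \emph{immediately} following block: a descendant with phase $\phi+1 \bmod 3$ and the same column counter may lie in block $t+1$ or equally well in block $t+4$, and neither reading of the transition constraint is then correct --- the existential reading (``some such descendant has the prescribed content'') is too weak to pin down block $t+1$, while the universal reading (``all such descendants have it'') wrongly forces block $t+4$ to be a direct successor of block $t$. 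The ``no block boundary strictly between'' clause that you correctly identify as needing a third variable cannot be recovered from boundedly many unary markers.

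The paper's fix is to give each block a full $n$-bit \emph{configuration index} $c_1,\dots,c_n$ alongside the position bits $p_1,\dots,p_n$ (affordable since the machine runs in time $2^n$). ``Corresponding cell of the next configuration in the same thread'' then becomes ``$\descof$-comparable node with equal position bits and configuration index equal to the numeric successor,'' which is the standard two-variable bitwise successor comparison; well-formedness is enforced by requiring that configuration indices strictly increase along $\descof$, that every non-initial node has an ancestor whose configuration address is smaller by one, and that every node not in a final configuration has a descendant whose index is larger by one (and similarly for the position indices within a block). With that replacement --- an $n$-bit time-step counter in place of your bounded phase colouring --- your reduction goes through and coincides with the paper's; everything else in your write-up (the choice of alternating exponential time, the block layout, the treatment of alternation and acceptance) matches the intended proof.
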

This is proved by coding the acceptance problem for an alternating exponential time machine. A tree node
can be associated with an $n$-bit address, either by using multiple predicates (for  $\fotwo[\vdesc]$) or
via children. The equality and successor relations between the addresses associated to nodes $x$ and $y$ can
be coded in $\fotwo$ using the standard argument (see the $\nexptime$-hardness proof of \cite{fo2_utl}).
A path corresponds to one thread of the alternating computation, and the tree structure is used to code
alternation.


\section{Satisfiability without child} \label{sec:nochild}
\myparagraph{The exponential depth bound revisited}
As noted in the previous section, the satisfiability problem
is still $\expspace$-complete even
when the $\childof$ relation is removed. However, we take a closer
look at this case, noting some connections with other logics and
some further restrictions that lower the complexity.

We first consider the relationship of $\fotwo$ without child to modal tree languages.

Let \emph{downward stutter-free $\navxp$}, denoted $\dsfnavxp$, be the
fragment of $\navxp$ obtained by restricting to the descendant,
ancestor, and all sibling axes.  The complexity of satisfiability
$\dsfnavxp$ has not been studied in prior work, including
\cite{BFG}, but we can show the following depth bound for $\dsfnavxp$:
\begin{theorem}
\label{tlsat}
Every satisfiable $\dsfnavxp$ sentence has a model of polynomial
depth.  
The satisfiability problem for $\dsfnavxp$ is $\pspace$-complete.
\end{theorem}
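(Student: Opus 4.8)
The plan is to establish the polynomial depth bound first and then read off both the $\pspace$ upper bound and, by a separate reduction, $\pspace$-hardness. Everything hinges on the fact that the only vertical navigation available in $\dsfnavxp$ is through the \emph{transitive} axes $\descax$ and $\ancax$, which cannot count parent/child steps: this is the precise sense of ``downward stutter-free''. For a node $n$ let $D(n)$ be the set of subfilters of $\varphi$ of the form $\descax[q]$ that hold at $n$, and $A(n)$ the set of subfilters of the form $\ancax[q]$ that hold at $n$. Since a descendant of a descendant is a descendant, $D(\cdot)$ is non-increasing and $A(\cdot)$ is non-decreasing as one moves down a root-to-leaf path; as each is a subset of the at most $|\varphi|$ relevant subfilters, the pair $(A(n),D(n))$ takes at most $2|\varphi|+1$ distinct values along any path. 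Thus every root-to-leaf path splits into at most $2|\varphi|+1$ maximal \emph{blocks} on which $(A,D)$ is constant, and it suffices to bound the length of each block by a polynomial. Note that the coarser invariants $\desctype$ and $\anctype$ used for Lemma~\ref{collapse} will not do here: they are monotone but range over doubly-exponentially many values, so they only yield an exponential bound.

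The key lemma I would prove is that a block of constant $(A,D)$ can be shortened. The point is that within such a block every node is \emph{vertically redundant}: constancy of $D$ forces, for each $q$, that $q$ holding at a block node $m$ already implies $\descax[q]$ holds at $m$ (the label of $m$ is witnessed again strictly below it), and constancy of $A$ dually forces $q$ at $m$ to be witnessed strictly above it. Consequently, deleting interior block nodes changes neither the $\descax$-filters evaluated at ancestors of the block nor the $\ancax$-filters evaluated at its descendants. The cleanest way to exploit this is a top-down Hintikka-style construction: build the model by descending only when $(A,D)$ genuinely changes, discharging one pending $\descax$-obligation at each descent by installing the required witness as an immediate child (legitimate because $\descax$ is transitive) and realizing all same-signature witnesses at a single level. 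This produces every root-to-leaf path of length $O(|\varphi|)$ while leaving the horizontal structure free; the sibling obligations $\folsibax[q], \precsibax[q], \nextsibax[q], \prevsibax[q]$ are discharged inside the (possibly wide) child-word of each node, which contributes width but not depth.

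From the polynomial depth bound I would obtain $\pspace$ membership by an alternating polynomial-time procedure (recall $\kw{APTIME}=\pspace$). The machine guesses $\varphi$-types top-down along a single branch---only polynomially many because of the depth bound---threading the monotone inherited set $A(\cdot)$ downward, recording the pending $\descax$-obligations and checking that each is discharged, and using universal branching to recurse into children and siblings. The horizontal consistency of each child-word is checked by a left-to-right sweep maintaining only the bounded set of sibling-types already seen and still required, so that every sibling modality can be verified locally. Since only one root-to-leaf path and one sibling position are ever held in memory, the whole check runs in polynomial space.

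I expect the main obstacle to be the \emph{simultaneous} reconciliation of the three kinds of obligations---descendant, ancestor, and sibling---under the collapse. A naive overwrite $t(m'\rightarrow m)$ that relocates the content of $\subtree(t,m)$ preserves the vertical filters by the redundancy above, but it moves the collapsed root into a new sibling slot and can therefore disturb the $\folsibax$/$\precsibax$ values of $m$'s neighbours and of the relocated node itself; showing that the horizontal structure can be kept intact (or rebuilt) while the path is shortened is the crux, and is exactly what forces the top-down construction rather than a one-shot surgery on a given model. Finally, $\pspace$-hardness I would obtain by encoding $\kw{TQBF}$: quantifier alternation is coded by the (polynomially deep) branching structure and the propositional matrix by a sibling word, with the descendant and sibling axes enforcing the evaluation; since equality and successor of addresses are not needed, the reduction stays within $\dsfnavxp$.
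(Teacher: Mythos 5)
Your key lemma is the same as the paper's. The paper translates $\dsfnavxp$ into a unary temporal logic on trees and defines the \emph{vertical type} of a node as the set of descendant- and ancestor-``eventually'' subformulas it satisfies (plus its label); since the former are non-increasing and the latter non-decreasing along any root-to-leaf path, each flips at most once, so a path splits into polynomially many constant blocks (Lemma~\ref{lem1}). That is precisely your monotonicity observation for $(A(n),D(n))$. Where you genuinely diverge is in how a block is eliminated. The paper performs a one-shot overwrite $t(n_{\bottomn}\rightarrow n_{\topn})$ on a \emph{given} model, collapsing each maximal constant-type interval to a point and proving type preservation by induction on subformulas; the sibling axes are handled by observing that the surgery leaves the sibling word of every surviving node unchanged (the one delicate case being the root of the relocated subtree, whose horizontal context does change --- exactly the worry you raise, which the paper treats rather tersely). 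You declare that worry fatal and instead rebuild the model top-down by a Hintikka-style construction that descends only when $(A,D)$ changes. That route can be made to work, but it is more expensive than you acknowledge: you must state consistency conditions on types, show a satisfying tree induces a consistent labelling, and show a consistent labelling unfolds into a genuine model including the horizontal child-words --- a full tableau soundness/completeness argument, versus the paper's purely local preservation lemma. Your $\pspace$ upper bound is essentially the paper's route (polynomial depth plus the bounded-depth $\navxp$ algorithm of Theorem~\ref{navxpbound}, which already materializes one branch at a time), and your TQBF reduction supplies the hardness half that the paper asserts but does not write out; a QBF encoding in the style of the paper's grid-tiling proof of Theorem~\ref{desconlyuar}, with universal branching realized by requiring both truth values as descendants, does stay within $\vdesc$ and hence within $\dsfnavxp$. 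One small caution: your ``vertical redundancy'' claim only applies to \emph{interior} block nodes (a filter $q$ under a $\descax$ holding at such a node is re-witnessed strictly below because $\descax[q]$ holds at its parent and $D$ is constant); the two block endpoints must be retained, which is what the paper's choice of $n_{\topn}$ and $n_{\bottomn}$ accomplishes.
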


The proof resembles the result that a satisfiable stutter-free
temporal logic formula has a model of polynomial size. Some care needs
to be taken to deal with the sibling axes, which allow a $\dsfnavxp$
formula to look off of a given path.  

This result shows that tight bounds for two-variable logic
without child can actually be obtained via translation to modal languages:
Combining the first part of  Theorem 
\ref{tlsat}  and the translation to $\navxp$
from Proposition \ref{prop:fo2tl},
we get  an~alternative proof of the exponential depth
bound in Lemma \ref{depthlemma}, as well as the $\expspace$ upper
bound for satisfiability, in the special case of
$\fotwo[\vnochild]$.

\myeat{
The
lower-bound argument can be modified to show that $\fotwo[\vnochild]$
satisfiability is $\expspace$-complete relative to all trees, UAR
trees, and to ranked schemas.
}

\myparagraph{Unary Alphabet Restriction, polynomial alternation bounds, and polynomial depth bounds} The previous section showed $\expspace$-complete-ness for satisfiability
of $\fotwo[\vdesc]$. However the $\expspace$-hardness argument for $\vdesc$ makes use of multiple predicates
holding at a given node, to code the address of a tape cell of an alternating $\exptime$ Turing Machine.
It thus does not apply to satisfiability over Unary Alphabet Restriction trees (as defined
in Section \ref{sec:prelim})
or to satisfiability with respect to a schema, since schemas restrict to a single alphabet symbol per node.
We show that the complexity of satisfiability is actually ``lower''
(that is, modulo the assumption $\nexptime \neq \expspace$) when the UAR is imposed, using
distinct techniques for the case of ranked and unranked trees.

We start by noting
that one always has at least $\nexptime$-hardness, even with UAR.

\begin{theorem}
\label{desconlyuar}
The satisfiability of $\fotwo[\vdesc]$ with the 
unary alphabet restriction is $\nexptime$-hard, and similarly  with respect to a ranked schema.
\end{theorem}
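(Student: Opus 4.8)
The plan is to prove $\nexptime$-hardness of $\fotwo[\vdesc]$ satisfiability under UAR by reducing from a standard $\nexptime$-complete problem, namely the tiling problem for an exponential-sized grid (equivalently, the acceptance problem for a nondeterministic $\exptime$ Turing machine on an $n$-bit input, whose computation occupies a $2^n \times 2^n$ tableau). The overall intuition is that, just as in the word case of \cite{fo2_utl}, we can use $n$-bit addresses to name $2^n$ many positions; the challenge under UAR is that we may no longer attach $n$ separate predicates $P_1,\dots,P_n$ to a single node to store its address in parallel, since exactly one $P_i$ holds per node. The fix is to spread the address along a \emph{path}: I would encode each tableau cell as a gadget consisting of a root-to-leaf chain of $n$ nodes, where the $i$th node of the chain is labelled by one of a small constant number of symbols encoding ``bit $i$ is $0$/$1$'', plus markers delimiting the start and end of a cell-block and the content symbol of the cell.

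First I would fix a finite UAR alphabet $\Sigma$ containing: two bit-symbols (encoding $0$ and $1$ for an address position), a symbol for the tiling/tape content of each cell, and a few bookkeeping symbols (e.g.\ a block-delimiter and an ``address digit'' marker). The address of a cell is read off the sequence of bit-symbols along the path through its gadget. Using only the $\ancof$/$\descof$ relation, I would write an $\fotwo[\vdesc]$ sentence enforcing that every model is (essentially) a disjoint union of such chains hanging off appropriate positions, that each chain has exactly $n$ bit-positions in the correct order, and that the content symbol is present. The equality and successor relations on the $n$-bit addresses of two nodes $x$ and $y$ are then expressed by the standard two-variable trick as in the $\nexptime$-hardness proof of \cite{fo2_utl}: to say two addresses are equal (or that one is the successor of the other) one quantifies over a single ``witness'' bit position where they differ and asserts the appropriate relationship below and above that position, which is exactly the kind of statement that $\fotwo$ can make when bits are laid out linearly and comparable by the descendant order.

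The main consistency conditions of the tiling---horizontal adjacency (successor of a column address) and vertical adjacency (successor of a row address)---are then coded as $\fotwo[\vdesc]$ formulas relating the content symbols of two cells whose addresses stand in the successor relation, again using the above address-comparison machinery. Because $\fotwo$ has only two variables, I must be careful that each such constraint refers to at most two cells at a time and that the address comparison, which itself quantifies over a bit position, does not require a third live variable; this is handled exactly as in the word case by reusing the variable bound to the witness bit. The reduction is clearly polynomial in $n$, and a satisfying model exists iff the exponential grid admits a valid tiling, giving $\nexptime$-hardness.

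The hard part will be the address-comparison gadgetry under the twin constraints of UAR \emph{and} having only the descendant relation (no child, no sibling order). With child and sibling predicates one can align bit positions rigidly; with only $\descof$ I must ensure that the ``$i$th bit of $x$'' and ``$i$th bit of $y$'' can be matched up and compared using only ancestor/descendant relations and a constant alphabet, without an available predicate to tag position $i$ uniquely. I expect to resolve this by encoding each bit together with its index so that a single quantified witness node can serve as the locus of comparison, mirroring the technique behind Theorem~\ref{fulllower} but now forced into a single alphabet symbol per node. The claim that the same reduction works ``with respect to a ranked schema'' then follows because the chain gadgets are of bounded rank (indeed unary/binary) and the structural well-formedness conditions can be folded into a bottom-up tree automaton of polynomial size, so one simply conjoins the schema acceptance condition with the tiling sentence.
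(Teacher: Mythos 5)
Your overall plan---reduce from tiling an exponential grid, and spread each cell's $2n$-bit address along a descendant chain because UAR forbids stacking $n$ predicates on one node---is exactly the paper's strategy. But there is a genuine gap at the central step, and it stems from a self-imposed constraint: you insist on a \emph{constant-size} alphabet (``two bit-symbols \dots plus a few bookkeeping symbols'') and then correctly observe that with only $\descof$, no child, no sibling order, and no way to tag position $i$, you cannot align the $i$th bit of $x$'s chain with the $i$th bit of $y$'s chain. You defer this with ``I expect to resolve this by encoding each bit together with its index,'' but that resolution contradicts the constant-alphabet setup: encoding the index in the symbol means $\Theta(n)$ distinct symbols. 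The UAR does not require a constant alphabet---it only requires that exactly one predicate hold per node---and the paper exploits precisely this: it uses symbols $\zerox_1,\onex_1,\dots,\zerox_n,\onex_n,\zeroy_1,\oney_1,\dots,\zeroy_n,\oney_n$ plus colour symbols, lays out one indexed bit-symbol per node along each root-to-leaf branch, and then $\text{SAME-X}(x,y)$ is just the conjunction over $i$ of ``$x$ has an ancestor labelled $\zerox_i$ iff $y$ does,'' each conjunct a trivial two-variable formula. With that choice the ``hard part'' you flag evaporates; without it, the address comparison is not merely unproven but (as far as the standard technique goes) unworkable, since two variables over $\descof$ alone cannot identify corresponding depths on two incomparable chains of anonymous bit-symbols. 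So either commit to the $O(n)$-size indexed alphabet, in which case your proof collapses to the paper's and goes through, or the reduction as literally described does not close.

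A smaller point: your justification of the ranked-schema case (fold the shape conditions into a polynomial-size bottom-up automaton over the rank-$2$ trie of addresses) is the right idea and consistent with the paper, but it too depends on the indexed alphabet, since the automaton's states must track the level $i$, which is only polynomial to do if the labels already carry $i$.
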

The proof is a  variation of the argument for $\nexptime$ hardness for words \cite{fo2_utl}, but this time
using the frontier of a shallow but wide tree to code the tiling of an exponential grid.

We will prove a matching $\nexptime$ upper bound for UAR trees and for
satisfiability with respect to a ranked schema.  To do this, we extend
an idea introduced in the thesis of Philipp Weis \cite{WeisPhd},
working in the context of $\fotwo[<]$ on UAR words: polynomial bounds
on the number of times a formula changes its truth value while keeping the same
symbol along a given
path.

The following is a generalization of  Lemma 2.1.10 of Weis \cite{WeisPhd}.

Consider an $\fotwo[\vdesc]$ formula $\psi(x)$, 
a tree $t$ satisfying the UAR, and fix a 
root-to-leaf path $p=p_1 \ldots p_{max(p)}$ in $t$.  Given a label $a$, 
define an \emph{$a$-interval} in $p$ to be a set of the form
$\{ i : m_1 \leq i < m_2 ;\, t,p_i \models a(x) \}$.  
\begin{lemma} \label{lem:weistree}
For every  $\fotwo[\vdesc]$ formula $\psi(x)$,
UAR tree $t$,
and root-to-leaf path $p=p_1 \ldots p_{max(p)}$ in $t$,
the set $\{i| ~ t, p_i \models \psi \wedge a(x)\}$
is made up of at most $|\psi|^2$ $a$-intervals.
\end{lemma}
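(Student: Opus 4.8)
The plan is to induct on the structure of the one-variable formula $\psi(x)$, establishing for every subformula $\chi(x)$ a bound on the number of $a$-intervals of $\{i : t,p_i\models \chi\wedge a\}$ along the fixed root-to-leaf path $p$, uniformly in $t$, $p$ and $a$. It is cleaner to track instead the number of \emph{alternations} of the truth value of $\chi$ along the subsequence of $a$-labelled positions of $p$: the number of $a$-intervals is at most the number of such alternations plus one, and alternations compose well with the connectives. For a label atom $P_i(x)$ the UAR forces $P_i$ to be constant along the $a$-positions (true iff $P_i=a$), so it contributes no alternation, and equality atoms in one variable are trivial. Negation preserves the alternation count, and since the set for $\chi_1\vee\chi_2$ (resp.\ $\chi_1\wedge\chi_2$) is a Boolean combination of the two underlying sets, its alternation count is at most the sum of the two. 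Thus all growth is confined to the quantifier case.

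Consider $\chi(x)=\exists y\,\theta(x,y)$. As in the standard two-variable normal form, $\theta$ is a Boolean combination of one-variable subformulas $\alpha_1(x),\dots$ in $x$, one-variable subformulas $\beta_1(y),\dots$ in $y$, and the order atoms $\ancof(x,y),\ancof(y,x),x=y$, with $\sum_j|\alpha_j|\le|\theta|$. Split the quantifier according to the four possible order types of $(x,y)$:
\[
\exists y\,\theta \;\equiv\; \theta(x,x)\ \vee\ \exists y\,(\ancof(y,x)\wedge\theta)\ \vee\ \exists y\,(\ancof(x,y)\wedge\theta)\ \vee\ \exists y\,(\incomp(x,y)\wedge\theta).
\]
In the first disjunct the order atoms collapse to constants, so $\theta(x,x)$ is a one-variable formula strictly smaller than $\chi$ and the induction hypothesis applies directly. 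In each of the other three disjuncts the order atoms are fixed to constants, so under the relevant order type $\theta$ becomes a Boolean combination of the $\alpha_j(x)$ and $\beta_k(y)$ alone; factoring out the $x$-part, each disjunct is a disjunction, over the $x$-type $\tau$ (the valuation of the $\alpha_j$ at $x$), of $\tau(x)\wedge\exists y\,(\text{ordertype}\wedge\delta_\tau(y))$ with $\delta_\tau$ purely in $y$.

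The engine is three monotonicity facts about a root-to-leaf path $p$. As $x$ descends from $p_i$ to $p_{i+1}$: the set of strict ancestors of $x$ grows (and, crucially, every ancestor of a node of $p$ lies on $p$); the set of strict descendants of $x$ shrinks; and the set of nodes incomparable to $x$ grows, the last because a node incomparable to $p_i$ cannot become comparable to the deeper node $p_{i+1}$. Consequently, for a \emph{fixed} $x$-type $\tau$, the predicate $\exists y\,(\text{ordertype}\wedge\delta_\tau(y))$ is monotone in $i$ (non-decreasing for the ancestor and incomparable types, non-increasing for the descendant type), hence it changes value at most once while $\tau$ is constant. The number of maximal blocks along which the $x$-type is constant is controlled by the alternations of the $\alpha_j(x)$, bounded by the induction hypothesis; combining the at-most-one monotone contribution per block with the number of blocks, and adding the contribution of $\theta(x,x)$, yields the alternation bound for $\chi$.

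The main obstacle is the bookkeeping needed to make this accounting tight enough to land at exactly $|\psi|^2$: one factor of $|\psi|$ comes from summing over the one-variable $x$-subformulas that drive the $x$-type changes, the other from the nesting of quantifiers, exactly as in Weis's word-case induction, and the sharing of a single $x$-type block structure across the three existential disjuncts must be exploited to avoid paying for them separately. The genuinely new ingredient relative to words is the incomparable case: with only $\ancof$/$\descof$ available there is no sibling order splitting the incomparable nodes into a left and a right part, so this set stays monotone and its existential remains single-threshold — precisely the property that breaks, and that the companion $\dsfnavxp$ argument must repair, once sibling axes are admitted.
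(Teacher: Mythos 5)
Your proposal follows essentially the same route as the paper: induction on formula structure, decomposition of the existential quantifier by order type (equality, ancestor, descendant, incomparable), the three monotonicity facts along a root-to-leaf path giving a single threshold per order type for each fixed valuation of the one-variable $x$-subformulas, and Weis-style interval bookkeeping (which the paper packages as a separate combinatorial lemma with threshold functions $L$ and $U$ indexed by that valuation). The accounting you defer to "Weis's word-case induction" is precisely the content of that lemma, so nothing essential differs.
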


From Lemma \ref{lem:weistree}, we will show that $\fotwo[\vdesc]$ sentences that
are satisfiable over UAR trees always have polynomial-depth witnesses:
\begin{lemma} \label{lem:smallwitnes} If an $\fotwo[\vdesc]$ formula $\varphi$ is satisfied over a UAR
tree, then it is satisfied by a model of depth bounded by a polynomial
in  $|\varphi|$.
\end{lemma}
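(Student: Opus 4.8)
The plan is to combine two ingredients: a local surgery that excises a segment of a root-to-leaf path without disturbing $\varphi$, and a polynomial bound, obtained from Lemma~\ref{lem:weistree}, on the number of distinct $\varphi$-types that can occur along a single path of a UAR tree. First I would establish a contraction principle for $\fotwo[\vdesc]$ that is considerably stronger than the one available for the full vocabulary in Lemma~\ref{collapse}: for descendant-only trees, two nodes $p_i,p_j$ on a common root-to-leaf path with $p_i$ a strict ancestor of $p_j$ can be collapsed whenever $\tp_\varphi(p_i)=\tp_\varphi(p_j)$ — that is, agreement on the $\varphi$-type \emph{alone} already makes the overwrite $t(p_j\rightarrow p_i)$ satisfy $\varphi$. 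Equivalently, over $\vdesc$ the relation ``same $\varphi$-type'' is pathwise $\varphi$-preserving. Granting this, the depth bound follows by repeatedly overwriting until no root-to-leaf path carries two nodes of equal $\varphi$-type; each overwrite strictly shrinks the tree so the process terminates, and in the resulting model every path has length at most the number of distinct $\varphi$-types realisable along it, which I bound polynomially below.

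To prove the contraction principle I would classify, for each one-variable subformula, the position of a potential witness $y$ relative to a node as $\descof$, $\ancof$, equal, or incomparable — the only possibilities in $\vdesc$ — so that the $\varphi$-type of a node is a Boolean combination of atoms together with existentials of the form ``there is a strict descendant / strict ancestor / incomparable node satisfying $\theta$'' for subformulas $\theta$. Equality of $\tp_\varphi(p_i)$ and $\tp_\varphi(p_j)$ means each regional existential takes the same value at $p_i$ and $p_j$. For a surviving ancestor $m$ of $p_i$ only its descendant-existentials can be affected, and any $\theta$ witnessed inside the deleted region $\subtree(t,p_i)\setminus\subtree(t,p_j)$ makes the descendant-existential for $\theta$ true at $p_i$, hence at $p_j$, hence re-witnessed inside the surviving $\subtree(t,p_j)$; a symmetric argument using the surviving ancestors of $p_i$ handles the ancestor-existentials of a node $w\in\subtree(t,p_j)$. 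The delicate case — and the main obstacle — is the incomparable-existential of such a $w$: the overwrite deletes the off-path subtrees hanging off the segment, which are incomparable to $w$. Here I would use that these subtrees are also incomparable to $p_j$, so any $\theta$ they witness makes the incomparable-existential true at $p_j$, hence at $p_i$; a witness incomparable to $p_i$ must lie strictly off the path \emph{above} $p_i$, therefore survives the overwrite and remains incomparable to every node of the spliced $\subtree(t,p_j)$, in particular to $w$. It is exactly this step that exploits the descendant-only signature and breaks once child or sibling axes are present, which is why the full logic needs the stronger $\equivfull$ of Lemma~\ref{collapse}.

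For the type-count I would invoke Lemma~\ref{lem:weistree}. Fix a root-to-leaf path $p$ and a label $a$; each subformula $\psi$ is true on the $a$-labelled nodes of $p$ in at most $|\psi|^2$ $a$-intervals, so its truth value changes at most $2|\psi|^2$ times along the subsequence of $a$-labelled nodes. Summing over the at most $|\varphi|$ subformulas, the $\varphi$-type changes at most $2|\varphi|^3$ times along the $a$-labelled nodes, so at most $2|\varphi|^3+1$ distinct $\varphi$-types occur among them. Since the UAR assigns exactly one label per node and there are at most $|\varphi|$ labels, the whole path realises at most $|\varphi|(2|\varphi|^3+1)=O(|\varphi|^4)$ distinct $\varphi$-types.

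Putting the pieces together, starting from any UAR model of $\varphi$ I would overwrite collapsible pairs along paths until no root-to-leaf path contains two nodes of the same $\varphi$-type; overwriting never alters node labels, so the UAR is preserved, and $\varphi$ remains satisfied at each step by the contraction principle. In the final model the $\varphi$-types along any root-to-leaf path are pairwise distinct, so that path has length $O(|\varphi|^4)$ by the count above, yielding the desired polynomial depth bound. The only step requiring genuine care is the incomparable-existential preservation inside the contraction principle; everything else is bookkeeping over the regional decomposition and a pigeonhole over labels and subformulas.
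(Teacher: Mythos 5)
Your reduction to a contraction principle --- that over $\vdesc$ two nodes $p_i,p_j$ on a path can be collapsed via $t(p_j\rightarrow p_i)$ whenever $\tp_\varphi(p_i)=\tp_\varphi(p_j)$ --- does not hold, and the paper itself refutes it. The overwrite you describe never changes a node's label or increases any surviving node's outdegree, so if the principle were true your iteration would turn any \emph{binary} UAR model into a binary UAR model of polynomial depth (using your type count along paths). That contradicts Theorem~\ref{rankednochilduar}, which exhibits $\fotwo[\vdesc]$ formulas of size $O(n)$ whose binary UAR models all have depth $2^n$. The fallacy is in the type-transfer step: you argue that a witness $\theta$ realized in the deleted region $\subtree(t,p_i)\setminus\subtree(t,p_j)$ ``makes the descendant-existential for $\theta$ true at $p_i$, hence at $p_j$.'' But the regional existential ``there is a descendant (resp.\ incomparable node) satisfying $\theta$,'' for an arbitrary Boolean combination $\theta$ of one-variable subformulas arising from some \emph{other} node $m$'s valuation, is not a subformula of $\varphi$ and is not recorded in $\tp_\varphi(p_i)$; equality of $\varphi$-types therefore gives you nothing about it. Recording this information is exactly what the sets $\desctype$, $\anctype$, $\contexttype$ of Lemma~\ref{collapse} are for, and those only yield an exponential path index, not a polynomial one. (You also leave untreated the surviving nodes incomparable to $p_i$, whose incomparable-existentials lose the entire deleted region, but that is secondary to the main error.)

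The paper's proof is structured precisely to sidestep this: it never deletes anything off the path. It uses Lemma~\ref{lem:weistree} to find a polynomial set $W$ of endpoints of maximal $b,\varphi$-intervals such that every on-path node outside $W$ has \emph{both} a strict ancestor and a strict descendant in $W$ satisfying exactly the same one-variable subformulas; it then removes only those on-path nodes, \emph{promoting} their off-path subtrees to the nearest surviving ancestor in $W$. A lost witness $w$ is then replaced by $w''$ below or $w'$ above with an identical set of one-variable subformulas and the correct ancestor/descendant relationship to $m$ --- no transfer through $p_i$'s or $p_j$'s type is ever needed. Note that promotion increases branching, which is why the argument is confined to unranked trees, consistent with Theorem~\ref{rankednochilduar}. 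Your interval-counting paragraph is essentially sound and close to how the paper extracts $W$ from Lemma~\ref{lem:weistree}, but the surgery it feeds into must be the promotion surgery, not subtree overwriting keyed to bare $\varphi$-types.
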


\begin{proof}
Suppose that  $\varphi$ is satisfied over a UAR
tree $t$. On each path $p$, for each
letter $b$, let a $b,\varphi$-interval be a maximal $b$-interval on
which every one-variable subformula of $\varphi$ has constant truth value.
By the lemma above, the total number of such  intervals is polynomially bounded.
We let $W$ 
contain the endpoints of each $b, \varphi$-interval for all symbols $b$.
We note the following crucial property of $W$:
for every node $m$ in $p$ which is not in $W$, there  is  a node in $W$ with the
same $\varphi$-type as $m$ that is strictly above  $m$, and also one strictly below $m$.

\begin{figure}[h]
\begin{center}
\scalebox{0.6}{
\includegraphics{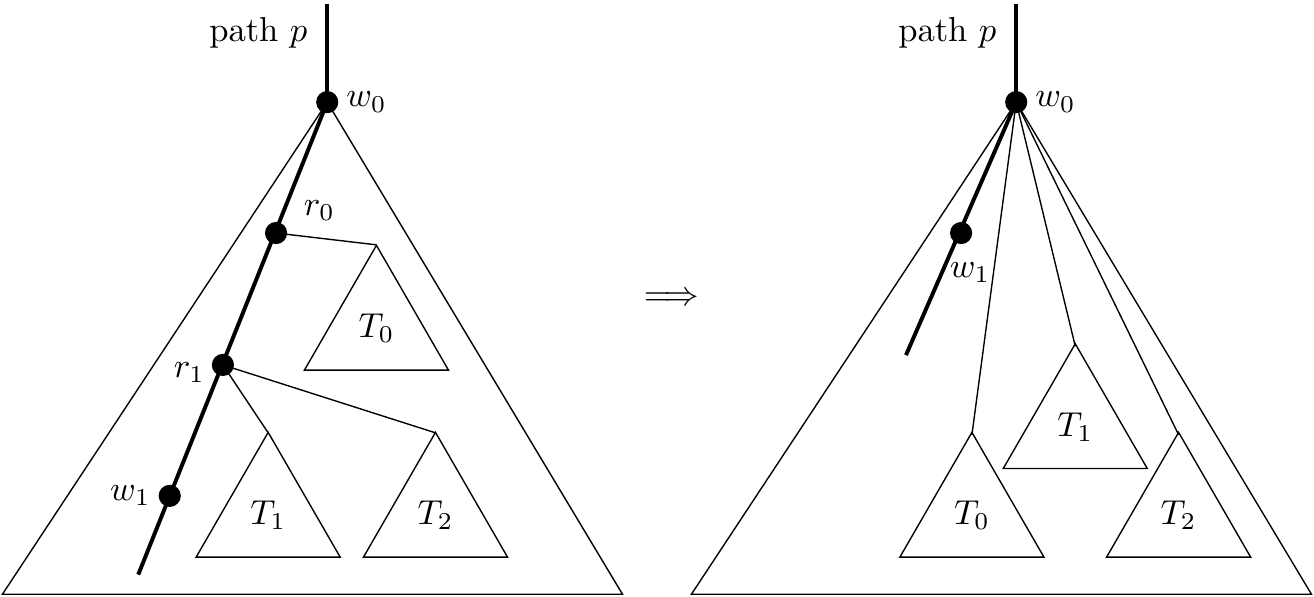}

}
\end{center}

\caption{Tree Promotion}
\label{fig:promote}

\end{figure}

The idea is now to remove all those points on path $p$ that are not in
$W$.  This must be done in a slightly unusual way, by ``promoting''
subtrees that are off the path.  For every removed node $r$, for every
child $c$ of $r$ not on $p$, we attach the subtree rooted at $c$ to
the closest node of $W$ above $r$ (see Figure \ref{fig:promote}).  Let
$t'$ denote the tree obtained as a result of this surgery.  Formally,
the nodes of $t'$ are all nodes of $t$ that are not in $p$ or are in
$W$. Each such node has the same label that it had in $t$.  For any
node $m$ in $t$ with parent $n$, if both $m$ and $n$ are in $t'$ then
$n$ is again the parent of $m$ in $t'$.  On the other hand, if only
$m$ is in $t'$ then its parent in $t'$ is its lowest ancestor in $W$.


Let $f$ be the partial function taking a node in $t$ that is not
removed to its image in $t'$.  We claim that $t'$ still satisfies
$\varphi$, and more generally that for any subformula $\rho(x)$ of
$\varphi$ and node $m$ of $t$, we have $t,m \models \rho$ iff $t',
f(m) \models \rho$.  This is proved by induction on $\rho$, with the
base cases and the cases for boolean operators being
straightforward. For an existential formula $\exists y \beta(x,y)$, we
give just the ``only if'' direction, which is via case analysis on the
position of a witness node $w$ such that $t,m,w \models \beta$.

If $w$ is in $t'$ then $t',m,w \models \beta$ by the induction
hypothesis and the fact that $w$ is an ancestor (or descendant) of $m$
in $t'$ if and only if it is an ancestor (or descendant) of $m$ in $t$.

If $w$ is not in $t'$, then it must be that $w$ lies on the path $p$
and is not one the protected witnesses in $W$.  But then $w$ has both
an ancestor $w'$ and descendant $w''$ in $W$ that satisfy all the same
one-variable subformulas as $w$ does in $t$, with both $w'$ and $w''$
preserved in the tree $t'$.  If $m$ and $w''$ are distinct then
$t',m,w'' \models \beta$ by the induction hypothesis and the fact that
$m$ and $w''$ have the same ancestor/descendant relationship in $t'$
as do $m$ and $w$ in $t$.  If $m$ is identical to $w''$ then
$t',m,w' \models \beta$ by similar reasoning.  In any case we deduce
that $t',m \models \exists y \beta$.

Since this process reduces both the length of the chosen path $p$ and
does not increase the length of any other path, it is clear that
iterating it yields a tree of polynomial depth.
\end{proof}

Note that we can guess a tree as above in $\nexptime$, and hence
we have the following bound:

\begin{theorem} \label{thm:uarnochild} Satisfiability for $\fotwo[\vdesc]$ 
formulas over
UAR unranked trees is in $\nexptime$, and hence is
$\nexptime$-complete.
\end{theorem}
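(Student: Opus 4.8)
The plan is to establish the matching $\nexptime$ upper bound by producing a witness model of at most exponential size and then guessing and verifying it. Lemma \ref{lem:smallwitnes} already bounds the depth of a witness by a polynomial in $|\varphi|$, so it remains to bound the outdegree by an exponential; the two bounds together yield a tree with $2^{poly(|\varphi|)}$ nodes. First I would take a UAR model $t$ of $\varphi$ and apply Lemma \ref{lem:smallwitnes} to assume $t$ has polynomial depth. I would then reduce the branching at every node \emph{without} increasing the depth, so that both bounds hold simultaneously.

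For the branching reduction, fix a node $n$ with children subtrees $T_1,\ldots,T_k$, and for each child let $S_i$ be the set of $\varphi$-types realized inside $T_i$. Since only the $\ancof$ relation is available in $\vdesc$, the only role a child subtree plays for a node outside it is through the types it realizes: an ancestor of $n$ sees the members of $\bigcup_i S_i$ as descendant types, while a node incomparable to $n$ (and a node sitting in another child of $n$) sees them as incomparable types. I would therefore select, for each $\varphi$-type $\tau$, at most two children realizing $\tau$ (all of them if fewer than two exist) and delete the remaining children of $n$. As there are only $2^{poly(|\varphi|)}$ many $\varphi$-types, this leaves at most $2^{poly(|\varphi|)}$ children at $n$, and since we only delete subtrees the depth does not grow.

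The heart of the argument is to check that this deletion preserves the $\varphi$-type of every surviving node, which I would prove by induction on subformulas exactly as in Lemma \ref{lem:smallwitnes}, the only nontrivial case being an existential $\exists y\,\beta(x,y)$ witnessed by a node $w$ inside a deleted child $T_j$. If $w$ served as a descendant (or incomparable) witness for a node $x$ lying outside the subtree of $n$, then its $\varphi$-type still occurs in some retained child, which remains a descendant (resp.\ incomparable) of $x$. The delicate case is when $x$ itself lies in a sibling child $T_i$ and needs $w$ as an \emph{incomparable} witness; here the choice to keep two realizers per type guarantees a retained realizer of $w$'s type in a child other than $T_i$, which is incomparable to $x$ as required. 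I expect this incomparable-witness bookkeeping to be the main obstacle, since it is precisely what forces keeping two copies rather than one.

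Finally, combining polynomial depth with exponential outdegree gives a witness with $2^{poly(|\varphi|)}$ nodes. A $\nexptime$ procedure guesses such a tree and verifies that it satisfies the UAR and $\varphi$; model checking an $\fotwo$ sentence on a structure of size $N$ takes time polynomial in $N$ and $|\varphi|$, hence singly exponential here. This places the problem in $\nexptime$, and together with the $\nexptime$-hardness of Theorem \ref{desconlyuar} we obtain $\nexptime$-completeness.
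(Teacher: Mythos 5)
Your proposal is correct and follows the same overall skeleton as the paper: polynomial depth from Lemma \ref{lem:smallwitnes}, an exponential outdegree bound, and then guess-and-check of the resulting $2^{poly(|\varphi|)}$-size witness. The one place you diverge is the branching bound. The paper simply invokes the outdegree bound already stated in Lemma \ref{depthlemma}, whose proof (in the appendix) contracts long horizontal paths using an equivalence relation on siblings that tracks $\varphi$-types of left- and right-siblings and of their descendants --- machinery that is needed because that lemma must also handle the sibling axes. You instead give a self-contained pruning argument tailored to $\vdesc$: keep at most two child subtrees realizing each $\varphi$-type and delete the rest, with the second copy retained precisely to supply incomparable witnesses for nodes sitting inside a sibling subtree. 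This is simpler and exploits correctly that in $\vdesc$ a child subtree is visible from outside only through the set of types it realizes (there being no child or sibling predicates), and you are right that the pruning must be applied \emph{after} the depth reduction, since the promotion step in Lemma \ref{lem:smallwitnes} can increase outdegree while your deletion step cannot increase depth. One small point you leave implicit but which follows from your selection rule: every $\varphi$-type realized in a deleted subtree is still realized in a kept one, so the set of realized types --- and hence the truth of the sentence $\varphi$ itself, not just the types of surviving nodes --- is preserved.
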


\myparagraph{Bounds on subtrees and satisfiability of $\fotwo[\vdesc]$ with respect to a ranked schema}
The collapse argument above relied heavily on the fact that trees were unranked, since over a fixed rank
we could not apply ``pathwise collapse''.
Indeed, we can show that over ranked trees, a $\fotwo[\vdesc]$ formula satisfiable
over UAR trees need \emph{not} have a witness
of polynomial depth:

\begin{theorem} \label{rankednochilduar}
There are  $\fotwo[\vdesc]$ formulas $\varphi_n$ of size $O(n)$ that are satisfiable over
UAR binary trees, where the minimum depth of satisfying UAR binary trees grows as $2^{n}$.
\end{theorem}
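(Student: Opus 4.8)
The plan is to force exponential depth by encoding an $n$-bit binary counter along a root-to-leaf path of a binary tree, so that the $2^n$ distinct counter values are forced to occur in a strictly descending chain. First I would fix an alphabet of size $O(n)$: for each bit position $i \le n$ a pair of symbols recording that position together with its bit value, plus a handful of control symbols (block separators and markers for the top and bottom of the counter). A counter value $j$ is written as a block of $n$ spine nodes carrying the bits of $j$ together with their position labels. The sentence $\varphi_n$ then asserts: (i) the topmost block encodes $0^n$; (ii) every block that does not encode $1^n$ has, strictly below it, a further block encoding the binary increment of its value; and (iii) the labelling respects the intended block-and-position discipline. Each of (i)--(iii) is a conjunction of $O(n)$ two-variable constraints, so $|\varphi_n| = O(n)$, and since each node carries a single symbol the construction respects the UAR.

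For satisfiability I would exhibit the intended model directly: the binary tree whose spine lists the blocks for $0,1,\ldots,2^n-1$ in order, with all auxiliary gadgets filled in correctly and the spine capped by a leaf. This is a UAR binary tree satisfying $\varphi_n$. For the depth lower bound I would argue that in any model of $\varphi_n$, clauses (i) and (ii) force a root-to-leaf path along which the encoded value takes \emph{every} value from $0^n$ up to $1^n$: starting from the $0^n$ block, each value below $1^n$ mandates an incremented block strictly below it, and binary increment is injective, so no value repeats. Hence the path contains at least $2^n$ distinct blocks occupying disjoint, vertically separated sets of nodes, and the tree therefore has depth at least $2^n$.

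The main obstacle, and the technical heart of the construction, is expressing the increment relation between a block and the block immediately below it using only two variables and only the $\descof$ and $\ancof$ predicates. Unlike the word or child setting exploited in Theorem~\ref{fulllower}, there is here no successor (child) relation, and under the UAR one cannot store an entire address inside a single node and compare the addresses of $x$ and $y$ by a two-variable formula. Worse, the ``closest descendant carrying a given label'' relation is not $\fotwo[\vdesc]$-definable, since forbidding an intervening node requires a third variable; so one cannot even point from the position-$i$ cell of one block to the position-$i$ cell of the next. My plan to overcome this is to use the binary branching essentially: I would attach to each block, via the second child that ranked trees make available, a bounded witness gadget in which the claimed successor value is written out with \emph{locally} unique position labels, so that the increment can be verified by a two-variable formula ranging over that gadget, with the gadget pinned to the adjacent block by the branching rather than by a forbidden closest-descendant test.

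Finally I would note that this construction is necessarily incompatible with the polynomial-depth collapse of Lemma~\ref{lem:smallwitnes}: that argument promotes off-path subtrees to a higher node, which is possible only because unranked trees admit arbitrarily many children. The binary bound, together with the witness gadgets that occupy the second child of the relevant nodes, blocks exactly this surgery, which is why exponential depth survives over ranked UAR trees. I expect the branching-based definition of the increment to be the delicate step; the satisfiability argument, the depth-counting argument, and the verification that $|\varphi_n| = O(n)$ are routine once it is in place.
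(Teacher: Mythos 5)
There is a genuine gap at what you yourself identify as the heart of the construction: the two-variable, descendant-only expression of the successor relation between consecutive counter values. Your proposed fix --- a witness gadget hanging off the second child, ``pinned to the adjacent block by the branching'' --- cannot work as described, because $\fotwo[\vdesc]$ has access only to the descendant/ancestor partial order: the child and sibling relations are not in the signature, so ``the gadget attached to \emph{this} block'' is indistinguishable from the gadgets attached to any lower block, all of which are equally descendants of the current block under UAR's single-symbol-per-node regime. More generally, any scheme that writes a counter value as a block of $n$ comparable spine nodes and then compares it to the value of a \emph{lower} block runs into the ``nearest descendant with a given label'' problem you correctly flag as needing a third variable; relocating the bits into an off-spine gadget does not escape this, since from a higher spine node the gadgets of all lower blocks are still descendants and carry the same $O(n)$ labels.

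The paper's construction resolves exactly this point by a different arrangement of the address carriers. It uses labels $b$, $s$, and $a_1,\dots,a_n$, and forces the $b$-nodes to form a single spine, the $s$-nodes to be pairwise \emph{incomparable} children of spine nodes, and the $a_i$-nodes to occur only below $s$-nodes. Each $s$-node then owns a private descendant set, so its $n$-bit address is read off by the one-variable formulas $\psi_i(x) = \exists y\,(x \mathrel{\ancof} y \wedge a_i(y))$ with no ambiguity. The successor constraint is imposed between \emph{incomparable} $s$-nodes (``for every $s$-node of address $a\neq 1^n$ there exists an $s$-node of address $a+1$ somewhere''), which is a plain Boolean combination of the $\psi_i(x)$ and $\psi_i(y)$ and hence trivially two-variable. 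Exponential depth then follows not from a vertically ordered chain of values, as in your plan, but from a pigeonhole: $2^n$ pairwise-incomparable $s$-nodes must each be attached as a child of a distinct $b$-spine node, and binary rank allows at most one such attachment per spine node. Your satisfiability and size-$O(n)$ claims, and your closing observation about why the promotion surgery of Lemma~\ref{lem:smallwitnes} is blocked by the rank bound, are fine; but without the incomparable-carrier trick the increment step of your construction does not go through.
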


Nevertheless, we can still obtain an $\nexp$ bound for UAR trees of a~given rank,
and even for satisfiability with respect to a  ranked schema.

\begin{theorem}
\label{fodescup}
The satisfiability problem for $\fotwo[\vdesc]$ over ranked schemas
is in $\nexptime$, and is thus $\nexp$-complete.
\end{theorem}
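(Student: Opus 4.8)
The plan is to show satisfiability is decidable in $\nexptime$; since Theorem~\ref{desconlyuar} already gives a matching $\nexptime$ lower bound, this yields $\nexp$-completeness. The obstruction to reusing the unranked argument of Lemma~\ref{lem:smallwitnes} is precisely Theorem~\ref{rankednochilduar}: over a fixed rank we cannot ``promote'' off-path subtrees, so a witness may be forced to have exponential depth, which together with bounded branching threatens a doubly-exponential number of nodes. I would therefore not try to guess an explicit tree. Instead the idea is to bound the number of \emph{distinct subtrees} in some witness by a single exponential, represent that witness as an acyclic DAG of exponential size, and guess and verify this DAG in $\nexptime$.

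The heart of the argument is a subtree-counting lemma, and the difficulty is that two dimensions each threaten a doubly-exponential blowup: the \emph{vertical} dimension (exponential depth) and the fact that $\desctype(n)$, $\anctype(n)$ and $\contexttype(n)$ are subsets of the exponentially many $\varphi$-types. I would tame the vertical dimension using Lemma~\ref{lem:weistree}: it implies that along any root-to-leaf path only \emph{polynomially many} distinct $\varphi$-types occur, so that $\anctype(n)$, and (up to a correction confined to those polynomially many path-types) $\contexttype(n)$, take only polynomially many values along any single path. I would tame the horizontal/incomparable dimension by first guessing the global set $G$ of $\varphi$-types realized in the intended model (an object of exponential size, hence guessable in $\nexptime$): in $\fotwo[\vdesc]$ an incomparable existential at $n$ holds exactly when some type in $\contexttype(n)$ witnesses it, and $\contexttype(n)$ differs from $G$ only within the polynomially many types lying on $n$'s own path, so every node's ``looking off the path'' behaviour is essentially pinned down by $G$. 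With these two reductions, the relevant specification of a subtree is just its automaton state, its root $\varphi$-type, the realized set $\desctype$, and the context class $(\anctype,\contexttype)$ in which it sits; subtrees sharing this specification are interchangeable, since in $\fotwo[\vdesc]$ a context together with a subtree determines the $\varphi$-type of every node inside it (this is the same phenomenon exploited by $\equivfull$ in Lemma~\ref{collapse}).

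Concretely, I would guess an acyclic, context-annotated DAG of exponential size whose nodes carry such specifications, threading the context top-down (the children of a node with context $(\anctype,\contexttype)$ and type $\tau$ inherit $\anctype\cup\{\tau\}$ together with a correction for incomparable siblings) and propagating $\desctype$ and the automaton state bottom-up. Verification is then a purely local consistency check on an exponential-size object: that each node's $\varphi$-type is the correct function of its children's realized types, its ancestor types, and $G$; that every existential obligation of each type of $G$ is discharged (using a bounded family of distinguished witness subtrees, one realizing each type of $G$); that the tree-automaton transitions are respected with an accepting root state; and that acyclicity forces the unfolding to be a finite tree. All of this is checkable in exponential time, so the whole procedure runs in $\nexptime$. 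The main obstacle, and the step deserving the most care, is proving that a valid DAG of \emph{singly} exponential size actually exists whenever $\varphi$ is satisfiable: one must show that path-locality of types (Lemma~\ref{lem:weistree}) together with the single global guess $G$ really does bound the number of distinct (context, subtree) specifications by one exponential, so that the inherent doubly-exponential size of a deep ranked witness is absorbed entirely by subtree sharing.
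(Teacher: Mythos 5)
Your overall strategy --- an exponential-size DAG witness obtained by bounding the number of distinct subtrees, then guessed and verified in $\nexptime$ --- is exactly the paper's, and you correctly identify the two sources of doubly-exponential blowup. But your proposal does not close the one you yourself flag as ``the main obstacle,'' and as stated it cannot: your subtree specification includes the full set $\desctype(n)$ of $\varphi$-types realized below $n$, and this ranges over subsets of the exponentially many $\varphi$-types, so across the tree it can take doubly-exponentially many values (the paper notes precisely this: one ``can not afford to abstract a node by the set of \emph{all} the types realized below it''). Lemma~\ref{lem:weistree} does not help here, since it bounds the types occurring along a single path, whereas $\desctype(n)$ aggregates over all branches of the subtree. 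The missing idea is to replace $\desctype(n)$ by the \emph{selected} descendant types $\sdesctype(n)$: for each $\varphi$-type $\tau$ realized by some ancestor of $n$ and each existential subformula of $\tau$, record the type of just one descendant witness (if any). Because only polynomially many types occur on the path above $n$ (this is where the UAR and Lemma~\ref{lem:weistree} enter), $\sdesctype(n)$ has polynomial size and hence only exponentially many possible values --- and it retains exactly the information needed so that overwriting $\subtree(t,n)$ by $\subtree(t,n')$ does not rob any ancestor of $n$ of its descendant witnesses.

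A second, smaller gap: pinning down the ``looking off the path'' behaviour by a guessed global type set $G$ is not quite enough, because for a type $\tau \in G$ that also occurs in $\anctype(n) \cup \desctype(n)$ you cannot tell from $G$ alone whether $\tau$ is additionally realized incomparably to $n$, and $\desctype(n)$ is not confined to polynomially many path types. The paper instead builds a protected set $W$ of exponential size --- a deepest witness $w_\tau$ for each realized type $\tau$, plus, for every node on a protected path and every existential subformula, one incomparable witness, plus all ancestors of these nodes --- and only collapses nodes outside $W$; the maximal-depth choice of the $w_\tau$ is what guarantees that every incomparable existential obligation keeps a witness inside $W$ after collapsing. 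You would need some analogue of $W$ (not just one distinguished witness subtree per type of $G$) to make your local DAG verification sound.
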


We give the argument only for satisfiability with respect to rank-$k$ UAR trees, leaving the extension
to schemas for the appendix. This will also serve as an alternative proof of Theorem \ref{thm:uarnochild}.
The idea will be to create a model with only an exponential number of distinct subtrees, which can be represented
by an~exponential-sized DAG.  
We do this by creating an equivalence relation that  is globally $\varphi$-preserving (not just pathwise) and which has exponential index (not just path index).
 We will then  collapse  equivalent nodes, as in 
 Lemma~\ref{collapse}. There are several distinctions from that lemma: 
to identify nodes that are not necessarily comparable  we can not afford to  abstract a node by  the set of \emph{all} the types
realized below it, since within the tree as a whole there can be doubly-exponentially many such sets.
Instead
we will make use of  some ``global information'' about the tree, in
the form of a set of ``protected witnesses'', which we denote $W$.

By Lemma \ref{depthlemma} we know that a satisfiable $\fotwo[\vdesc]$ formula $\varphi$ has a model $t$ of depth at most exponential in $\varphi$.
Fix such a $t$.
For each $\varphi$-type $\tau$, let $w_{\tau}$ be a node of $t$ with maximal depth satisfying $\tau$. 
We include all $w_{\tau}$ and all of their ancestors in a set $W$, and call these \emph{basic global witnesses}.
For  any   $m$ that is an ancestor or equal to a basic  global witness $w_\tau$, 
and any subformula $\rho(x) =\exists y \beta(x,y)$ of   $\varphi$, if there
is $w'$ incomparable (by the descendant relation) to $m$ such that  $t, m, w' \models \beta$
we add one such $w'$ to $W$, along with all its ancestors -- these are the
\emph{incomparable global witnesses}.
\myeat{
We claim that for every $m \in t$, if there is $w$ incomparable to $m$ with
$t,m, w \models \beta(x,y)$, then there is such a $w$ in $W$.
Fix $m$ and suppose that such a $w$ exists. Let $w_\tau$ be the basic
global witness for the type of $w$. If $w_\tau$ is incomparable to $m$, then
$w_\tau$ has the required property. If $w_\tau$ is a descendant of $m$, then
we would have thrown in the necessary $w$ into $W$ as an
incomparable global witness for $m$.
If $w_\tau$ is an ancestor of $m$ or equal to $m$, we would have thrown in the necessary
$w$ into $W$ as an incomparable global witness for $w_\tau$.
}

We need one more definition.
Given a node $m$ in a tree, for every $\varphi$-type $\tau$ realized by some ancestor
$m'$ of
 $m$, for every subformula $\exists y \beta(x,y)$ of $\tau$, if
there is a descendant $w$ of $m$ such that $t, m', w \models \beta(x,y)$,
 choose one such  witness $w$ and let $\sdesctype(m)$ include
the $\varphi$-type of that witness. Note that the same  witness will suffice for every ancestor $m'$ realizing $\tau$,
and since there are only polynomial many $\varphi$-types realized on the path,
the collection  $\sdesctype(m)$ will be of polynomial size.

Now we transform $t$ to $t'$ such that $t' \models \varphi$ and
$t'$ has only exponentially many different subtrees.  We make use of a
well-founded linear order $\prec$ on trees with a given rank and label
alphabet, such that: \begin{inparaenum} \item
$\subtree(t,n') \prec \subtree(t,n)$ implies $n'$ is not an ancestor
of $n$; \item for every tree $C$ with a distinguished leaf, for tree
$t_1, t_2$ with $t_1 \prec t_2$, we have $C[t_1] \prec C[t_2]$, where
$C[t_i]$ is the tree obtained by replacing the distinguished leaf of
$C$ with $t_i$.
\end{inparaenum}
There are many such orderings, e.g. using standard string encodings of a tree.

For any model $t$ 
if there are two nodes $n, n'$ in $t$  such that
\begin{inparaenum}
\item $n, n' \not \in W$, 
\item $\tp_\varphi(n)=\tp_\varphi(n')$,
\item $\anctype(n) = \anctype(n')$,
\item $\sdesctype(n)$ $=$ $\sdesctype(n')$,
\item $\subtree(t,n') \prec \subtree(t,n)$ (which implies that  $n'$ cannot be an ancestor of $n$),
\end{inparaenum}
then let $t'=\up(t)$ be obtained
by choosing such $n$ and $n'$
and  replacing the subtree rooted
at  $n$ by the subtree rooted at $n'$.

Let $T_1$ be
the nodes in $t$ that were not in $\subtree(t,n)$,
and for any node $m \in T_1$ let $f(m)$
denote the same node considered within $t'$.
Let
$T_2$ denote the nodes in $t'$ that are images
of a node in $\subtree(t,n')$.
For each $m \in T_2$,
let $f^{-1}(m)$ denote the node in $\subtree(t,n')$ from which  it derives.

We claim the following:
\begin{lemma} \label{lem:collapsevdesc}
For all $m \in T_1$ the 
$\varphi$-type of  $n$ in $t$ is the
same as the $\varphi$-type
of $f(m)$ in $t'$.
Moreover, for every node $m'$ in $T_2$, 
the 
$\varphi$-type of $m'$ in $t'$ is
the same
as 
that of  $f^{-1}(m)$ in $t$.
\end{lemma}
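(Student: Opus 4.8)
The plan is to prove both assertions simultaneously by induction on the structure of the subformula, strengthening the statement to the following: for every subformula $\rho(x)$ of $\varphi$, every $m \in T_1$ satisfies $t,m \models \rho$ iff $t',f(m)\models\rho$, and every $m'\in T_2$ satisfies $t',m'\models\rho$ iff $t,f^{-1}(m')\models\rho$. The base case (node predicates) and the Boolean cases are immediate, since $f$ and $f^{-1}$ preserve labels and are injective on the relevant sets. All the work is in the case $\rho(x)=\exists y\,\beta(x,y)$. Here I would first isolate the feature of $\fotwo[\vdesc]$ that drives everything: because the only binary atoms are $\ancof(x,y)$ and $\ancof(y,x)$ together with $x=y$, the truth of $\beta(a,b)$ depends only on $\tp_\varphi(a)$, on $\tp_\varphi(b)$, and on the comparability relation between $a$ and $b$ (ancestor, descendant, incomparable, or equal). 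Since all one-variable subformulas occurring in $\beta$ are proper subformulas of $\rho$, the inductive hypothesis already transfers their truth between $t$ and $t'$; thus it suffices, given a witness in one structure, to exhibit a witness in the other having the same $\varphi$-type and the same comparability relation to the (image of the) source node. Call this the \emph{configuration principle}.

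Next I would record the structural facts about $\up$: for $u,v\in T_1$ the $\ancof$-relation is identical in $t$ and $t'$; among images of nodes of $\subtree(t,n')$ it agrees with their relation inside $\subtree(t,n')$; and a $T_1$-node $f(u)$ is an ancestor of a $T_2$-node exactly when $u$ is an ancestor of $n$ in $t$, while a $T_2$-node is never an ancestor of a $T_1$-node. With these, a witness lying in the unchanged region on the same side of the surgery transfers at once via the configuration principle and the inductive hypothesis. The remaining witnesses cross the surgery. For $m\in T_1$ an ancestor of $n$ whose existential is witnessed by a strict descendant $w\in\subtree(t,n)$, I would use that $\tp_\varphi(m)\in\anctype(n)=\anctype(n')$ and that $\sdesctype(n)=\sdesctype(n')$: together these produce a descendant $w'$ of $n'$ of the same $\varphi$-type as $w$, whose copy in $T_2$ is a descendant of $f(m)$ and, by the configuration principle, again witnesses $\beta$. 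The converse direction, and the symmetric situation in which a $T_2$-node looks upward, are handled dually; the latter uses $\anctype(n)=\anctype(n')$ to replace an ancestor witness of $n'$ by an ancestor of $n$ of the same type, and the root of the inserted copy is covered by the condition $\tp_\varphi(n)=\tp_\varphi(n')$.

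The delicate case, and the one I expect to be the main obstacle, is an existential witnessed by a node \emph{incomparable} to the source node lying in the deleted region $\subtree(t,n)$ (or, dually, in the freshly inserted copy). Here neither $\sdesctype$ nor $\anctype$ applies, and crucially the spliced-in subtree $\subtree(t,n')$ need not realize the same $\varphi$-types as $\subtree(t,n)$, so the replacement witness cannot be sought inside the surgery region; this is precisely the reason for the protected set $W$. I would first observe that $W\cap\subtree(t,n)=\emptyset$: since $W$ consists of the basic and incomparable global witnesses together with all their ancestors, a node lies in $W$ iff it is a witness or an ancestor of one, so $n\notin W$ forces every descendant of $n$ out of $W$ as well; consequently all of $W$ survives inside $T_1$, with $\varphi$-types preserved by the inductive hypothesis. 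The key lemma to establish is then that incomparable-witness demands are always met within $W$: whenever $t,a\models\exists y\,\beta$ through a witness incomparable to $a$, there is such a witness already in $W$. I would prove this by case analysis on the position, relative to $a$, of the basic global witness $w_\sigma$ for the witness's $\varphi$-type $\sigma$ — if $w_\sigma$ is incomparable to $a$ it serves directly; if $w_\sigma$ is a descendant of $a$ then $a$ is an ancestor of a basic global witness and an incomparable witness for $a$ was explicitly thrown into $W$; and the remaining case, where $w_\sigma$ is an ancestor of $a$, is handled through the incomparable witnesses attached to $w_\sigma$. Granting this, an incomparable witness for $m$ (or for its image) can always be taken in $W\subseteq T_1$, and since comparability between two $T_1$-nodes is untouched by $\up$, the configuration principle closes the case in both directions. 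I expect the bookkeeping in this last case — verifying that a witness chosen in $W$ retains its incomparability to the source node after the relocation, and that the converse direction for witnesses landing in the inserted copy does not manufacture spurious incomparabilities — to be the most technical part of the argument.
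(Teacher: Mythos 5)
Your proposal is correct and follows essentially the same route as the paper's proof: the same simultaneous structural induction, the same use of the equalities of $\tp_\varphi$, $\anctype$ and $\sdesctype$ for witnesses comparable to the source node, and the same key lemma --- proved by the same three-case analysis on the position of the basic global witness --- showing that incomparable witnesses can always be drawn from the protected set $W$. The only additions are explicit statements of facts the paper leaves implicit, namely that the truth of $\beta(a,b)$ depends only on the two $\varphi$-types and the comparability relation, and that $W$ is disjoint from $\subtree(t,n)$ by ancestor-closure.
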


Applying the lemma above to the root of
$t$, which is necessarily in $T_1$, it  follows that the truth
of the sentence $\varphi$ is preserved by this operation.

We now iterate the procedure $t_{i+1}~ := ~ \up(t_i)$, until
no more updates are possible. This procedure terminates, because the tree decreases in the order $\prec$ every step. 
We can thus represent the tree as an exponential-sized DAG, with one node for each subtree.

Thus we have shown that any satisfiable formula has an exponential-size
DAG that unfolds into a model of the formula. Given
such a DAG, we can check whether an $\fotwo$  formula holds
in polynomial time in the size of the DAG.
This gives
a $\nexptime$ algorithm for checking satisfiability.

\section{Satisfiability without descendant} \label{sec:nodesc}



Recall that even on words with only
the successor relation, the satisfiability problem for two-variable logic  is
$\nexptime$-hard \cite{fo2_utl}. From this it is easy to see that the satisfiability for
$\fotwo[\vchild]$ is $\nexptime$-hard, on ranked and unranked trees.



\begin{theorem}
\label{fod}
The satisfiability problem for $\fotwo[\vchild]$ is $\nexptime$-hard,
even with the unary alphabet restriction.
\end{theorem}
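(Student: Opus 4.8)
The plan is to reduce from the word case. Recall from \cite{fo2_utl} that there is a polynomial-time reduction producing, from an instance of a $\nexptime$-complete problem, an $\fotwo$ sentence $\varphi_W$ over a word signature consisting of the successor relation together with unary predicates $P_1,\dots,P_k$ (with $k$ polynomial in the instance), such that $\varphi_W$ is satisfiable over finite words iff the instance is positive. As already noted, a finite word with successor is exactly a path in a tree under the child relation, so interpreting successor by $\parentof$ immediately yields $\nexptime$-hardness of $\fotwo[\vchild]$ over both ranked and unranked trees. The only remaining issue is that the positions of $\varphi_W$ carry arbitrary \emph{subsets} of $\{P_1,\dots,P_k\}$, which violates the unary alphabet restriction. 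My goal is therefore to transport $\varphi_W$ to a UAR tree while preserving satisfiability.

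First I would fix the intended shape of a UAR model: a ``caterpillar'' consisting of a \emph{spine} path $s_0, s_1, \dots$ simulating the word, where each spine node $s_j$ additionally has $k$ \emph{slot} children, the $i$-th slot recording whether $P_i$ holds at word position $j$. Under the UAR a node carries a single symbol, so I take a polynomial-size alphabet containing a symbol $\mathsf{sp}$ for spine nodes and symbols $\mathsf{slot}_i^{\top}, \mathsf{slot}_i^{\bot}$ for $i \le k$; the label of the $i$-th slot child thus encodes both its index and the truth value of $P_i$. Since the alphabet is allowed to grow with the instance, this is still a legitimate UAR signature.

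Next I would define a relativizing translation $(\cdot)^{*}$ from $\fotwo$ over the word signature into $\fotwo[\vchild]$ over this UAR signature. I replace each atom $P_i(x)$ by $\exists y\,(\parentof(x,y) \wedge \mathsf{slot}_i^{\top}(y))$, and I replace the successor atom relating a position to its neighbour by a step to the spine child, i.e.\ navigation from $s_j$ to $s_{j+1}$ via $\parentof$ restricted to $\mathsf{sp}$-labelled targets (the predecessor direction is available since $\fotwo$ can express the inverse of any atom). The key point to verify is that this stays within two variables: each replaced atom has a single free variable, and its internal quantifier rebinds $y$, shadowing any outer occurrence of $y$, so substituting it into a two-variable context and alpha-renaming as needed keeps every subformula two-variable. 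In parallel I would write a well-formedness sentence $\mathsf{wf}$ in $\fotwo[\vchild]$ asserting that the model is exactly a caterpillar: the root is a spine node, every spine node has precisely one spine child and exactly one slot child of each index with a defined truth value, and slot nodes are leaves. All of these are local, two-variable, child-only conditions.

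Finally I would argue equisatisfiability of $\varphi_W$ over words and $\mathsf{wf} \wedge \varphi_W^{*}$ over UAR trees. From a word model the gadget expansion produces a UAR tree satisfying $\mathsf{wf}$, on which an induction on subformulas shows $\varphi_W^{*}$ holds; conversely, any UAR tree satisfying $\mathsf{wf}$ is forced to be a caterpillar whose spine, read with each $P_i$ recovered from the slot children, is a word model of $\varphi_W$. Since the whole construction is polynomial, $\nexptime$-hardness transfers. I expect the main obstacle to be the faithful recovery of word successor using only $\parentof$ and no descendant: because each spine node now has several children, $\mathsf{wf}$ must pin down the unique spine child and rule out stray branching, and $(\cdot)^{*}$ must consistently distinguish spine steps from slot steps so that the translation neither loses nor invents successor edges.
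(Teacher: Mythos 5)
Your overall strategy---reduce from the word case of \cite{fo2_utl} and repair the unary alphabet restriction with a gadget---is genuinely different from the paper's. The paper proves Theorem~\ref{fod} by a direct reduction from tiling a $2^n\times 2^n$ grid: the intended model is a full binary tree of depth $2n$ whose \emph{branching} encodes the $x$- and $y$-coordinate bits level by level, with a tile label at the bottom; each node carries a single symbol, so the UAR is automatic and no path-shaped model ever needs to be enforced. That construction is deliberately branching-friendly, which is exactly where your reduction runs into trouble.

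The genuine gap is in your well-formedness sentence $\mathsf{wf}$. You need each spine node to have \emph{precisely one} spine child, but ``at most one child labelled $\mathsf{sp}$'' is not expressible in $\fotwo[\vchild]$: to assert that any two $\mathsf{sp}$-labelled children of $x$ coincide you must quantify over two children of $x$ simultaneously, which takes three variables (or a sibling relation, which $\vchild$ does not provide). ``At least one'' is fine, and the slot part of your gadget survives (with two variables you can forbid a node from having both a $\mathsf{slot}_i^{\top}$ and a $\mathsf{slot}_i^{\bot}$ child, and duplicate slot children of the same polarity are harmless), but the spine cannot be forced to be a path. Consequently the backward direction of your equisatisfiability claim fails as stated: a tree model of $\mathsf{wf}\wedge\varphi_W^{*}$ may have a branching spine, and you cannot simply read a word off one root-to-leaf branch, because the translated successor subformulas $\exists y\,(\childof(y,x)\wedge\mathsf{sp}(y)\wedge\cdots)$ may be witnessed by different spine children for different subformulas. (Your opening claim that word hardness ``immediately'' transfers to non-UAR trees by reading successor as $\childof$ has the same issue in embryo.) To rescue your approach you would need to show that the specific hard word formulas of \cite{fo2_utl} are robust under such branching---that a branching-spine model can be pruned or unravelled into a genuine word model---which is a substantive argument you have not supplied; the paper sidesteps it entirely by choosing an encoding in which branching is not only permitted but exploited.
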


We now present a matching upper bound, which holds even in the presence of sibling relations, i.e., for $\fotwo[\vnodesc]$.
The result is surprising, in that it is easy
to write satisfiable $\fotwo[\vchild]$ sentences $\varphi_n$ of polynomial
size whose 
smallest tree model is of depth exponential in $n$, and whose size is
doubly exponential. Indeed, such formulas can be obtained
as a variation of the proof of Theorem \ref{fod}, by coding a complete binary
tree whose nodes are associated with $n$-bit numbers, increasing
the number by $1$ as we move from parent to either child.

The result below relies on the fact that one can witness the
satisfiability of a~given formula by an exponential-sized DAG.

\begin{theorem}
\label{fodup}
The satisfiability problem for $\fotwo[\vnodesc]$, and the satisfiability problem with respect
to a rank schema,
are in
$\nexptime$, and hence are $\nexptime$-complete.
\end{theorem}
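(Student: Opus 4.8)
The lower bound is immediate: since $\vchild \subseteq \vnodesc$, Theorem~\ref{fod} already gives $\nexptime$-hardness (even under the unary alphabet restriction), so it suffices to prove the upper bound. The plan is to follow the DAG strategy of Theorem~\ref{fodescup}: start from a model of controlled depth, collapse it using a globally $\varphi$-preserving equivalence of exponential index, and thereby produce a witness with only exponentially many distinct subtrees, representable by an exponential-size DAG that can be guessed and verified in $\nexptime$.

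By Lemma~\ref{depthlemma} a satisfiable $\fotwo[\vnodesc]$ formula $\varphi$ has a model $t$ of depth and outdegree bounded by $2^{poly(|\varphi|)}$; fix such a $t$. As in the $\vdesc$ case I will define conditions under which the subtree at a node $n$ may be overwritten by the subtree at a node $n'$ while preserving the $\varphi$-type of every node. The key structural feature of $\vnodesc$ --- the absence of the descendant relation --- is what makes this cleaner than the $\vdesc$ argument: every atomic relation crossing the boundary of $\subtree(t,n)$ involves $n$ itself (the parent/child edge above $n$, or a sibling edge at $n$), so any navigation from outside into $\subtree(t,n)$ must pass through $n$ and hence factors through the single one-variable abstraction $\tp_\varphi(n)$. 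Consequently no analogue of $\sdesctype$ is required: there is no long-range descendant query that could pick a witness strictly inside $\subtree(t,n)$. The one genuinely global phenomenon that remains is the ``unrelated'' existential $\exists y\,\chi(y)$ (and its negatively constrained variants), whose witness may lie anywhere; these are handled exactly as in Theorem~\ref{fodescup} by a protected set $W$ that, for each $\varphi$-type $\tau$, retains a deepest node of type $\tau$ together with all its ancestors. Because $W$ is closed under taking ancestors, any $n \notin W$ has a subtree disjoint from these witnesses, so overwriting never destroys them; and $|W|$ is exponential.

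What must be preserved for nodes \emph{inside} the transplanted subtree is the information the context above and beside them contributes. Since $\vnodesc$ navigates upward only one step at a time and sees siblings through the transitive sibling relation, I replace the set $\anctype$ of the $\vdesc$ proof by an \emph{up-type} $u(n)$: the top-down $\fotwo$ profile recording the $\varphi$-type of the parent together with the sets of $\varphi$-types of the preceding and following siblings and the immediate-sibling types, folded recursively up to the root. Up-types take only $2^{poly(|\varphi|)}$ values. I then declare $n$ and $n'$ equivalent when $n, n' \notin W$, $\tp_\varphi(n)=\tp_\varphi(n')$, $u(n)=u(n')$, and $\subtree(t,n') \prec \subtree(t,n)$ for the well-founded order $\prec$ of Theorem~\ref{fodescup}, and set $\up(t)$ to overwrite $\subtree(t,n)$ by $\subtree(t,n')$. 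The preservation lemma (the analogue of Lemma~\ref{lem:collapsevdesc}) is proved by simultaneous induction on subformulas, the crux being that the full $\varphi$-type factors as the combination of a downward profile (fixed by the subtree) and an up-type (fixed by the context): the new root acquires the type obtained by combining the unchanged downward profile of the copied subtree with the up-type $u(n)=u(n')$, which equals $\tp_\varphi(n')=\tp_\varphi(n)$, so every node outside $\subtree(t,n)$ is unaffected, while an inside node keeps its type because its downward profile travels with the copied subtree and its up-type is rebuilt from $u(n)=u(n')$. Iterating $t_{i+1} := \up(t_i)$ terminates since $\prec$ strictly decreases, and the number of equivalence classes, bounded by the product of the numbers of $\varphi$-types and up-types, is exponential; hence the final tree has at most exponentially many distinct subtrees.

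Representing that tree as an exponential-size DAG whose nodes carry their (now consistent) $\varphi$-types, I verify in time polynomial in the DAG that the annotation is locally consistent --- each claimed existential witnessed through a child, sibling, or parent type, and each global or negatively constrained existential through a retained $W$-witness --- and that the root's type contains $\varphi$; this yields the $\nexptime$ procedure (guess the annotated DAG, verify). For satisfiability with respect to a ranked schema, and for UAR ranked trees as a special case, I take the product of the whole construction with the bottom-up tree automaton: the automaton state is appended to each node's abstraction (only a polynomial refinement, so the index stays exponential) and the overwrite additionally requires $n$ and $n'$ to reach the same state, keeping the run valid across the transplant. The main obstacle is defining $u(n)$ correctly for the child-plus-sibling signature and pushing the simultaneous induction through --- in particular reconciling one-step upward navigation with the transitive sibling relation, and establishing the downward/upward factorization of $\varphi$-types that the whole argument rests on --- together with checking that the annotated DAG can be model-checked for $\fotwo[\vnodesc]$ in polynomial time despite upward and sideways navigation.
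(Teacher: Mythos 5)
Your overall strategy matches the paper's: exponential depth and outdegree via Lemma~\ref{depthlemma}, a protected witness set $W$ handling ``unrelated'' existentials, an iterated subtree-overwrite driven by a well-founded order $\prec$, an exponential-size DAG, and a product with the automaton for the schema case. But there is a genuine gap at the heart of your argument: the ``up-type'' $u(n)$. You define it as the parent's $\varphi$-type together with sibling type-sets, \emph{folded recursively up to the root}, and you assert it takes only $2^{poly(|\varphi|)}$ values while simultaneously admitting that defining it correctly and proving the ``downward/upward factorization of $\varphi$-types that the whole argument rests on'' is the main unresolved obstacle. As literally described, the recursive fold along a path of exponential length over an exponential-size local alphabet has doubly exponentially many values, which would make the index of your equivalence doubly exponential and destroy the $\nexptime$ bound; and the factorization claim is never established. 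So the proposal is a plan with its crux missing rather than a proof.

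The fix is the observation you yourself make but do not exploit: in $\vnodesc$ every atomic relation crossing the boundary of $\subtree(t,n)$ involves $n$ or its parent/siblings. The paper therefore needs no recursive up-type at all: the overwrite of $\subtree(t,n)$ by $\subtree(t,n')$ is permitted whenever $n,n'\notin W$, $\subtree(t,n')\prec\subtree(t,n)$, $\tp_\varphi(n)=\tp_\varphi(n')$, and $\tp_\varphi(\mathrm{parent}(n))=\tp_\varphi(\mathrm{parent}(n'))$. The transplanted root inherits $n$'s siblings, but sibling existentials are one-variable subformulas already matched by $\tp_\varphi(n)=\tp_\varphi(n')$; the only boundary case left in the simultaneous induction is a parent witness for the transplanted root, covered by the parent-type equality; everything else is local to $T_1$ or to the copied subtree, or is an unrelated witness delegated to $W$. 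One further small discrepancy: the paper's $W$ also contains all \emph{children} of each basic witness, which is used to argue that a same-type replacement witness drawn from $W$ stands in no forbidden atomic relation to the node being served; your $W$ omits these, so that step of the case analysis would need repair.
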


We sketch the idea for satisfiability, which iteratively quotients the structure by an equivalence relation, while
preserving certain global witnesses, along the lines of Theorem \ref{fodescup}.
By Lemma \ref{depthlemma}
 we know that a
satisfiable $\fotwo[\vnodesc]$ formula~$\varphi$ has a model $t$ of 
depth at most exponential in $\varphi$, where
the outdegree of nodes is bounded by an exponential.

For each $\varphi$-type that is satisfied in $t$,
 choose a witness and include 
it along with all its ancestors in a set $W$ -- that is, we include the ``basic witnesses'' as
in Theorem \ref{fodescup}. We also include all children of each basic witness -- call these ``child witnesses''.

Thus the size of the set of ``protected witnesses'' $W$ is again at
most exponential.  Now we transform $t$ to $t'$ such that
$t' \models \varphi$ and at the same time $t'$ has only exponentially
many different subtrees.  Our update procedure looks for nodes $n, n'$
in $t$ such that
\begin{inparaenum}
\item $n,n' \not \in W$;
\item $\subtree(t,n') \prec \subtree(t,n)$, where $\prec$ is an appropriate ordering (as in Theorem \ref{fodescup});
\item $\tp_{\varphi}(n)=\tp_{\varphi}(n')$  and  $\tp_{\varphi}(\text{parent}(n))=\tp_{\varphi}(\text{parent}(n'))$.
\end{inparaenum}
We then obtain $t'=\up(t)$ 
by choosing such $n$ and $n'$
and  replacing 
$\subtree(t,n)$ by $\subtree(t,n')$.

The theorem is proved by showing that this update operation
preserves~$\varphi$. Iterating it until no two nodes can be found
produces a tree that can be represented as an exponential-size DAG.



\section{Conclusions} \label{sec:conc}
We have shown that the parallel between the complexity of $\fotwo$ satisfiability on general structures and
on restricted structures breaks down as we move from words to trees -- trees allow one to encode alternating
exponential time computation, leading to $\expspace$-hardness. On the other hand, we show that analogs
of the ``model shrinking'' methods for $\fotwo$ on words exist for trees, albeit using a different
shrinking
technique. In future work, we are extending the analysis to infinite trees, where we believe it can
be useful for analyzing branching time properties  of both non-deterministic and probabilistic systems,
as was done for linear time in \cite{BLW2}. We are also considering the case of
structures of fixed tree-width.

Our main complexity results on satisfiability
are summarized in Table \ref{thetable}, where in each case the  bound is tight.

\begin{center}
\begin{tabular}{|l|c| c|c|c|}
\hline
                                & $\fotwo$ & $\fotwo[\vdesc]$  & $\fotwo[\vnochild]$     & $\fotwo[\vchild]$ \\
\hline

All Trees          & $\expspace$ & $\expspace$  & $\expspace$ & $\nexptime$ \\
w.r.t. Ranked Schema         &  $\expspace$ & $\nexptime$  & $\expspace$ & $\nexptime$ \\
\hline
\end{tabular} \label{Complexity of satisfiability}
\label{thetable}
\end{center}

\bibliographystyle{alpha}
\bibliography{litb}
\newpage
\section*{More detail on the proof  of Lemma \ref{depthlemma} and Lemma \ref{collapse}}

We first give a detailed proof of the following statement from Lemma \ref{collapse}:

\emph{The equivalence relation $\equivfull$ is pathwise $\varphi$-preserving.}

Fix tree $t$
and  $n_0 \equivfull n_1$ lying on the same path $p$ in $t$, with $n_1$ a descendant of $n_0$.  Let $t'$ be formed
by overwriting $n_0$ with $n_1$, and 
$f$ be the mapping taking a node that lies in the subtree of  $n_1$ or outside
of the subtree of $n_0$ to its image in $t'$. By the ``collapsed part of $t$'' we refer to the part of $t$ not in the domain
of $f$. 

We prove  via structural induction that for every subformula $\rho$ of $\varphi$
and node $m$ in the domain of $f$ we have $t,m \models \rho \leftrightarrow t',f(m) \models \rho$.
The atomic cases and the boolean operators are clear, so existential quantification is
the only non-trivial case.

Consider  first a node $m$ in the bottom half of the non-collapsed
structure -- that is, in $\subtree(t,n_1)$  -- satisfying
 $\rho(x)=\exists y \beta(x,y)$.
By induction
we need consider only the case where
some node $w$ witnessing
that $m$ satisfies $\rho$ in $t$
is not in the domain of $f$. Fix such a witness node $w$.
We  show that we can find a node that satisfies the same one-variable subformulas
of $\rho$ that $w$ does, and which satisfies the same axis relations with respect to $m$ that
$w$ does. 

When the witness to the existential quantifier in $\rho$ is a parent of $m$, then we must have $m=n_1$.
Now we can apply the hypothesis that the $\varphi$-type of  $n_0$ is the same as the $\varphi$-type
of $n_1$, plus the induction hypothesis, to conclude that $f(m)$ must satisfy $\rho$.
The case in which the
witness $w$ is
a descendant  of $m$ or equal to $m$ need not be considered,
since such a witness must be in the domain of $f$, which is ruled out by assumption.
Now consider the case where some witness $w$ is an ancestor of $m$, but not a parent. Such a $w$ must be on the path $p$. 
In this case, we can use the fact that $\anctype(n_0)=\anctype(n_1)$ to argue that a~witness can be found.
Suppose there is a node $w$ witnessing that $t,m \models \rho(x)$
such that $w$ is  not an ancestor or a descendant of $m$.   Then we can apply
the fact that $\contexttype(n_0)=\contexttype(n_1)$ to find a witness $w'$ that is incomparable
of $n_0$, but still in the domain of $f$. Such a $w$ can be used (by induction) as a~witness that
$t', f(m) \models \rho$.

We now move to the case where  $m$ is in the top half of the non-collapsed
structure satisfying
 $\rho(x)=\exists y \beta(x,y)$. We are interested in the case
where all witnesses $w$ to the existential quantifier in $\rho$ are in the collapsed part of the
structure, and hence are not ancestors of $m$.

Suppose we have a witness that is not a descendant or ancestor of $m$. The witness
must be 
a descendant  of $n_0$, and $n_0$ must not be a descendant of $m$.  
We can apply again the fact that $\desctype(n_0)=\desctype(n_1)$ to find a witness $w'$ below
$n_1$, which will suffice by induction.

If the witness $w$ is in the collapsed part of $t$ and is a child of $m$, we must have $m=n_0$, and
hence we can use the fact that $\tp_\varphi(n_0)=\tp_\varphi(n_1)$ to get the desired witnessed.
Now suppose we have a witness $w$ in the collapsed part
of the structure, with $w$ a descendant of $m$ but not a child of $m$. Again, if $m=n_0$ we are done,
using the fact
that $\tp_\varphi(n_0)=\tp_\varphi(n_1)$.  If $m \neq n_0$, we must
have $m$ is a strict ancestor of $n_0$.
From $\desctype(n_0)=\desctype(n_1)$ we know that there is a descendant $w'$ of $n_1$ with the same $\varphi$-type
as $w$. Since $m \neq n_0$ $w'$ is not a child of $m$ in $t'$, and hence can serve as a witness.

The cases for the sibling axes are also straightforward, since no nodes in the domain of $f$ have their siblings
modified by the collapse mapping.

~
~

We now explain the variation of the argument for  the exponential bound on branching.
Note that $\navxp$ queries can already force exponential branching, and thus the result
does not follow directly via translation to modal tree logics.
In a nutshell, we use the same approach, but shrinking horizontal rather than vertical
paths.

{\bf Construction:} Consider  the equivalence relation that relates two nodes  if they have: 
\begin{compactitem}
\item the same $\varphi$-types that occur as left-siblings, and the same $\varphi$-types
that occur as right-siblings
\item  the same $\varphi$-types of nodes that are descendants of right-siblings, and similarly for left-siblings
\item the same $\varphi$-types, and the same $\varphi$-types immediately to the right and immediately to the left
\end{compactitem}
Recall that  the right-sibling relation is the transitive closure of the immediate right-sibling relation, and similarly
for left-sibling.
Note that the first two items change only exponentially many times, and on an interval
where they are both constant, the third item takes on only exponentially many values.

We now claim that any sufficiently long horizontal path can be pruned.
Fix a~horizontal path $p$ containing all children of some node. If $p$
is sufficiently long, there is some equivalence class $C$ that
has more than one node in it.
Let $n'$ be the left-most (lowest in sibling order) element of $C$, and $n$ the element
of $C$ that is closest to it on the right.
Let $t'$ be obtained by removing all subtrees of nodes between $n'$ and $n$, including
the subtree of $n$ but not the subtree of $n'$.

{\bf Correctness:}
Let $f$ be the function taking a node in $t$ that was not removed by the operation
above (for short ``non-removed node'') to its image in $t'$.
As usual, we proceed by showing that $\varphi$-types are preserved in moving from a~node $m$ to $f(m)$.
As before, the only important case is the inductive step for $\rho(x)=\exists y \beta(x,y)$, with
the non-trivial direction being to show that if $\rho$ holds at $t,m$ then it holds in $t', f(m)$.
Suppose $m$ satisfies $\rho$, with witness $w$. 
The interesting case is when $w$ is a removed node, 
which means it must either
be a right-sibling of $n'$ that was removed or
 below a right-sibling of $n'$ that was removed.
We do case analysis on the relationship of $w$ to $m$.

{\bf Case of Incomparable Witnesses:}
If $w$  is incomparable to $m$ by both the sibling and ancestor relations, then we consider
several subcases. 

The first subcase is where  $m$ is ``below a node in $p$'' -- that is, a descendant of some node
on $p$. Let  $n''$ be the node of $p$ that is an ancestor of $m$.

We further consider the  subsubcase where the sibling $n''$ is to the right of $n$.
If $w$ is a right-sibling of $n'$, then it was a left-sibling of $n$ or is equal to $n$, since these are the 
siblings that are removed. In the first case,
 it must be that $n'$ has a left-sibling $w'$ with the
same $\varphi$-type as $w$. Since $m$ is ``down and to the right'' (that is, below a right-sibling) of $n'$, 
$w'$ is incomparable to $m$, and thus
such a~$w'$  can be used as a witness that
$t', f(m) \models \rho$. Similarly, in the case that $w$ was equal to $n$, $n'$ can be used as a witness.
If $w$ is below a right-sibling of $n'$, it must be that $n'$ has a left-sibling that has
a descendant with the
same $\varphi$-type, and this can be used as a witness.

The paragraph above completes the subsubcase where $n''$ is to the right of $n$.
If $n''$ is to the left of, or is equal to, $n'$, we argue symmetrically,
but considering the $\varphi$-types that are right-siblings or descendants of right-siblings
 of $n$.

The subcase where $m$ is itself a sibling of $n$ is similar to the above, except
$w$ can not be a sibling of $m$, and hence one subcase does not need to be considered.

The final subcase is where   $m$ is not  on $p$  and is not a descendant of a node in $p$.
Note the assumption that  $w$ is incomparable to $m$ and  removed during the collapse process,
and hence $w$ lies
below a node on the horizontal path $p$. This  implies
that $m$ can not be an ancestor of the nodes in $p$.
If  $w$ is
a sibling of a~node in $p$ that was removed, we can use any non-removed sibling of $n'$ with
the same $\varphi$-type as a witness (there are at least two such nodes, to the left and right). Similarly if $w$ is
 below a sibling of  a removed node of $p$, we use any non-removed node that has the $\varphi$-type of $w$
and which is a descendant of a node on $p$.

{\bf Other cases:}
The case where the witness $w$ is a descendant of $m$ is similar to the last subcase above. In this
case, $m$ must be an ancestor of the nodes on $p$. Again, if $w$ is a sibling of $n$,
we can choose a sibling with the same $\varphi$-type. If $w$ is a descendant of a sibling,
we can choose a descendant of a sibling with the same $\varphi$-type.

We now turn to the case where $w$ is an immediate left-sibling of $m$. In this case
we must have $m=n$, and we can use the fact that $n$ and $n'$ have the same 
$\varphi$-type for their immediate left-siblings. The case where $w$ is an immediate right-sibling of $m$ is analogous.

The case where $w$ is a following-sibling but not the next-sibling, or a preceding-sibling
but not the previous-sibling, is handled similarly to  above.

Iterating this pruning process gives the required branching bound.

\section*{Proof of Theorem \protect{\ref{navxpbound}}}

Recall the statement:

\emph{The satisfiability of a $\navxp$ filter $\varphi$ over trees
of bounded depth~$b$ is in $\pspace$ (in $b$ and $|\varphi|$).
}

It can be awkward to work with $\navxp$, since one has to switch back between two- and one- variable
formulae. For simplicity, we work with 
 a temporal logic $\utl$ for trees analogous to
Unary  Temporal Logic on words,
introduced in \cite{xpathutl}.
Formulas $\varphi$ are given by:
\[ \varphi \; ::= \; P_i \;|\; \varphi \wedge \varphi \;|\;  \neg \varphi \;|\;
 \LTLdiamond_* \varphi \;|\; \LTLdiamondminus_* \varphi
\;|\; \LTLcircle_* \varphi \;|\; \LTLcircleminus_* \varphi \]
where $*$ stands for either a child ($\ch$) relation or a next-sibling relation ($\ns$).
Informally
$\LTLdiamond_\ch \varphi$ is ``eventually along a vertical path $\varphi$ holds'', $\LTLdiamondminus_\ch$ is ``up the vertical path to the root'', 
$\LTLcircle_{\ch}$ is ``in some child'' and
$\LTLcircleminus_\ch$ ``in the parent''. The variants for $\ns$ are defined similarly for horizontal paths.
The semantics of $\utl$ with respect to a tree $T$ and node $s$ is given as
a variant of the standard semantics for linear temporal logic on words.
For example $(T, s) \models P_i \iff s \mbox{ has label } \mathbf{P_i}$. The boolean operators
have their usual recursive definition.
$(T, s) \models \LTLcircle_{\ch} \varphi \iff \exists s' \text{ such that } s' ~ \childof ~ s \text {  and } (T, s') \models \varphi$, and similarly for the other next state modalities.

The above semantics maps a formula to a set of nodes in a tree.
For a tree $t$, we say
$t \models \varphi$ to mean $(t, n_0) \models \varphi$ where $n_0$ is the root.

\cite{xpathutl} shows that $\navxp$  can be translated in polynomial time into $\utl$.

We give a non-deterministic $\pspace$ algorithm that constructs a witness tree for $\varphi$, materializing
only the rightmost branch of the tree. As an abstraction of this branch the algorithm guesses
all the $\varphi$-types of nodes appearing on the path to the root, along with auxiliary information
about whether a node is the last child of its parent, and which  subformulas of the
form $\LTLcircle_{\ch} \psi$ and $\LTLdiamond_{\ch} \psi$ have been satisfied.

We require all guessed types to be internally consistent, and to satisfy certain consistency properties.
Additionally, we require $\varphi$ to be in the type of the root.

Now we show how to check the consistency for all temporal subformulas. 
\begin{enumerate}
\item Subformulas $\LTLcircleminus_{\ch} \psi$ and $\LTLdiamondminus_{\ch} \psi$ are the easiest to check, because for each node we 
have already guessed all its ancestors.
\item When we extend a path downward (corresponding to guessing the type of the initial child),
we require that all subformulas $\LTLcircleminus_{\ns} \psi$ are false and that the truth value 
of $\LTLdiamondminus_{\ns} \psi$ is equivalent to truth value of $\psi$. 
When  we move from a leaf $l$ of a path
 to its sibling, we enforce that the new type contains $\LTLdiamondminus_{\ns} \psi$ if $l$ contains it,
and that it contains $\LTLcircleminus_{\ns} \psi$ iff $l$ contains $\psi$.
\item When we move to a sibling of $l$, 
if $l$ contains $\LTLcircle_{\ns} \psi$, we ensure that the type of the newly-created sibling contains $\psi$.
For  $\LTLdiamond_{\ns} \psi$, we guess that its sibling contains $\psi$ or $\LTLdiamond_{\ns} \psi$.
If we guess that a leaf is the rightmost sibling, we check that its type does not contain $\LTLdiamond_{\ns} \psi$.
\item For subformulas $\LTLcircle_{\ch} \psi$ and $\LTLdiamond_{\ch} \psi$, we mark whether they have already been satisfied by
some prior descendant. If not, we decide when we extend the path whether or not they will be satisfied on the new child, and
guess the type accordingly. When we move from a leaf $l$ to its sibling, we require that every such formula that was in $l$ has
been marked as satisfied.
\end{enumerate}

\section*{Proof of Theorem \protect{\ref{fulllower}}}

Recall the statement:

\emph{The satisfiability problem for $\fotwo$ is $\expspace$-hard, and the
same holds for $\fotwo[\vdesc]$.}

We first give the argument for $\fotwo$.  We reduce from the problem
of determining whether an alternating $\exptime$ Turing Machine $T$
accepts a given input $I$.  Without loss of generality we assume that
each configuration of $T$ has exactly two successors.  We can also
assume that for an input of size $n$, the computation of $T$ takes at
most $2^n$ steps and therefore uses at most $2^n$ tape cells.  We give
a polynomial time transformation that takes $T$ and machine input $I$,
returning an $\fotwo$ formula $\varphi$ which is satisfiable if and
only if $T$ accepts $I$.

We encode each tape configuration as a sequence of $2^n$ nodes with one node per cell. Each cell will have a label encoding:
\begin{itemize}
\item the tape symbol written on the cell
\item the time step (or ``index'') of the configuration, encoded in $n$ bits $c_1, c_2, \ldots c_n$
\item the cell position encoded in $n$ bits $p_{1}, p_{2}, \ldots p_{n}$
\item the control state of the Turing Machine
\item the last alternation choice, which is either $\wedge$ or $\vee$ 
\item whether the head of the Turing Machine is present
\end{itemize}

The computation of $T$ will be described by a tree of tape
computations starting with an initial configuration.  Intuitively the
formula $\varphi$ will force the shape of the tree to match that of
the computation tree for $T$.  In more detail, an
$\wedge$-configuration will be represented in the tree by a path of
$2^n$ nodes that terminates in a node with two children, each of which
is the root of a successor configuration.  On the other hand an
$\vee$-configuration is represented by a path of $2^n$ nodes that
terminates in a node with a single child, which is the root of a
single successor configuration.  The vocabulary of the formula will
have predicates for the presence or absence of the Turing machine
head, the alternation choice, the tape alphabet symbols, and
predicates indicating which of $c_1, \ldots c_{n}, p_1, \ldots p_n$
hold.

Now we discuss in more detail the parts of $\varphi$ that will ensure the structure described above. 
The
tree should  have as root a node whose index is a vector of zeros for the values of $c_1, \ldots c_n, p_1, \ldots p_n$,
 after which we need to increase the number represented by this vector by one for each child node. Within the same configuration
the latter can be easily enforced by the following formula:

$$ \forall x \, \forall y\, (y \mathrel{\childof} x) \rightarrow
\bigvee_i ( \neg p_i(x) \wedge p_i(y) \bigwedge_{j<i} p_j(x) \leftrightarrow p_j(y) \wedge \bigwedge_{j>i} p_j(x) \wedge \neg p_j(y)
)
$$

We can use the predicates $c_i$ and $p_i$ (and formulas similar to the one above) to determine whether two
nodes  $x$ and $y$ corresponding to tape cells in a configuration of $T$ represent the same, previous or next position within
the same configuration,  or whether they are in the same, previous, or next configuration.
For example, two nodes that represent successive configurations in a single thread of a  machine will
need to be in the $\descof$ relation, and will have configuration co-ordinates that are in a successor relation, which
will be enforced as above, but using the $c_i$ rather than the $p_i$.

To encode the alternation, we need to enforce
that the shape of a node is consistent with the type of  the current configuration, in terms of whether
the  state is universal or existential. 
For example, if we have a universal state $q$ and a transition to control states $q_1$ and $q_2$,
after the last cell of the configuration we will enforce that there is a child whose control state is $q_1$
and another child whose control state is $q_2$.


We have a formula $\psi(x,y)$ that checks the consistency of the tape cells represented by nodes $x$ and $y$ that
are in a descendant relationship (and hence represent the same thread in the alternating computation).
If $x$ and $y$ point to the same cell position in  consecutive configurations then we need the content of $x$, $y$ and their adjacent cells to be consistent with the transition function of $T$, the position of the head, the current state, the cell symbols and the alternation type 
($\wedge$ vs $\vee$).  

The enforcement that the input is on the tape initially, and that an acceptance state is reached at each leaf, can similarly be
easily enforced.


\myparagraph{Extension of the argument from $\fotwo$ to $\fotwo[\vchild]$}
In the proof above we  use only the $\descof$ and $\childof$ relations. 
We now show  how to avoid $\childof$. The key is that we do not need consecutive positions within the same configuration
to occur in a parent child relationship. Along any thread, we can
uniquely identify via the predicates $c_1 \ldots c_n$ and $p_1 \ldots p_n$. 
We can thus consider nodes  correspond to consecutive positions in the same configuration 
using these predicates, while using $\descof$ to restrict to nodes within the same thread.

We will enforce that
\begin{itemize}
\item each descendant of any node has a larger  configuration index
\item each node (except the first) has an ancestor whose configuration address is smaller by one
\item each  node is either a representative of the last configuration in its thread (i.e. with maximal configuration index)
or it has a descendant whose configuration index is higher by one
\end{itemize}
We have similar requirements for the position indices for the same configuration.

\section*{Proof of Theorem \ref{tlsat}}

Recall the key statement:

\emph{Every satisfiable $\dsfnavxp$ sentence has a model of polynomial depth.}

Again, since it is more convenient to deal with one-variable formula than a mix of two- and one-variable
as in $\navxp$, we 
will prove this for the  modal tree logic formed from $\utl$ by
removing the child and parent modalities (but including
the next- and previous- sibling modalities). Call the resulting
language $\tl$.

Consider a satisfiable $\tl$ formula $\varphi$, a tree $t$ satisfying $\varphi$, and a path $p$ in $t$.
We will shrink $p$ to polynomial size without impacting $\varphi$, and iterating this process we can 
achieve polynomial depth.
Once we achieve polynomial depth, we can use Theorem \ref{navxpbound} to get a $\pspace$ bound.

The \emph{vertical $\varphi$-type} of $n$ is defined as the collection of subformulas of $\varphi$ of the form $\LTLdiamond_{\ch}  \psi$ or
$\LTLdiamondminus_{\ch} \psi$ that hold at $n$, along with the formula $a(x)$ where
$a$ is the label of $n$.

The following lemma generalizes an obvious fact about the usual stutter-free temporal logic on words:
\begin{lemma} \label{lem1}
There are polynomially many (in $|\varphi|$) vertical $\varphi$ type changes along any path $p$.
\end{lemma}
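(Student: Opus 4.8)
The plan is to split the vertical $\varphi$-type into its \emph{diamond component} --- the set of tracked subformulas $\LTLdiamond_{\ch}\psi$ and $\LTLdiamondminus_{\ch}\psi$ that hold --- and its \emph{label component} $a(x)$. I would bound the number of positions at which the diamond component changes by a monotonicity argument, and the number of positions at which only the label changes by a stutter-invariance argument applied block-by-block. As in the text I work with $\tl$, so the only vertical modalities are the transitive-closure diamonds $\LTLdiamond_{\ch}$ and $\LTLdiamondminus_{\ch}$ and there is no vertical ``next'' modality --- which is exactly what makes the stutter argument available, and is why the bound can cover the label component rather than just the diamonds.

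First I would observe that each tracked subformula changes truth at most once along the root-to-leaf path $p=p_1\ldots p_{max(p)}$. Moving down from $p_i$ to its child $p_{i+1}$, the set of strict descendants can only shrink, so $\LTLdiamond_{\ch}\psi$ (``$\psi$ holds at some descendant'') is monotone non-increasing going down: once false it stays false. Dually, the set of ancestors only grows, so $\LTLdiamondminus_{\ch}\psi$ is monotone non-decreasing going down. Hence each of the at most $|\varphi|$ vertical-diamond subformulas flips at most once, the diamond component changes at most $|\varphi|$ times, and $p$ is partitioned into at most $|\varphi|+1$ maximal \emph{blocks} on which the whole diamond component is constant.

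The remaining --- and main --- work is to bound the label changes inside a single block $B$. On $B$ every tracked diamond is constant; moreover each $\LTLdiamond_{\ch}\psi$ that holds on $B$ is already witnessed below the bottom node of $B$, and each $\LTLdiamondminus_{\ch}\psi$ that holds on $B$ is witnessed above the top node of $B$ (otherwise it would flip inside $B$). Consequently the truth of every subformula at a node of $B$ is a function of that node's own label alone --- the tree analogue of stutter-invariance of next-free temporal logic. An interior node of $B$ is therefore \emph{needed} only as a witness: for some $\LTLdiamond_{\ch}\psi$ demanded by a node above $B$, or some $\LTLdiamondminus_{\ch}\psi$ demanded by a node below $B$. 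Since there are at most $|\varphi|$ such subformulas, retaining the two endpoints of $B$ together with one witness for each leaves only $O(|\varphi|)$ nodes; I would contract $B$ to these, \emph{promoting} the hanging off-path subtrees of each deleted node onto a retained node of the same label (as in the promotion of Lemma \ref{lem:smallwitnes} and Figure \ref{fig:promote}). Each block then shows at most $O(|\varphi|)$ label changes, so the vertical $\varphi$-type changes $O(|\varphi|^2)$ times in total; combined with Theorem \ref{navxpbound} this yields the polynomial-depth and $\pspace$ conclusions of Theorem \ref{tlsat}.

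The step I expect to be the main obstacle is verifying, by induction on subformulas, that this contraction genuinely preserves $\varphi$. The vertical cases are controlled by the constant-diamond property above, but the \emph{sibling} modalities require care: deleting path nodes must not destroy $\LTLdiamond_{\ns}$- or $\LTLcircle_{\ns}$-witnesses, which is precisely why whole hanging subtrees are promoted rather than discarded, and why each retained witness is chosen to carry the same label (hence the same behaviour on every subformula) as the deleted node it replaces. Arranging the promotion to respect sibling order, so that next-/previous- and following-/preceding-sibling facts at surviving nodes are unchanged, is the delicate bookkeeping that the full proof of Theorem \ref{tlsat} carries out.
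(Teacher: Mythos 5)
Your first paragraph-and-a-half is, almost word for word, the paper's entire proof of Lemma~\ref{lem1}: moving down $p$ the set of descendants shrinks and the set of ancestors grows, so each $\LTLdiamond_{\ch}\psi$ can flip only from true to false and each $\LTLdiamondminus_{\ch}\psi$ only from false to true, hence each of the at most $|\varphi|$ tracked subformulas changes value at most once along $p$. The paper stops there: it makes no attempt whatsoever to bound changes of the label component, i.e.\ it reads (and proves) the lemma as a bound on changes of the collection of vertical diamonds only.

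The second half of your proposal --- the part you call the main work --- has a genuine gap, and in fact cannot be repaired. Lemma~\ref{lem1} quantifies over an \emph{arbitrary} tree $t \models \varphi$ and an \emph{arbitrary} path $p$ in it; you cannot establish such a statement by contracting blocks and promoting subtrees, since surgery produces a \emph{different} tree. At best your argument would show that \emph{some} model has few label changes per block, which is (essentially) the depth bound of Theorem~\ref{tlsat} --- the very result this lemma is an ingredient of --- so as a proof of the lemma it is both a category error and circular. Indeed, no argument can bound label changes in a fixed model: take $\varphi = P_1$ and a model consisting of a single path whose labels alternate $P_1, P_2, P_1, \dots$; the label-inclusive vertical type changes at every step, independently of $|\varphi|$. (Your instinct that the label component is problematic thus points at real sloppiness in the statement, but the remedy is to read the lemma as diamond-only, not to prove more.) Separately, the claim on which your contraction rests --- that on a diamond-constant block the truth of every subformula is a function of the node's label alone --- is false in $\tl$: subformulas $\LTLcircle_{\ns}\psi$ and $\LTLdiamond_{\ns}\psi$ at a node of the block depend on that node's siblings and their subtrees, which vary along the block among nodes of equal label. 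Promotion of hanging subtrees is harmless in Lemma~\ref{lem:smallwitnes} precisely because that lemma concerns $\fotwo[\vdesc]$, where no sibling relations exist; importing it into $\tl$ alters sibling sets, which is exactly what these modalities observe. The correct move is simply to do what the paper does: prove the monotonicity bound for the diamond component and stop.
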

\begin{proof}
Consider a path $p$ of $T$ and a node $n$ of $p$. If $n  \not\models \LTLdiamond_{\ch}  \psi$, then in all subsequent nodes $n'$ in the path, 
$n' \not\models \LTLdiamond_{\ch} \psi$. Similarly if $n  \not\models \LTLdiamondminus_{\ch} \psi$, then in all previous nodes $n'$ in the path, $n' \not\models \LTLdiamondminus_{\ch} \psi$.
We therefore have that these subformulas change their truth assignment at most once in $p$.
\end{proof}

We are now ready to prove the polynomial depth bound.
Consider any (downward) path $p$ in the tree. 
By Lemma \ref{lem1}, there are polynomially many vertical type changes along a path.

Consider a maximal interval of $p$ all of whose nodes have the same vertical type, and let $n_{\topn}$ and $n_{\bottomn}$
be the first (highest) and last (lowest) nodes of the interval.
Now consider the tree $t'=t(n_{\bottomn} \rightarrow n_{\topn})$ constructed by overwriting
 $n_{\topn}$ with $n_{\bottomn}$.

Let $f$ be
the partial function taking nodes in $t$ that are not removed to their images in $t'$.

As with all of our collapse operations, our goal is to show:
\begin{claim} For any subformula $\rho$ of $\varphi$ and node $m$ in the domain of $f$, we have that 
$t, m \models  \rho \leftrightarrow t', f(m) \models \rho$.
\end{claim}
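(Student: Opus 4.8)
The claim is the standard type-preservation statement underlying the collapse operation: overwriting $n_{\topn}$ by $n_{\bottomn}$ (where both lie on the same vertical interval with identical vertical $\varphi$-type) preserves the truth of every subformula at every surviving node. The plan is to prove this by structural induction on $\rho$, with the base case ($\rho = P_i$) and boolean cases being immediate since $f$ preserves labels and commutes with boolean connectives. The whole content lies in the modal cases, and I would split these into the two families: the sibling modalities $\LTLcircle_{\ns}, \LTLcircleminus_{\ns}, \LTLdiamond_{\ns}, \LTLdiamondminus_{\ns}$, and the vertical modalities $\LTLcircle_{\ch}, \LTLcircleminus_{\ch}, \LTLdiamond_{\ch}, \LTLdiamondminus_{\ch}$.

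**Sibling modalities.** I would dispose of these first because they should be essentially free. The key observation is that no node in the domain of $f$ has its set of siblings altered by the overwrite operation: the surgery replaces an entire subtree hanging off the path, so for any surviving node $m$, its horizontal neighbours in $t'$ are exactly the images of its horizontal neighbours in $t$. Hence $f$ restricts to an isomorphism on each horizontal path through a surviving node, and by the induction hypothesis the sibling modalities transfer verbatim in both directions. I would state this as a single uniform observation rather than four separate case analyses.

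**Vertical modalities --- the main obstacle.** This is where the argument lives, and I expect the case of $\LTLdiamond_{\ch}\psi$ and $\LTLdiamondminus_{\ch}\psi$ at the seam node to be the crux. Since $t'$ is formed by overwriting $n_{\topn}$ with $n_{\bottomn}$, the descendants of a surviving node $m$ in $t'$ can differ from its descendants in $t$ precisely because the stretch of the path strictly between $n_{\topn}$ and $n_{\bottomn}$ (together with the subtrees hanging off it) has been excised. For the ``eventually downward'' modality $\LTLdiamond_{\ch}\psi$, the only genuine difficulty is an $m$ lying at or above $n_{\topn}$ whose sole witness $\psi$-node in $t$ fell inside the removed stretch; here I would invoke the defining hypothesis that $n_{\topn}$ and $n_{\bottomn}$ carry the \emph{same} vertical $\varphi$-type, so in particular $n_{\topn} \models \LTLdiamond_{\ch}\psi$ iff $n_{\bottomn} \models \LTLdiamond_{\ch}\psi$, and since every node on the interval shares this type, the witness survives below $n_{\bottomn}$. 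The symmetric hypothesis on $\LTLdiamondminus_{\ch}\psi$ handles witnesses that were ancestors lying in the removed stretch, for nodes $m$ at or below $n_{\bottomn}$. The ``next/parent'' modalities $\LTLcircle_{\ch}, \LTLcircleminus_{\ch}$ require only that the immediate parent-child adjacency at the splice point is correct: the parent of $n_{\bottomn}$'s image in $t'$ is $n_{\topn}$'s parent, and since these two nodes have identical vertical type (hence the same label and the same one-step vertical formulas up to the induction hypothesis on $\psi$), the adjacency formulas transfer. I would organise the whole vertical analysis as a case split on whether $m$ lies strictly above $n_{\topn}$, on the interval, or strictly below $n_{\bottomn}$, checking in each region that the relevant witness either survives directly or is recovered via the equality of vertical types at the two endpoints.

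**Closing.** Finally I would note that applying the preserved biconditional to the root yields $t' \models \varphi$, and that iterating the collapse across all polynomially-many vertical-type intervals on every path drives the depth down to polynomial, at which point Theorem~\ref{navxpbound} supplies the $\pspace$ bound.
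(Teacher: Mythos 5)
Your overall skeleton --- structural induction, sibling modalities dispatched uniformly on the grounds that the surgery does not disturb horizontal neighbourhoods of surviving nodes, and the $\LTLdiamond_{\ch}/\LTLdiamondminus_{\ch}$ cases resolved by routing the lost witness through the equality of vertical $\varphi$-types at $n_{\topn}$ and $n_{\bottomn}$ --- is exactly the paper's argument, and those parts are fine.

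The genuine problem is your treatment of $\LTLcircle_{\ch}$ and $\LTLcircleminus_{\ch}$. These modalities are not in the language: $\tl$ is obtained from $\utl$ precisely by deleting the child and parent modalities (correspondingly, $\dsfnavxp$ has no $\childax$ or $\parentax$ axis), so no subformula $\rho$ of $\varphi$ has this form and the case is vacuous. Their absence is not incidental --- it is exactly what makes the logic ``stutter-free'' and the interval collapse sound. Worse, the argument you offer for them is false: the vertical $\varphi$-type records only the node label and the $\LTLdiamond_{\ch}/\LTLdiamondminus_{\ch}$ subformulas, so agreement of $n_{\topn}$ and $n_{\bottomn}$ on it gives no control over what holds one step across the splice. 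Concretely, the parent of $f(n_{\bottomn})$ in $t'$ is the old parent of $n_{\topn}$, which lies outside the maximal interval and hence in general has a different vertical type from $n_{\bottomn}$'s parent in $t$; so $\LTLcircleminus_{\ch}\psi$ would not be preserved at $n_{\bottomn}$, and dually $\LTLcircle_{\ch}\psi$ would not be preserved at the parent of $n_{\topn}$. Indeed, were these modalities admitted the theorem itself would fail, since one-step vertical modalities can force exponential-depth models (compare the discussion preceding Theorem~\ref{fodup}). This case should be deleted, not argued. One smaller wrinkle, which you share with the paper's own write-up: the assertion that no surviving node has its siblings altered is not literally true of $n_{\bottomn}$ itself, whose image inherits $n_{\topn}$'s siblings; since you simply mirror the paper here I flag it only as a point that would need tightening, not as a defect of your proposal relative to the paper.
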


Thus performing this operation on every interval shrinks $p$ without impacting $\varphi$, and iterating
over all $p$ gives the depth bound. 
We prove this by induction on  $\rho$.
Atomic propositions and boolean combinations are immediate.

We begin by considering $\rho = \LTLdiamond_{\ch} \psi$.
If $t, m \models \rho$ then there is a node $w$ below $m$ satisfying $\phi$ in $t$. 
If $w$ is in the domain of $f$, we are done
by induction, so assume $w$ is a descendant of $m$ that is not in the domain of $f$. Thus
$w$ is also a descendant of $n_{\topn}$. Since $n_{\topn}$ has the same downward-type as $n_{\bottomn}$,
$n_{\bottomn}$ has a descendant satisfying $\rho$, and this can be used as a witness.
In the other direction, assume $t', f(m) \models \rho$. There must therefore be a path of nodes in $t$ starting with $m$ leading
to a node $w'$ where $\psi$ holds, and
$w'$ must be of the form $f(w)$ for $w$ in $t$. By induction
$w$ can be used as a witness that $t, m \models \rho$.
A similar argument  holds for $\rho = \LTLdiamondminus_{\ch} \psi$. 

Note that the sibling nodes of a given node $m$ in the domain of $f$ are not impacted by the overwrite
operation.  Using
this it is easy to see that the induction cases for the sibling axes
(e.g. $\rho=\LTLdiamond_{\ns} \psi$) go through.

This completes the proof of the claim.
Iterating the claim gives the proof of the first part of the theorem.


\section*{Proof of Theorem \ref{desconlyuar}}

Recall the statement:

\emph{The satisfiability of $\fotwo[\vdesc]$ with the 
unary alphabet restriction is $\nexptime$-hard.  }
\begin{proof}
We make use of a standard $\nexptime$-complete problem, tiling an exponential
sized grid \cite{tilepaper}.

The input consists of a number $n$ (in unary),
a set $C = \{1, \ldots, k\}$ of  colours, and a vertical and horizontal constraint 
$V,H \subset C \times C$.
A tiling is a mapping $f: \{1, 2, \ldots 2^n\} \times \{1, 2, \ldots 2^n\} \rightarrow C$, and a solution to the tiling
problem consists of a tiling
such that the vertical and horizontal constraints are satisfied.

Our formula will have in its signature
predicates 
\[
\zerox_1,\onex_1, \ldots, \zerox_n, \onex_n, \zeroy_1,\oney_1, \ldots, \zeroy_n, \oney_n
\]
 representing bits in the binary representation of the $x$- and $y$-coordinates
of a grid position, along with predicates $C_1 \ldots C_k$ for the colours, and finally a predicate $r$ for the root.
We code a tiling $f$ by a tree consisting of branches  of depth $2n+2$ for each grid position $ \{1, 2, \ldots 2^n\} \times \{1, 2, \ldots 2^n\}$.
If $f(x,y)=c$ then the branch will consist of a root, followed by  $n$ nodes, where the $i^{th}$ is labelled with $\zerox_i$ if the $i^{th}$ bit
of $x$ is $0$ and is labelled with $\onex_i$ otherwise. The branch will then have $n$ nodes coding the $y$-coordinate, labelled with
$\zeroy_i$ or $\oney_i$, and finally a leaf labelled with $c$.
Our $\fotwo[\vdesc]$ formula $\varphi$ will
describe the encoding of a~valid $T$-tiling $f$.
It will include conjuncts enforcing the shape above:
\begin{compactitem}
\item There is a node with no ancestors labelled $r$, and this node
has a descendant labelled with $\zerox_1$ and another descendant labelled $\onex_1$.
\item Any node  with label $\zerox_i$ or $\onex_i$ for $i<n$ has a descendant labelled with  $\zerox_{i+1}$ and another with $\onex_{i+1}$,  
 such a node has no descendants labelled with $\zerox_j, \onex_j$ for $j<i$.
\item  Any node   with label $\zerox_n$ or $\onex_n$ has descendants labelled with  $\zeroy_{1}$ and another with $\oney_1$,  and 
 has no descendants labelled with $\zerox_j, \onex_j$ for $j<n$.
\item Any node  with label $\zeroy_i$ or $\oney_i$ for $i<n$ has descendants labelled with  $\zeroy_{i+1}$ and another with $\oney_{i+1}$,  and
 all its descendants are labelled with $\zerox_j, \onex_j$ for $j \geq i$ or with $c \in C$.
\item   For any node   with label $\zeroy_n$ or $\oney_n$, there is some $c \in C$   such that
$n$ has a descendant labelled  $c$ and no descendants with labels other than $c$.
\item Nodes labelled with $c \in C$ are leaves.
\end{compactitem} 

One can then write a formula $\text{SAME-X}(x,y)$ that checks whether two leaf nodes have the same $x$-coordinate:
\[
 \text{SAME-X}(x,y) = \bigwedge_i ((\exists y \, y \mathrel{\ancof} x \wedge \zerox_i(y) ) \leftrightarrow
(\exists x \, x \mathrel{\ancof} y \wedge \zerox_i(x) ))
\]

In the same way we can define $\text{SAME-Y}(x,y)$ to check whether two nodes agree on their $y$-coordinate, and
$\text{PLUS-X}(x,y)$,  $\text{PLUS-Y}(x,y)$ to check whether two nodes represent consecutive $x$- and $y$-coordinates, respectively.

The formulas above still allow the possibility of   many branches with the same co-ordinates but different colors, but
this can be enforced by the following formula, where $\text{LEAF}(x)$ states that $x$ is a leaf:
\[
\forall x \, \forall y \, (\text{LEAF}(x) \wedge \text{LEAF}(y) \wedge \text{SAME-X}(x,y) \wedge c(x)) \rightarrow c(y)
\]
The vertical and horizontal constraints can be enforced in the usual way given the formulas described above. For example:
\[
\forall x \, \forall y \, (\text{LEAF}(x) \wedge \text{LEAF}(y) \wedge \text{SAME-X}(x,y) \wedge \text{PLUS-Y}(x,y) \wedge c(x)) \rightarrow 
\bigvee_{(c,c') \in V} c'(y)
\]
Conjoining these sentences gives an $\fotwo[\vdesc]$ sentence  that holds on UAR trees iff a tiling exists.
\end{proof}

\section*{Proof of the polynomial alternation bound (Lemma \ref{lem:weistree}) }

Recall the statement of Lemma~\ref{lem:weistree}:

\emph{ Consider an $\fotwo[\vdesc]$ formula $\psi$ over unary
  predicates in $\Sigma$, and a~tree~$t$ satisfying the UAR.  For any
  symbol $a \in t$, and any root-to-leaf path $p=p_1 \ldots
  p_{max(p)}$ in $t$, the set $p(\psi,a):=\{i \mid  t, p_i \models \psi
  \wedge a(x)\}$ is made up of at most $ |\psi|^2$ $a$-intervals
  (i.e., intervals in the set $\{ i \mid t,p_i \models a(x)\}$.)}

The result relies on the following combinatorial lemma, which is
adapted from the argument in Lemma 2.1.10 of Weis \cite{WeisPhd}.
Analogously to the terminology above, given a word $w =w_1 \ldots
w_{max(w)}$ and a symbol $a$, by an $a$-interval we mean an interval 
in the set of positions in $w$ that have label $a$.

\begin{lemma}  \label{lem:count}
Consider a word $w$, a symbol $a$, formulas $\varphi_i(x):i \leq r$, and
$L,U$  functions that assign each boolean valuation  of the 
$\varphi_i(x)$ to positions of $w$.
Let $\beta$ be a positive boolean combination in
propositions $P_1 \ldots P_j$ and consider
the set 
\begin{align*}
J(w):=\{j \in w \mid w(j)=a \wedge (w, j) \models & \beta(\varphi_1, \ldots \varphi_r) \wedge\\ & (j \geq L(\comb(j)) \vee j<U(\comb(j)))
 \}
\end{align*}
where $\comb(j)$ is the boolean valuation of $\varphi_i:i \leq r$
induced by $j$ in $w$.  Suppose that for each $i \leq r$ the set of
position of $w$ labelled with $a$ satisfying $\varphi_i$ consists of
at most $|\varphi_i|^2$ $a$-intervals. Then the number of endpoints of
$a$-intervals comprising $J(w)$ is at most $4+2(\Sigma_i
|\varphi_i|)^2$.
\end{lemma}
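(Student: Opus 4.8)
The plan is to prove the combinatorial Lemma~\ref{lem:count} by carefully bounding how many times the boolean structure of the defining condition can flip as we scan the word $w$ left to right, restricting attention to the positions labelled $a$. I would first observe that the set $J(w)$ is defined by three conjuncts: the label condition $w(j)=a$, a positive boolean combination $\beta(\varphi_1,\ldots,\varphi_r)$, and a ``threshold'' condition $(j \geq L(\comb(j)) \vee j < U(\comb(j)))$ that depends only on $j$ through the valuation $\comb(j)$ of the $\varphi_i$. The key quantity to control is the number of endpoints of the maximal $a$-intervals comprising $J(w)$, and each such endpoint must coincide with a point where at least one of these three conjuncts changes truth value among $a$-positions.

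\emph{Main counting argument.} The hypothesis tells us that for each $i$, the $a$-positions satisfying $\varphi_i$ form at most $|\varphi_i|^2$ $a$-intervals, hence contribute at most $2|\varphi_i|^2$ endpoints. First I would argue that, restricted to $a$-positions, the combined valuation $\comb(j)$ takes a bounded number of distinct ``stable'' stretches: since $\beta$ is a \emph{positive} boolean combination, its truth value among $a$-positions can only change where some $\varphi_i$ changes, so $\beta$ contributes at most $\sum_i 2|\varphi_i|^2$ candidate endpoints. The threshold condition is the delicate part: for a fixed valuation $v=\comb(j)$, the predicate $j \geq L(v) \vee j < U(v)$ is a union of at most two half-lines and so introduces at most a constant number of additional flips \emph{per valuation}; but since the valuation itself is only piecewise constant along the $a$-positions, I would bound the total contribution by charging each threshold flip to the interval of constancy of $\comb$ in which it occurs.

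The cleanest route to the stated bound $4 + 2(\sum_i |\varphi_i|)^2$ is to use the elementary inequality $\sum_i |\varphi_i|^2 \leq (\sum_i |\varphi_i|)^2$, which lets me collapse the per-$\varphi_i$ contributions into a single squared term, and then add the constant $4$ to absorb the boundary effects coming from the two endpoints of the word itself together with the at-most-two half-lines of the threshold condition. Concretely, I would tally: at most $2(\sum_i |\varphi_i|)^2$ endpoints arising from changes in the $\varphi_i$ (and hence in $\beta$ and in $\comb$), plus at most $4$ extra endpoints from the global threshold and word boundaries, giving the claimed total.

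\emph{Expected obstacle.} I anticipate the main difficulty is handling the threshold term $(j \geq L(\comb(j)) \vee j < U(\comb(j)))$ correctly, precisely because $L$ and $U$ are \emph{functions of the valuation} $\comb(j)$ rather than fixed constants: a naive count would multiply the number of valuations by the number of half-line flips and blow up the bound. The resolution is to notice that the relevant predicate is ``monotone up to a single valuation-dependent shift,'' so that within any maximal stretch on which $\comb$ is constant, the threshold condition can switch on or off at most a bounded number of times; combined with the fact that there are only boundedly many such stretches of constancy (controlled by the $\varphi_i$-flips already counted), this keeps the total additive rather than multiplicative. Getting this charging argument airtight, so that the constant stays at $4$ and the squared term does not acquire an extra factor, is where the care is needed. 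Once Lemma~\ref{lem:count} is established, Lemma~\ref{lem:weistree} follows by instantiating it along the root-to-leaf path $p$, taking the $\varphi_i$ to be the relevant subformulas of $\psi$ and translating ancestor/descendant quantification into the threshold functions $L$ and $U$, with the $|\psi|^2$ bound emerging from the induction on subformula structure.
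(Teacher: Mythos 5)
Your proposal correctly isolates the threshold conjunct $(j \geq L(\comb(j)) \vee j < U(\comb(j)))$ as the crux, but the charging argument you sketch does not deliver the stated constant, and the failure is exactly where you anticipated it. Charging each threshold flip to the stretch of constancy of $\comb$ in which it occurs gives at most two flips per stretch; since the number of such stretches is itself of order $\sum_i |\varphi_i|^2$ (one more than the number of $a$-positions where some $\varphi_i$ changes), the threshold term contributes roughly $2 + 4\sum_i|\varphi_i|^2$ endpoints, not the ``at most $4$ extra'' you claim. Tallying honestly, your count comes to about $4 + 6\sum_i|\varphi_i|^2$, which already exceeds $4 + 2(\sum_i|\varphi_i|)^2$ in the case $r=1$, where there are no cross terms to rescue you. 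This is not a cosmetic loss: the precise factor $2$ in front of the square is what lets the induction in Lemma~\ref{lem:weistree} close with the invariant ``at most $|\psi|^2$ intervals'' (the final tally there is $4 + 2(\sum_i|\varphi_i|)^2 + 2\sum_i|\rho_i|^2 \leq 2|\psi|^2$), and a constant-factor loss compounds across quantifier nesting levels, destroying the polynomial bound.

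The missing idea is the amortization the paper adapts from Weis. Fix $u$ and consider the left boundaries $F_u$ of the interval decomposition of $w(\varphi_u,a)$ that fall between two consecutive boundaries $c<d$ coming from the \emph{other} formulas. All these candidate positions share the same valuation, and once the valuation is fixed the set of positions satisfying $x < U(\comb(x))$ is downward closed; hence among all candidates in $[c,d)$ at most \emph{one} can actually produce a new boundary of $J(w)$ via its threshold point (the paper's first claim), and a candidate that produces a left threshold boundary cannot itself be a boundary of $J(w)$ (the second claim). This yields $\sum_{i\in F_u - G}|C(i)| \leq |F_u| + \min\{|F_u|,|F_{\bar u}|\} \leq |\varphi_u|\cdot\sum_i|\varphi_i|$, and summing over $u$, together with the symmetric count for right boundaries, gives the $2(\sum_i|\varphi_i|)^2$ term, with the word endpoints and the extremal threshold values accounting for the additive $4$. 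Without an argument of this shape showing that most threshold points are not in fact boundaries of $J(w)$, your proof does not establish the lemma as stated.
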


We first show how Lemma \ref{lem:weistree} follows from Lemma
\ref{lem:count}. We proceed by induction. The base step follows using
the UAR, since for the predicate $b(x)$ the set $p(b,a)$ is either
empty or a single $a$-interval.  The cases for the boolean operations
are routine.

In the induction step for existential quantification, we consider a
formula $\psi(x)= \exists y \delta(x,y)$, where $\delta(x,y)$ is:
\[
\beta(x ~ \descof ~ y, x=y, x ~ \ancesof ~ y, x ~\incomp ~ y, \varphi_1, \ldots \varphi_r, \rho_1, \ldots \rho_s)
\]

We can assume $\beta$ is normalized to be a disjunction
of formulas $\beta_\descof$, $\beta_\ancesof$, $\beta_\incomp$, $\beta_=$,
where  $\beta_\descof(x,y)$ implies $y  ~ \descof ~ x$, and similarly for the others.
Thus in turn $\psi$ is the disjunction of  
$\psi_\descof, \psi_\ancesof, \psi_\incomp, \psi_=$ where
$\psi_R$ existentially quantifies over $\beta_R$.

For a boolean valuation $\sigma$ of the $\varphi_i$'s, and for
a relation $R$ in $\descof$, $\ancesof$, $\incomp$, $=$, we
let $\delta(\sigma,R)(y)$ be the formula obtained from $\delta(x,y)$
by replacing all $\varphi_i(x)$ in
$\delta$ by true
or false according to $\sigma$, formula $R(x,y)$ by true, and all other binary formulas
by false.

Fixing a root-to-leaf path $p=p_1 \ldots p_{max(p)}$ in tree $t$ (that is, where 
$p_1$ is the root, $p_{max(p)}$ a leaf), and $\sigma$ a boolean valuation of the $\varphi_i$'s
let:
\begin{compactitem}
\item $L_\incomp(\sigma)$ represent the smallest $i$ such
that
\[\exists n \in t \cdot n ~ \incomp ~p_i  \wedge t,n \models \delta(\sigma, \incomp)(y) 
\]
\item $U_\descof (\sigma)$ represent the largest $i$ such
that
\[\exists n \in t \cdot n ~ \descof ~ p_i  \wedge t,n \models \delta(\sigma, \descof )(y) 
\]
\item $L_\ancesof (\sigma)$ represent the smallest $i$ such
that
\[\exists n \in t \cdot n ~ \ancesof ~ p_i  \wedge t,n \models \delta(\sigma, \ancesof )(y) 
\]
\end{compactitem}

Unwinding the definitions, we can check that a node $p_j$ in the path $p$ within $t$ satisfies $\psi$
exactly when, letting $\sigma(j)$ be the boolean
valuation of the $\varphi_i$'s such that $t,p_j \models \varphi_i(x)$, we have either:
\begin{compactitem}
\item 
$j \leq U_\descof(\sigma(j))$ (thus $p_j$ 
has a witness to $\delta(\sigma(j), \descof)$, and hence a~witness to $\psi$ which is  a descendant).
\item  $j \geq L_\incomp(\sigma(j))$ (thus $p_j$ has a witness to $\psi$
that is incomparable to it).
\item  $j \geq L_\ancesof(\sigma(j))$ ($p_j$ has a witness to $\psi$
which is  an ancestor).
\item $t,p_i \models \psi_=(x)$, where $\psi_=$ is defined above.
\end{compactitem}

Restricting attention to $\psi_\descof \vee \psi_\ancesof \vee \psi_\incomp$,
we can apply Lemma \ref{lem:count} above,
letting $L(\sigma)$ be the  max of $L_\incomp(\sigma)$ and
$L_\ancesof(\sigma)$ and $U(\sigma)$ be  $U_\descof (\sigma)+1$.

We thus get that the number of boundary points of $a$-intervals
comprising $p(\psi_\descof \vee \psi_\ancesof \vee \psi_\incomp,a)$ is
at most $4+2 (\Sigma_i |\varphi_i|)^2$.

The boundary points of $p(\psi_=,a)$ are those of the $p(\rho_i,a)$,
and applying the induction hypothesis to these,
we get a bound on the number of endpoints of intervals comprising $p(\psi,a)$ 
as 
\[ 4+2 (\Sigma_i |\varphi_i|)^2 + 2 \Sigma_i |\rho_i|^2 \]
which is bounded by
$2 \cdot |\psi|^2$. Thus the number of intervals is bounded
by $ |\psi|^2$.
This completes the proof of Lemma \ref{lem:weistree}.

We now proceed to the proof of Lemma \ref{lem:count}.

We follow the approach of Lemma 2.1.10 of \cite{WeisPhd} and focus on
the modifications of the two main claims used in the proof of that
lemma.  For a formula $\psi(x)$ and letter $a$, let $w(\psi,a) = \{
i \in w : w,i \models \psi(x)\wedge a(x)\}$.

For $u \leq r$, let $F_u$ be the set of left boundaries of
$a$-intervals that comprise $w(\varphi_u,a)$, and let $G_u$ be the set
of right interval boundaries, where (by convention) we take the
decomposition into $a$-intervals of $w(\varphi_u,a)$ to be such that the
boundary points are labelled with $a$, the right (upper) boundary is
not part of $w(\varphi_u,a)$ but the left boundary is in $w(\varphi_u,a)$.
Let $F$ and $G$ be the total set of left and right interval boundaries
of $S$, and let $H = F \cup G \cup \{1, ||w||+1\}$.

Consider each interval $I$ defined by two consecutive elements of
$H$. The truth values of the $\varphi_i$ are constant on such an
interval, thus the truth value of $\varphi$ on positions $j$ in
this interval is determined by where $j$ is relative to $L(\comb(j))$
and $U(\comb(j))$.  Let $C$ be $H$ unioned with all points of the form
$L(\comb(j))+1$ or $U(\comb(j))$.

For a right (upper) interval boundary $d$ in $H$, we let $q(d)$ be the
point $L(\comb(j))+1$ for $j$ in the interval (all such points agree on
$\comb(j)$) to the left of $d$, if such a point exists; $q(d)$ is
undefined otherwise.  For a left (lower) interval boundary $c$ in $H$,
we let $p(c)$ be the point $U(\comb(j))$ to the right of $c$ within the
interval, if it exists, and let $p(c)$ be undefined otherwise.  We let
$P(c)=p(c)$ exactly when $p(c)$ is a right boundary point of
$J(w)$ -- that is, an $a$-labelled position lying outside of the
set, with the $a$-position immediately below it lying in the set. Let
$p(c)$ be undefined otherwise.  Similarly let $Q(c)=q(c)$ when $q(c)$
is a left boundary point of $J(w)$.

Let $F_{\bar{u}}$ be the union over all $F_v$ with $v \neq u$, and
define $G_{\bar{u}}$ analogously.

\begin{claim} Given $c$ and $d$ consecutive 
interval boundaries from $F_{\bar{u}}$,
there is at most one $i \in F_u \cap [c,d)$ with
$P(i) \neq \emptyset$.
\end{claim}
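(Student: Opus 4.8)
The plan is to prove the claim by contradiction, exploiting a monotonicity property of the valuations $\comb(\cdot)$ that holds on the window $[c,d)$. The key structural observation is this: since $c$ and $d$ are consecutive elements of $F_{\bar u}=\bigcup_{v\neq u}F_v$, no $\varphi_v$ with $v\neq u$ has a left $a$-interval boundary strictly inside $(c,d)$; as each such $\varphi_v$ can only switch from true to false at its right boundaries there, the set $\{v\neq u : (w,j)\models\varphi_v\}$ is non-increasing as $j$ runs through $(c,d)$. Thus along this window the only formula whose truth value may oscillate is $\varphi_u$ itself, and every valuation recorded just to the right of a point of $F_u\cap[c,d)$ has $\varphi_u$ true together with a nested (shrinking) set of ``other'' true formulas.

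Next I would unpack what $P(i)\neq\emptyset$ means. For $i\in F_u\cap[c,d)$, writing $\sigma_i$ for the valuation $\comb$ on the $H$-interval immediately to the right of $i$, the point $p(i)=U(\sigma_i)$ lies strictly inside that $H$-interval, and $P(i)=p(i)$ only when $U(\sigma_i)$ is a genuine right boundary of $J(w)$: the previous $a$-position is in $J(w)$ and $U(\sigma_i)$ is not. In particular this forces $\beta(\sigma_i)$ true and $U(\sigma_i)<L(\sigma_i)$, so that the $a$-positions carrying valuation $\sigma_i$ that lie \emph{out} of $J(w)$ form the single contiguous block $[U(\sigma_i),L(\sigma_i))$. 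Suppose now, for contradiction, that two points $i_1<i_2$ of $F_u\cap[c,d)$ both have $P$ defined. Because $i_2$ is itself a boundary in $H$, the $H$-interval to the right of $i_1$ ends no later than $i_2$, so $U(\sigma_{i_1})<i_2<U(\sigma_{i_2})$; in particular the two right boundaries are distinct, hence $\sigma_{i_1}\neq\sigma_{i_2}$, and by the monotonicity above the ``other'' part of $\sigma_{i_1}$ strictly contains that of $\sigma_{i_2}$.

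The heart of the argument, and the step I expect to be the main obstacle, is to derive a contradiction by tracking membership in $J(w)$ from $U(\sigma_{i_1})$ rightward to just past $i_2$. At $U(\sigma_{i_1})$ we are out of $J(w)$, while just to the right of $i_2$ (valuation $\sigma_{i_2}$, below $U(\sigma_{i_2})$) we are back in. I would follow the unique out-block description $[U(\sigma),L(\sigma))$ valuation by valuation across the intervening $H$-intervals, in each of which $\varphi_u$ is either constantly on or has turned off while the other-valuation has only shrunk, and show that the re-entry into $J(w)$ before $i_2$ together with the second out-block-start at $U(\sigma_{i_2})$ cannot be reconciled with there being a single out-block per valuation. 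The delicate point is that several $H$-intervals (with $\varphi_u$ toggling and other formulas switching off) can sit between $i_1$ and $i_2$, so the case analysis must carefully combine the fact that all the relevant valuations share $\varphi_u=\text{true}$ at the points just right of each $i_j$ with the fact that their ``other'' components are nested.

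Finally I would note that the symmetric statement for the left boundaries of $J(w)$, namely at most one $i\in G_u\cap[c,d)$ with $Q(i)$ defined, using the right-boundary set $G_{\bar u}$, is proved by the mirror-image argument. Summing the two bounds over all gaps and over $u\leq r$, together with the interval bounds $|F_u|=|G_u|\leq|\varphi_u|^2$, then controls the number of interior boundary points of $J(w)$ and feeds into the overall count $4+2(\Sigma_i|\varphi_i|)^2$.
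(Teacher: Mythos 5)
Your setup is sound as far as it goes: the monotonicity of the valuations on $(c,d)$ (no boundaries from $F_v$, $v\neq u$, so the set of true $\varphi_v$ with $v\neq u$ can only shrink going right), the unpacking of $P(i)\neq\emptyset$ as forcing $\beta(\sigma_i)$ true and $U(\sigma_i)<L(\sigma_i)$, and the deduction $U(\sigma_{i_1})<i_2<U(\sigma_{i_2})$, hence $\sigma_{i_1}\neq\sigma_{i_2}$, are all correct. But the proposal stops exactly where the claim has to be proved: you announce that the contradiction will come from ``tracking membership in $J(w)$'' across the intervening $H$-intervals, flag that step as the main obstacle, and never carry it out. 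Moreover, the route you sketch does not appear to close. The ``single contiguous out-block $[U(\sigma),L(\sigma))$'' fact is a statement about the positions carrying one \emph{fixed} valuation $\sigma$; once you have concluded $\sigma_{i_1}\neq\sigma_{i_2}$, the two out-blocks $[U(\sigma_{i_1}),L(\sigma_{i_1}))$ and $[U(\sigma_{i_2}),L(\sigma_{i_2}))$ live on different valuations, and nothing in the hypotheses prevents $J(w)$ from leaving the set just after $i_1$, re-entering at the $H$-point $i_2$ (re-entries occurring at points of $H$ are not charged to any $P$ or $Q$), and leaving again at $U(\sigma_{i_2})$. So no contradiction is forced by the single-out-block-per-valuation structure alone.

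The paper closes the argument quite differently: it asserts that any two points $j<i$ of $F_u\cap[c,d)$ satisfy $\comb(j)=\comb(i)$, so that $p(j)$, if it existed, would have to be the single point $U(\comb(i))=p(i)$, which lies in the $H$-interval above $i$ and hence strictly beyond the $H$-interval above $j$; the interval above $j$ then lies entirely inside $J(w)$ and $P(j)=\emptyset$. Observe that your own (correct) monotonicity analysis delivers only the containment $\comb(i)\subseteq\comb(j)$ from the hypothesis that $(c,d)$ avoids $F_{\bar u}$; equality additionally requires that no $\varphi_v$ with $v\neq u$ turns \emph{off} between $j$ and $i$, i.e.\ that the window also avoids $G_{\bar u}$. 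In other words, the valuation-equality step on which the whole claim turns is precisely the step your proposal identifies as delicate and then leaves open. As written, the proposal is not a proof of the claim, and the fallback plan for the missing step does not succeed without importing that equality (or some substitute for it) from somewhere.
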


\begin{proof}
Suppose there is $i \in F_u \cap [c,d)$ with $P(i) \neq \emptyset$ and
consider another $j \in F_u \cap [c,d)$ with $j<i$.  Since the
interval $[c,d)$ contains no left interval boundaries besides the ones
from $F_u$, and since $i$ and $j$ are both in $F_u$, and hence are
both in $w(\varphi_u,a)$, we conclude that every $\varphi_k:k\leq r$
that holds in the interval starting from $i$ also holds at the
interval starting from $j$.  Thus $\comb(j)=\comb(i)$.  If $p(j)$ is a
right boundary point of $J(w)$, it must be that the positions
immediately below it are in the set $J(w)$, and thus these
positions must satisfy $x<U(\comb(x))$. Once truth values for the
$\varphi_k:k \leq r$ are fixed (and hence $\comb(x)$ is fixed), the
positions satisfying $x<U(\comb(x))$ are closed downwards.  Note that
$i<p(i)$, by definition of $p(i)$, and therefore we must have that $i$
and $j$ both satisfy $x<U(\comb(x))$. Combining with the fact that $i$
and $j$ agree on $\varphi_k:k \leq r$, we see that the interval above
$j$ agrees on $J(w)$ with the interval above $i$, and thus $P(j)$
must be empty.
\end{proof}

Let $C(i)$ be the set of boundary points contributed by $i$: namely $P(i)$ if it exists, $Q(i)$
if it exists, and also $i$ if it is a boundary point of $J(w)$.

\begin{claim}  Given $c$ and $d$ consecutive interval boundaries from $F_{\bar{u}}$,
and $i \in F_u \cap [c,d)$ with $i \not \in G$, $Q(i) \neq \emptyset$.
Then  we
have $i \not\in C(i)$.
\end{claim}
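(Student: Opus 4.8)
The plan is to argue entirely locally at the position $i$, exploiting the fact that only \emph{one} valuation change happens there. Write $i^{-}$ for the $a$-labelled position immediately below $i$ (it exists, since $Q(i)\neq\emptyset$ forces a non-empty $\comb$-interval to the left of $i$), and let $\comb'=\comb(i^{-})$ and $\comb=\comb(i)$ be the valuations of the $\varphi_k$ on the $H$-interval just below $i$ and on the one beginning at $i$. I would first record the two structural consequences of the hypotheses. Because $i\in F_u$, the formula $\varphi_u$ switches on at $i$, so $u\in\comb\setminus\comb'$; and because $i\notin G$, no $\varphi_v$ switches off at $i$, so no $v$ lies in $\comb'\setminus\comb$. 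Hence $\comb'\subseteq\comb$: the valuation only grows as we cross $i$.

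Next I would unpack $Q(i)\neq\emptyset$. By definition $q(i)=L(\comb')+1$ is declared a left boundary of $J(w)$ only when it is an out-to-in transition produced by the ``$j\geq L$'' clause; this forces a non-empty out-region, i.e. $L(\comb')>U(\comb')$, so that the $a$-positions just below $L(\comb')$ lie in $[U(\comb'),L(\comb'))$ and are excluded from $J(w)$, and it places every $a$-position of the $\comb'$-interval from $q(i)$ up to $i^{-}$ in the in-region. In particular $i^{-}\in J(w)$ and $i^{-}\geq L(\comb')$, and since $\beta$ is a conjunct of the membership condition, $\beta(\comb')$ holds. Now I use positivity: $\beta$ is a positive combination of the $\varphi_k$, hence monotone in the valuation, so $\beta(\comb')$ together with $\comb'\subseteq\comb$ gives $\beta(\comb)$. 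For the threshold disjunct I would invoke the monotonicity of the cut points under the inclusion order on valuations, which is the analogue across the valuation change at $i$ of the within-valuation downward-closure of $\{x : x<U(\comb(x))\}$ used in the previous claim: growing the valuation only enlarges the in-region, so $\comb'\subseteq\comb$ yields $L(\comb)\leq L(\comb')$ (and $U(\comb)\geq U(\comb')$). In the application this is exactly the monotone behaviour of $L=\max(L_\incomp,L_\ancesof)$, which is decreasing, and $U=U_\descof+1$, which is increasing. Combining $L(\comb)\leq L(\comb')\leq i^{-}<i$ gives $i\geq L(\comb)$, so the ``$j\geq L$'' disjunct holds at $i$; with $\beta(\comb)$ this shows $i\in J(w)$.

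Finally I conclude: since both $i^{-}$ and $i$ lie in $J(w)$, the position $i$ is neither a left boundary of the $a$-interval decomposition of $J(w)$ (which would need $i^{-}\notin J(w)$) nor a right boundary (which would need $i\notin J(w)$). Thus $i$ is not a boundary point of $J(w)$, so it contributes no ``$i$'' term, i.e. $i\notin C(i)$. The main obstacle is precisely the threshold step: unlike the previous claim, where $i$ and $j$ share a single valuation and one needs only downward closure of the $U$-condition at that fixed valuation, here $i^{-}$ and $i$ carry the two \emph{different} valuations $\comb'$ and $\comb$, so the argument rests on the monotonicity of the cut functions $L$ (and $U$) under the inclusion order, a property forced by the positivity of $\beta$ and $\delta$. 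Pinning this monotonicity down, and checking that $Q(i)\neq\emptyset$ really places us in the case $i^{-}\geq L(\comb')$ rather than $i^{-}<U(\comb')$, is the delicate point; the rest is bookkeeping with the $a$-interval conventions.
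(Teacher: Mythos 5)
Your argument tracks the paper's own proof of this claim almost step for step: both use $i \in F_u$, $i \notin G$ to conclude that the valuation of the $\varphi_k$ only grows across $i$; both use $Q(i) \neq \emptyset$ to place the $a$-positions from $q(i)$ up to the one just below $i$ inside $J(w)$; and both then transfer membership in $J(w)$ across the valuation change at $i$, so that $i$ is an interior point of an interval of $J(w)$ and hence contributes no boundary point. You are also right to single out the transfer across the valuation change as the delicate step.

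The difference is in how you try to close that step. You invoke a monotonicity property of the cut functions --- $L(\comb) \leq L(\comb')$ and $U(\comb) \geq U(\comb')$ whenever $\comb' \subseteq \comb$ --- and attribute it to positivity of $\beta$ and $\delta$. That justification does not hold up. In Lemma \ref{lem:count} the functions $L$ and $U$ are \emph{arbitrary} assignments of positions to valuations, with no monotonicity hypothesis; and in the application the formula $\delta(x,y)$ is a general boolean combination in which the $\varphi_k(x)$ may occur negatively, so $\delta(\sigma,R)$, and hence $L_{\incomp}(\sigma)$, $L_{\ancesof}(\sigma)$ and $U_{\descof}(\sigma)$, need not vary monotonically with $\sigma$. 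Only the outer combination $\beta$ is assumed positive; that correctly gives you $\beta(\comb)$ from $\beta(\comb')$, but it says nothing about $L$ and $U$. To be fair, the paper's own proof performs the identical transfer with a bare ``thus'' --- it asserts that $L(\comb(x))<x$ propagates from the positions above $Q(i)$, which carry the valuation $\comb'$, to the positions at and above $i$, which carry $\comb$ --- and offers no more justification than you do. So you have faithfully reconstructed the argument, including its one genuinely unsupported step; but the specific appeal to positivity of $\delta$ is not a valid way to discharge it, and if you want a complete proof you would need either to add a monotonicity hypothesis on $L,U$ to the lemma and verify it in the application, or to rework the accounting so that a boundary of $J(w)$ at $i$ itself is charged elsewhere.
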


\begin{proof}
Fix $c,d,i$ as in the claim.  Since $i \not \in G$, $i$ is not a right
interval boundary of any set $p(\varphi_j,a)$, and therefore the
$\varphi_j$ that are true at the interval ending at $i$ are also true
at the interval starting at $i$. Furthermore $Q(i) \neq \emptyset$
implies that $L(\comb(x))<x$ holds for $x$ above $Q(i)$, and thus will
hold for all $a$-labelled positions sharing $\comb(i)$ above $i$.  Thus
$i$ cannot be a boundary point for $J(w)$, and therefore
$i \not \in C(i)$.
\end{proof}

The rest of the argument follows that in \cite{WeisPhd} precisely.

The above two claims imply that for every $i \in F_u \cap [c,d)-G$ except
possibly one element, $C(i)$ is either empty,
contains the single element
$Q(i)$, or contains only $i$. At the one exceptional element $C(i)$ could consist of at most two
elements,  $P(i)$ and either $Q(i)$ or $i$ (but not both, by the second claim).

Therefore, 
$\bigcup_{i \in F_u \cap [c,d)-G}$ has at most $|F_u \cap [c,d)|+1$ elements.
Unioning over all intervals $[c,d)$ we get
\[
\Sigma_{i \in F_u-G} |C(i)| \leq 
 \Sigma _{c \in F_{\bar{u}}} (|F_u \cap [c,d)|+1)
= |F_{\bar{u}}| + |F_u|
\]
Using again the fact that each $C(i)$ contains at most two elements (see above), we also
know
$\Sigma_{i \in F_u-G} |C(i)| \leq 2 \cdot |F_u|$, and thus:
\[
\Sigma_{i \in F_u-G} |C(i)| \leq |F_u| + \mathit{min} \{ |F_u|, |F_{\bar{u}}|\}
\]
Since for each $j$, the number of intervals, and hence
the number of left endpoints of intervals,  is assumed to be at most $|\varphi_j|^2$, and using
that the sum of squares is less than the square of a sum we get:
\begin{align*}
\Sigma_{i \in F_u-G} |C(i)| \leq |\varphi_u|^2 +\mathit{min}\{  |\varphi_u|^2, (\Sigma_{i \neq u} |\varphi_i|)^2\} \\
\leq |\varphi_u|^2 + |\varphi_u| \cdot \mathit{min}\{ | \varphi_u| , \Sigma_{i \neq u} |\varphi_i| \}\\
\leq  |\varphi_u|^2 + |\varphi_u| \cdot  \Sigma_{i \neq u} |\varphi_i| \\
=  |\varphi_u| \cdot \Sigma_{i} | \varphi_i|
\end{align*}
By a symmetric argument we get
\[
\Sigma_{i \in G_u-F} ~ |C(i)| \leq |\varphi_u| \cdot \Sigma_{i} |\varphi_i|
\]

Now the total number of boundary points for $J(w)$ is at most the
endpoints of the path, the highest value of $U$ and the lowest value
of $L$, plus the union over $i$ of $C(i)$. Thus we have that the total
number is at most:
\[
4+ \Sigma_u 2 \cdot |\varphi_u| \cdot  \Sigma_{i} |\varphi_i|
\leq 4 + 2 \cdot (\Sigma_i |\varphi_i|)^2
\]
This completes the proof of Lemma \ref{lem:count}.


\section*{Proof of Theorem \ref{fod}}

Recall the statement:

\emph{The satisfiability problem for $\fotwo[\vchild]$ is $\nexptime$-hard,
even with the unary alphabet restriction.}

\begin{proof}
Clearly, the UAR has no impact, since $n$ predicates on a single
node can be simulated by considering the labels of the $n$ nearest ancestors.

We reduce from tiling a $2^n$ by $2^n$ grid with tiles $T_1 \ldots T_m$ in such
a way to satisfy a given vertical constraint $V$ and horizontal constraint $H$.
We let $\Sigma_n$ be an alphabet with  symbols $\zerox_1, \onex_1, \ldots ,\zerox_{n}$, $\onex_n, \zeroy_1$, $\oney_1 \ldots, \zeroy_{n}$, $\oney_n, T_1 \ldots T_m$.
Consider trees in which: nodes at level 
 $i \leq n$ are labelled with 
  $\zerox_i$ or $\onex_i$,  each node of level $i \leq n-1$ has both a $\zerox_{i+1}$ and an $\onex_{i+1}$
child. 
Similarly nodes at level 
 $n+1 \leq i \leq 2n$ are labelled with 
  $\zeroy_i$ or $\oney_i$. Each node of level $n$ has both $\zeroy_{1}$ and an~$\oney_{1}$ child,
   each node of level $n+1 \leq i \leq 2n-1$ has both a $\zeroy_{i+1}$ and an~$\oney_{i+1}$
child.

Finally, each node of level $2n$ has a single child
labelled with one of the $T_i$.  Such trees represent a tiling of the  grid. It is easy
to write an $\fotwo[\vchild]$ formula describing such trees, and also requiring that
the horizontal and vertical constraints are satisfied.
\end{proof}

\section*{Completion of the proof of Theorem \ref{fodescup}}

Recall the statement:

\emph{The satisfiability problem for $\fotwo[\vdesc]$ over ranked schemas
is in $\nexptime$, and is thus $\nexp$-complete.}

We first prove the key lemma, Lemma
\ref{lem:collapsevdesc}. Recall that in this lemma, we replace
node $n$ by node $n'$, where $n$ and $n'$  are not in the protected witness set $W$ and
share the same $\varphi$-type, the same set of ancestor $\varphi$-types,
and the same set of selected descendant $\varphi$-types. The lemma then claims:

For all $m \in T_1$ the 
one-variable subformulas of
$\varphi$ satisfied by $m$ in $t$ are the same as those
satisfied by $f(m)$ in $t'$.
Moreover, for every node $m'$ in $T_2$,
the one-variable subformulas of
$\varphi$ satisfied by $m'$ in $t'$ are the same as those
satisfied by $f^{-1}(m')$ in $t$.

We prove both parts of the lemma
by simultaneous
induction on the structure of the formula, where the case of atomic propositions and the case of boolean combinations are trivial. The only interesting case is for subformulas $\rho = \exists y \beta(x,y)$. 

We first note the following key property of the
witness set $W$: 
For nodes $m$ of $t$, if there is a $w$ incomparable to $m$ such that
$t,m, w \models \beta(x,y)$, then there is such a $w$ in $W$.

To prove this,
fix $m$ and $w$ such that the hypothesis holds.  Let $w_\tau$ be the basic
global witness for the $\varphi$-type of $w$. If $w_\tau$ is incomparable to $m$, then
$w_\tau$ has the required property. If $w_\tau$ is a descendant of $m$, then
we would have thrown in the necessary $w$ into $W$ as an
incomparable global witness for $m$.
If $w_\tau$ is an ancestor of $m$ or equal to $m$, we would have thrown in the necessary
$w$ into $W$ as an incomparable global witness for $w_\tau$.

We begin by comparing formulas $\rho$ between a 
node $m$ of the old tree (i.e. $m \in T_1$) and the same node considered
in the new tree. 
We first consider the case where $\rho$ 
holds at $m$ in $t$, and show that $\rho$ remains true at its image $f(m)$
in $t'$.
\begin{compactitem}
\item If the  witness of the truth of
$\psi$ was $m$ or its ancestor, then these
are also in $T_1$, and thus are preserved under the mapping,
so by induction they (i.e. their image under
$f$) can serve as a witness in $t'$.
\item Suppose there is a witness $w$ that is
 neither $m$, nor an ancestor of $m$, nor a~descendant of $m$. 
By the key property of $W$, there is a witness $w'$ in the set $W$ that is also
incomparable to $m$, and has  the same $\varphi$-type as $w$. This can be used as a witness.
\item The last possibility is that some
of the witnesses are  descendants. If at least one of these is
not in $\subtree(t,n)$,
then it is  preserved and can be used as a witness.
 Otherwise, the witness must be in $\subtree(t,n)$.
If $n$ itself was a witness, then since it was replaced by an
$n'$ such that $\tp_\varphi(n') = \tp_\varphi(n')$ we can use the copy of $n'$
as a witness, by induction. 
On the other hand, if there was a descendant of $n$ which was a witness, then there would have been a 
witness $w''$ such that $\tp_\varphi(w'') \in \sdesctype(n)$. Since $\sdesctype(n)=\sdesctype(n')$ 
we would be able to find a~witness with the appropriate $\varphi$-type in a copy of the subtree rooted at $n'$.
\end{compactitem}

We now consider the case where $\rho$ holds at a node
$m'$  that is the image of a node $m$ in $\subtree(t,n')$ under
the overwriting operation,
and aim to show that $\rho$ holds at $m$. Note that once this is shown,
the other direction of the if and only if for nodes in $T_1$ follows
easily by induction.
So fix such $m'$ and $m$.
The only non-trivial case is for $m'$ being a copy of $n'$,
with the witness being its ancestor. Here we can use as a witness
one of the ancestors of $n$, because $\anctype(n) = \anctype(n')$.

This completes the proof of Lemma \ref{lem:collapsevdesc}.
The argument
for Theorem \ref{fodescup} for UAR trees proceeds by repeatedly updating while such nodes
are available. The process terminates, as argued in the body of the paper.

The extension for ranked schemas follows along the same lines, but in order to collapse nodes $n$ and $n'$,
we require in addition that the tree automaton $A$ reaches the same state at $n$ and $n'$.

\section*{Proof of Theorem  \ref{rankednochilduar}}
Recall the statement:

\emph{There are  $\fotwo[\vdesc]$ formulas $\varphi_n$ of size $O(n)$ that are satisfiable over
UAR binary trees, where the minimum depth of satisfying binary UAR trees grows as $2^n$.}
\begin{proof}
We let $\Sigma_n$ consist of  $\{ b, s\} \cup \{a_i: i \leq n\}$.

We consider trees in which:
\begin{compactitem}
\item  the root is labelled $b$
\item nodes labelled  $b$ are always comparable via descendant
\item nodes labelled  $s$ are never comparable via descendant
\item  every ancestor of a $b$-labelled node is labelled $b$
\item every ancestor of an $s$-labelled node is labelled $b$
\item descendants of $s$-labelled nodes  can be labelled with any of the $a_i$ (but not with $b$)
\end{compactitem}

These conditions can easily be enforced by an $\fotwo[\vdesc]$ formula.

In such trees the $b$-labelled nodes must go down a single branch, with $s$-labelled nodes
splitting off on a separate branch. See Figure \ref{fig:bigtree}.
We  now let
$\psi_i: i \leq n$ be the formula that holds at an $s$-labelled node
if it has a descendant $a_i$.
Note that any combination of the $\psi_i$ are consistent, and the set of $\psi_i$ that
hold of an~$s$-labelled node can thus be considered an $n$-bit address for the $s$-node.
We can write a formula $\varphi_n$ that asserts that
\begin{inparaenum}  \item the constraint on the shape of the tree above holds
\item   there is an
$s$-node with address $0^n$
\item  for every $s$-labelled node with address $a$ not
equal to $1^n$, there is an $s$-labelled
 node whose bit address is the successor of $a$. 
\end{inparaenum}
A binary tree satisfying $\phi_n$ must have exponential depth. See Figure \ref{fig:bigtree} for
an example.
\end{proof}

\tikzset{
gluon/.style={draw=black, line width = 0.4mm}
}
\begin{figure}[h!]
\begin{center}
\scalebox{0.95}{
\includegraphics{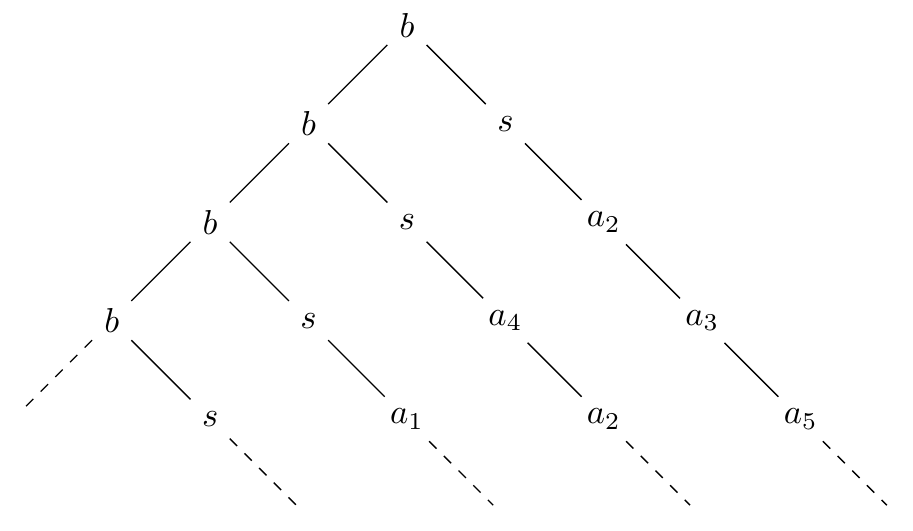}
}
\end{center}
\caption{An example model of exponential depth for $\fotwo[\vdesc]$ formula in ranked case}
\label{fig:bigtree}
\end{figure}

\section*{Details for the proof of Theorem \ref{fodup}}


Recall the statement:

\emph{The satisfiability problem for $\fotwo[\vnodesc]$, and the satisfiability problem with respect
to a rank schema,
are in
$\nexptime$, and hence are $\nexptime$-complete.}

We give the details for satisfiability first.
By Lemma \ref{depthlemma}
 we know that 
 a $\fotwo[\vnodesc]$ formula $\varphi$ which is satisfied over trees is satisfied by a tree $t$ of 
depth at most exponential in $\varphi$. We also can bound the outdegree of nodes by an exponential.

For each $\varphi$-type that is satisfied in $t$,
 choose a satisfier and include 
it along with all its ancestors in a set $W$: these are the basic witnesses.
Then throw in all children of basic witnesses.

Thus the size of $W$ is at most exponential. 
Now we transform $t$ to another tree $t'$ such that $t' \models \varphi$ and $t'$ has only exponentially many different subtrees.

Recall that our update procedure looks for
if there are nodes $n, n'$ in $t$  such that
\begin{inparaenum}
\item $n,n' \not \in W$ 
\item $\subtree(t,n) \prec \subtree(t,n')$ is not isomorphic to the~subtree rooted at $n'$
\item $\tp_{\varphi}(n)=\tp_{\varphi}(n')$  and  $\tp_{\varphi}(\text{parent}(n))=\tp_{\varphi}(\text{parent}(n'))$
\end{inparaenum}
then let $t'=\up(t)$ be obtained
by choosing such $n$ and $n'$
and  applying the~collapse operation that replaces the subtree of $n$ by that of $n'$.

Let $T_1$ be
the nodes that were not in $\subtree(t,n)$,
and for any node $m \in T_1$ let $f(m)$
denote the same node viewed in $t'$.
Let
$T_2$ denote the nodes in $t'$ that are images
of a node in $\subtree(t,n')$
 under the replacement. For each $m \in T_2$,
let $f^{-1}(m)$ denote the node in $t$ from which  it derives.

We claim the following:
\begin{lemma}
For all $m \in T_1$ the $\varphi$-type of $m$ in $t$
is the same as the $\varphi$-type of
 $f(m)$ in $t'$.
Moreover, for every node $m'$ in $T_2$, 
the $\varphi$-type
of $m'$ in $t'$ is the same as the $\varphi$-type of
 $f^{-1}(m)$ in $t$.
\end{lemma}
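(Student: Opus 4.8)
The plan is to prove both assertions simultaneously, by structural induction on the subformula, exactly along the lines of Lemma~\ref{lem:collapsevdesc} for the $\vdesc$ case. As there, I would first put every existential subformula of $\varphi$ into a normal form in which each existential $\exists y\,\beta(x,y)$ forces its witness $y$ to lie on a single axis relative to $x$ --- splitting $\beta$ into an axis-specific disjunction over the relations available in $\vnodesc$ ($y$ equal to, the parent of, a child of, or a left/right sibling of $x$) --- so that a $\varphi$-type records not merely which existentials hold but which axis their witnesses use. The atomic and boolean cases are immediate, and for the existential case I would argue each direction of the biconditional by locating a witness on the appropriate side.

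For a node $m \in T_1$ (outside $\subtree(t,n)$), any witness that also lies in $T_1$ transfers to $t'$ by the induction hypothesis, since the surgery changes no parent/child/sibling relation among surviving nodes other than at $n$ itself. A witness lying inside the deleted subtree $\subtree(t,n)$ must in fact be $n$, reached as a child or a sibling of $m$: because $\vnodesc$ has no descendant relation, nothing strictly below $n$ is reachable in one step from outside $\subtree(t,n)$. Such a witness is replaced by the new root $\hat n$, which sits in $n$'s position and, by the lower-complexity instance of the $T_2$ claim together with $\tp_\varphi(n)=\tp_\varphi(n')$, satisfies the same one-variable subformulas as $n$; hence it witnesses the existential at $f(m)$. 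For a non-root node $m' \in T_2$ the argument is cleanest of all: all of $m'$'s neighbours along every axis are copies of the corresponding neighbours of $f^{-1}(m')$ inside $\subtree(t,n')$, so every witness transfers verbatim by the induction hypothesis and the fact that the copy is an isomorphism on $\subtree(t,n')$.

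The real work, and the main obstacle, is the root $\hat n$ (the copy of $n'$), whose axis-neighbourhoods are a hybrid: its descendants come from $n'$'s subtree while its parent and siblings come from $n$'s position. Child-witnessed existentials of $n'$ transfer because $\hat n$'s subtree is a faithful copy of $n'$'s. Parent-witnessed existentials transfer because $\hat n$'s parent is $n$'s parent, whose $\varphi$-type equals that of $n'$'s parent --- this is precisely the reason the update imposes the condition $\tp_\varphi(\mathrm{parent}(n))=\tp_\varphi(\mathrm{parent}(n'))$. The delicate case is sibling-witnessed existentials: $\hat n$ inherits $n$'s siblings rather than $n'$'s, so a sibling witness of $n'$ need not survive in the right place.

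I would resolve this last case using the axis-normal form together with $\tp_\varphi(n)=\tp_\varphi(n')$: a normalized sibling-existential that holds at $n'$ also holds at $n$, and being axis-specific it is there witnessed by an actual sibling of $n$ of the required one-variable type. That sibling lies outside $\subtree(t,n)$, is therefore untouched by the surgery, and remains a sibling of $\hat n$ in $t'$, where it witnesses the existential; the converse inclusion is symmetric. This is exactly why no separate bookkeeping of sibling type-sets is needed in the update conditions: the shared position of $\hat n$ and $n$ lets node-type equality stand in for it, while the parent condition covers the one axis whose type genuinely changes. Finally, I would note that the protected set $W$ (the basic witnesses, their ancestors, and their children, none ever chosen as $n$ or $n'$) plays its role not in this single step but across the iteration of $\up$: it guarantees that repeated collapses never delete the last realizer of a $\varphi$-type or the immediate witnesses it needs, so that the process terminates with $\varphi$ still true and only exponentially many distinct subtrees, as required for Theorem~\ref{fodup}. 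The step I expect to be trickiest to get exactly right is the simultaneous induction at the root, where the $T_1$ and $T_2$ hypotheses at lower complexity must be invoked together to certify that $\hat n$ realizes $\tp_\varphi(n')$.
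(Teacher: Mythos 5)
Your overall strategy---simultaneous structural induction on subformulas, with the only interesting case being $\exists y\,\beta(x,y)$, and with the parent-type condition doing the work at the root $\hat n$ of the copied subtree---matches the paper's proof, and your explicit treatment of sibling-witnessed existentials at $\hat n$ (reducing them to $\tp_\varphi(n)=\tp_\varphi(n')$ and the fact that $\hat n$ inherits $n$'s sibling position) is if anything more careful than what the paper writes down. However, there is one genuine gap, and it is the central one.

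You claim that a witness for $\exists y\,\beta(x,y)$ at $m\in T_1$ that lies inside the deleted subtree $\subtree(t,n)$ ``must in fact be $n$, reached as a child or a sibling of $m$, because $\vnodesc$ has no descendant relation.'' This conflates axis-reachability with quantification: the existential quantifier in $\fotwo$ ranges over \emph{all} nodes of the tree, not just axis-neighbours of $m$. The formula $\beta(x,y)$ may be satisfied by a pair in which $y$ stands in \emph{no} atomic relation to $x$ (e.g.\ $\beta$ asserts the negation of every axis atom together with some label of $y$), and such a witness can sit arbitrarily deep inside $\subtree(t,n)$, where it is destroyed by the update. Neither $\tp_\varphi(n)=\tp_\varphi(n')$ nor the parent-type condition says anything about which $\varphi$-types are realized strictly below $n$ versus below $n'$, so your argument has no way to replace such a witness. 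This is exactly the case for which the paper deploys the protected set $W$ \emph{within the single update step}: since $W$ contains a realizer of every $\varphi$-type together with all of its ancestors (so no $W$-node can lie inside $\subtree(t,n)$ when $n\notin W$), one can replace the lost unrelated witness $w$ by a surviving $w'\in W$ of the same $\varphi$-type, checking only that $w'$ is not a parent or child of $m$ (the children of basic witnesses are added to $W$ precisely to make this check go through). Your closing remark that $W$ ``plays its role not in this single step but across the iteration'' is therefore a misreading of its purpose; without invoking $W$ inside the inductive case for incomparable witnesses, the induction for $T_1$ (and likewise for non-root nodes of $T_2$) does not close.
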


Applying the lemma above to the root of
$t$, which is necessarily in $T_1$, it  follows that the truth
of the sentence $\varphi$ is preserved by this operation.

\begin{proof}
We prove both parts of the lemma
by simultaneous
induction on the structure of the formula, where the case of atomic propositions and the case of boolean combinations are trivial. The only interesting case is for subformulas $\psi = \exists y \beta(x,y)$.

We begin by considering formula $\psi$ at node $m \in T_1$.
We first consider the case where $\varphi$ holds at $m$.
\begin{compactitem} 
\item If the  witness of the truth of
$\psi$ was $m$ or its parent, then these
are also in $T_1$, and thus are preserved under the mapping,
so by induction they (i.e. their image under
$f$) can served as a witness in $t'$.
\item Similarly, if the witness was a sibling of $m$, then it can serve
as a witness in $t'$, since the collapse map does not impact the sibling relations.
\item If all witnesses are neither a parent nor a child of $m$,
then take one such witness $w$
and  an element  $w'$ in $W$ that realizes the same $\varphi$-type
 as $w$.  $w'$ must be neither  a parent or a child of $m$ (since if it were a parent, $m$ would have
been a child witness, and hence protected).
Thus $w'$ can be used as a~witness.
\item The last possibility is that some
of the witnesses are  children. If at least one of these is
not in $\subtree(t,n)$,
then it is  preserved and can be used as a~witness.
 Otherwise, 
$n$ itself must be a witness. It was replaced by an
$n'$ such that $\tp_\varphi(n) = \tp_\varphi(n')$ so the copy of $n'$
can be used  as a witness, by induction.
\end{compactitem}

We now consider the case where $\psi$ holds at a node $m' \in T_2$
that is the image of a node $m \in T$,
and aim to show $\psi$ holds at $m$. 
The only non-trivial case is for $m'$ being the image of $n'$,
with the witness being its parent. Here we can use as a~witness 
the parent of $n$, because $\tp_\varphi$ of the parent of $n$ is
the same as $\tp_\varphi$ of the parent of $n'$.
\end{proof}

We now iterate the procedure $t_{i+1}~ := ~ \up(t_i)$, until
no more updates are possible. Since $t_{i+1} \prec t_i$,
the process must terminate.
The resulting tree will contain only exponentially many different subtrees. 
We can thus represent it as a~DAG, with one node for each subtree.

Thus we have shown that any satisfiable formula has an exponential-size
DAG that unfolds into a model of the formula. Given
such a DAG, we can check whether an $\fotwo$  formula holds
in polynomial time in the size of the DAG.
Thus we have
a $\nexptime$ algorithm for checking satisfiability.

The modification in the presence of a ranked schema is straightforward -- again
we show that there is an exponential-sized DAG.
Given a bottom-up tree-automaton, the modification
procedure $\up$  only
replaces $n$ by $n'$ if, in addition to the criteria above,
their subtrees reach the same state of $A$.
Clearly, the state of $A$ is also preserved by this replacement.
This  completes
the proof of Theorem \ref{fodup}.

\end{document}